% 12pt needed for journal
\documentclass[12pt]{article}

\usepackage{amsmath, amssymb, amsthm, amsfonts}
\usepackage{titlesec}
\usepackage{hyperref}
\usepackage{chngcntr}
\usepackage{graphicx}
\usepackage{dsfont}
\usepackage{bm}
\usepackage{bbm}
\usepackage[mathscr]{euscript}
\usepackage{enumitem}
\usepackage{colortbl}
\usepackage{tabularx}
\usepackage{booktabs}
\usepackage{soul}
\usepackage{tikz}
\usetikzlibrary{positioning, arrows, bending}
\usetikzlibrary{arrows.meta}
\usepackage{pgfplots}
\pgfplotsset{compat=1.18}
\usepackage{verbatim}
\usepackage{graphbox}
\usepackage[title]{appendix}
\usepackage[english]{babel}
\usepackage[
backend=biber,
style=apa,
natbib
]{biblatex}
\addbibresource{bibliography.bib}
\usepackage{csquotes}
\usepackage{algorithm2e}[ruled]

% \usepackage[a4paper, total={6.5in, 10in}]{geometry}
% at least 1.25in margins for journal
\usepackage[a4paper, margin=1.25in]{geometry}
\usepackage[nodisplayskipstretch]{setspace}

\usepackage{caption}
\usepackage{subcaption}
\usepackage{float}
\usepackage{array}
% Define new column types
\newcolumntype{L}[1]{>{\raggedright\arraybackslash\small}p{#1}}

\DeclareMathOperator{\E}{\mathbb{E}}

\newcommand{\calC}{\mathcal{C}}

\newcommand{\calG}{\mathcal{G}}

\newcommand{\calP}{\mathcal{P}}
\newcommand{\calQ}{\mathcal{Q}}

%
% bold lowercase and capital letters for vectors and matrices
\newcommand{\bA}{\bm A}

\newcommand{\bB}{\bm B}

\newcommand{\bD}{\bm D}

\newcommand{\be}{\bm e}

\newcommand{\bI}{\bm I}

\newcommand{\bL}{\bm L}

\newcommand{\bO}{\bm O}

\newcommand{\bQ}{\bm Q}

\newcommand{\bR}{\bm R}

\newcommand{\bS}{\bm S}

\newcommand{\bT}{\bm T}

\newcommand{\bX}{\bm X}
\newcommand{\bx}{\bm x}

\newcommand{\by}{\bm y}

%
% bold Greek lowercase letters for vectors

\newcommand{\bgamma}{\bm \gamma}

\newcommand{\bepsilon}{\bm \epsilon}
\newcommand{\bvarepsilon}{\bm \varepsilon}

\newcommand{\btheta}{\bm \theta}
\newcommand{\bvartheta}{\bm \vartheta}

%
% bold Greek capital letters for matrices

\newcommand{\bPhi}{\bm \varPhi}
\newcommand{\bPsi}{\bm \varPsi}
\newcommand{\bOmega}{\bm \varOmega}

\newcommand{\R}{\mathbb{R}}

\theoremstyle{definition}
\newtheorem{assumption}{Assumption}
\newtheorem{theorem}{Theorem}

\newtheorem{lemma}{Lemma}

\newtheorem{definition}{Definition}

\newtheorem{example}{Example}

\newcommand{\continuation}{??}
\newenvironment{continueexample}[1]
 {\renewcommand{\continuation}{\ref{#1}}\excont[continued]}
 {\endexcont}

\definecolor{UMdblue}{RGB}{0,28,61}
\definecolor{UMlblue}{RGB}{0,162,219}
\definecolor{UMorangered}{RGB}{232,78,16}
\definecolor{UMorange}{RGB}{243,148,37} % BSc
\definecolor{UMred}{RGB}{174,11,18} % MSc
\definecolor{UMorangesyllabus}{RGB}{215,92,45}

\definecolor{red1}{RGB}{255,0,0}
\definecolor{red2}{RGB}{250,40,15}
\definecolor{red3}{RGB}{245,0,35}
\definecolor{red4}{RGB}{235,30,30}

\definecolor{blue1}{RGB}{0,0,255}
\definecolor{blue2}{RGB}{40,15,250}
\definecolor{blue3}{RGB}{0,35,245}
\definecolor{blue4}{RGB}{30,30,235}

\definecolor{green1}{RGB}{0,150,0}
\definecolor{green2}{RGB}{15,145,40}
\definecolor{green3}{RGB}{35,145,0}
\definecolor{green4}{RGB}{30,135,30}

\definecolor{yellow1}{RGB}{180,180,20}
\definecolor{yellow2}{RGB}{160,140,30}

\definecolor{ada}{RGB}{247,252,13}
\definecolor{nonada}{RGB}{54,38,134}

%-------------------------------------------------------------------------------
% Colour blind friendly colours
%-------------------------------------------------------------------------------
\definecolor{mycolor1}{RGB}{0, 114, 178} % Blue
\definecolor{mycolor2}{RGB}{230, 159, 0} % Orange
\definecolor{mycolor3}{RGB}{0, 158, 115} % Green
\definecolor{mycolor4}{RGB}{204, 121, 167} % Reddish Purple
\definecolor{mycolor5}{RGB}{86, 180, 233} % Sky Blue
\definecolor{mycolor6}{RGB}{213, 94, 0} % Vermillion
\definecolor{mycolor7}{RGB}{240, 228, 66} % Yellow

%-------------------------------------------------------------------------------
% Editing commands
%-------------------------------------------------------------------------------

% \newcommand{\changeInes}[1]{{\textcolor{purple}{#1}}}
% \newcommand{\commentInes}[1]{\textcolor{purple}{Ines: #1}}
% \newcommand{\changeStephan}[1]{\textcolor{blue}{#1}}
% \newcommand{\stStephan}[1]{\textcolor{blue}{\st{#1}}}
% \newcommand{\commentStephan}[1]{\textcolor{blue}{Stephan: #1}}
% \newcommand{\commentLen}[1]{\textcolor{teal}{Len: #1}}
% \newcommand{\changeLen}[1]{\textcolor{teal}{#1}}
% \newcommand{\stLen}[1]{\textcolor{teal}{\st{#1}}}
% \newcommand{\changeEnrico}[1]{\textcolor{red}{#1}}
% \newcommand{\commentEnrico}[1]{\textcolor{red}{Enrico: #1}}
% \newcommand{\stEnrico}[1]{\textcolor{red}{\st{#1}}}

% \newcommand{\todo}[1]{\textcolor{red}{TODO #1}}

%-------------------------------------------------------------------------------
% MAIN DOCUMENT
%-------------------------------------------------------------------------------

\begin{document}

\begin{center}
{
\Large
\textbf{Transmission Channel Analysis in \\Dynamic Models}
}
\vspace{10pt}

% \today
May 2, 2025
\vspace{10pt}

{\textbf{Enrico Wegner$^{2,*}$, Lenard Lieb$^{1}$, Stephan Smeekes$^{2}$, and Ines Wilms$^{2}$}} \\
\vspace{10pt}
{\itshape\small $^{1}$Department of Macro, International \& Labour Economics, School of Business and Economics Maastricht} \\
{\itshape\small $^{2}$Department of Quantitative Economics, School of Business and Economics Maastricht} \\
{\itshape\small $^*$Corresponding author: \href{mailto:e.wegner@maastrichtuniversity.nl}{e.wegner@maastrichtuniversity.nl}}

\end{center}

\begin{abstract}
We propose a framework for analysing transmission channels in a large class of dynamic models. We formulate our approach both using graph theory and potential outcomes, which we show to be equivalent. Our method, labelled Transmission Channel Analysis (TCA), allows for the decomposition of total effects captured by impulse response functions into the effects flowing through transmission channels, thereby providing a quantitative assessment of the strength of various well-defined channels. We establish that this requires no additional identification assumptions beyond the identification of the structural shock whose effects the researcher wants to decompose. Additionally, we prove that impulse response functions are sufficient statistics for the computation of transmission effects. We demonstrate the empirical relevance of TCA for policy evaluation by decomposing the effects of policy shocks arising from a variety of popular macroeconomic models.
\vspace{10pt}

\noindent
\textsc{JEL Codes:} C32, C54, E52, E60 

\noindent
\textsc{Keywords:} transmission channel, policy evaluation, impulse response function, structural vector autoregression, DSGE, macroeconomic shocks
\end{abstract}

\onehalfspacing

% introduction
\section{Introduction}

Impulse response functions (IRFs) measure the total dynamic causal effect of macroeconomic shocks on variables of interest, such as inflation and unemployment. However, the mechanisms -- or transmission channels -- through which these shocks influence the variables of interest are unexplored in IRF analysis. We propose a formal framework for quantitatively analysing transmission channels of structural shocks that is applicable to a large family of dynamic models. Our framework allows to dynamically decompose the effects of structural shocks into a set of disjoint dynamic partial effects, where the choice of decomposition -- the structural shock's transmission channel -- is determined by the research question.

Our first contribution is the development of a formal framework that allows the study of precisely defined transmission channels. While there is an extensive literature in macroeconomics and time series econometrics on methodology for studying the dynamic effects of interventions of shocks \citep[see e.g.][]{RameyMacroeconomicShocks2016,kilianStructuralVectorAutoregressive2017,nakamuraIdentification2018}, the literature on formal analysis of transmission channels is relatively sparse. In macroeconomics, transmission channels are typically loosely defined as a collection of economic mechanisms that indirectly affect key macroeconomic outcomes through intermediate variables. For policymakers, however, it is imperative to quantitatively understand \emph{how} policy effects economic outcomes, requiring precisely defined transmission channels. Yet, a precise definition and a coherent quantitative framework for analysing transmission channels is missing. Instead, many studies analyse transmission channels qualitatively. We fill this gap by providing a formal framework, including a precise definition, that allows for a quantitative analysis of transmission channels.

Our second contribution is to formulate transmission channel analysis (TCA henceforth) in terms of impulse response analysis. We formulate TCA as a decomposition of total effects, obtained through impulse response analysis, and establish that the calculation of impulse responses between different variables in the system is sufficient for the analysis of all transmission channels. Importantly, we also prove that only the structural identification of the initial shock of interest driving the impulse responses is required, and reduced-form impulse responses otherwise suffice. Consequently, TCA can be performed under the same conditions as traditional impulse response analysis, only requiring a structural identification scheme for a single shock plus a specified dynamic model. We prove this equivalence for a large class of dynamic models, including structural vector autoregressive (SVAR) and linearised dynamic stochastic general equilibrium (DSGE) models.

Our third contribution is of a more technical nature. In order to establish the results discussed above, we develop a graphical approach to study transmission channels. Specifically, we show how to connect the impulse-response-space representation of the (structural) model's equilibrium dynamics to a Directed Acyclic Graph (DAG). This involves a re-parameterisation of the dynamic equilibrium representation, which allows us to uniquely define a transmission channel as a collection of paths along the graph, connecting causal effects from one variable (resp.~shock) to another variable and over time. To complement the graphical framework, we also develop an alternative representation of transmission channels in terms of potential outcomes. We prove that both representations are equivalent, implying that transmission channels can be defined either as paths through a graph or as a specific potential outcome. Apart from this specific result that is used throughout our theoretical analysis, the graphical and potential outcomes frameworks are of interest in themselves for the analysis of dynamic causal effects beyond our specific transmission questions.

Finally, we contribute novel insights into the functioning of monetary and fiscal policy by applying TCA to three distinct, well-established empirical macroeconomic models. First, we investigate the differences in monetary policy transmission when identified through either the shock series proposed by \citet{romerNewMeasureMonetary2004} or \citet{gertlerMonetaryPolicySurprises2015}. We find that the former appears to mostly capture instantaneously implemented changes in the policy instrument, while the latter rather picks up other dimensions of monetary policy, such as forward guidance effects \citep[cf.][]{mckayWhatCanTime2023}. 
Our findings thus provide quantitative support for the qualitative claims made by \citet{gertlerMonetaryPolicySurprises2015}.
Second, we shed light on anticipation effects of fiscal policy. Using the military spending news series of \citet{rameyGovernmentSpendingMultipliers2018}, we specify transmission channels to distinguish between anticipation effects and implementation effects. This is achieved by defining an anticipation channel as the effect of the news shock not driven by the response of government military spending. Our findings agree with the conclusions of \citet{Ramey2011}, who qualitatively assesses the importance of anticipation effects from the shape and timing of impulse responses to various macroeconomic variables.
Third, we study transmission channels in the DSGE model of \citet{smetsShocksFrictionsUS2007}. Here we decompose the total effect of monetary policy on inflation into a wage channel and a demand channel, respectively capuring the effect through wages and aggregate demand.

A related stream of literature uses impulse response analysis for studying counterfactual questions \citep[see e.g.][]{mckayWhatCanTime2023, simsDOESMONETARYPOLICY2006, kilianDoesFedRespond2011, caravelloEvaluatingPolicyCounterfactuals2024}. While transmission channel and counterfactual analysis superficially appear to be related, it is important to highlight their fundamental difference. Investigating counterfactual questions requires the researcher to specify a different model, describing a different equilibrium, where specific behavioural changes result in the counterfactual scenario. TCA, however, focuses on decomposing impulse responses within a given equilibrium rather then understanding the difference between responses across different equilibria. Therefore, TCA evaluates the relative importance of specific variables in transmitting the total effect of a particular identified economic shock, whereas counterfactual analysis investigates total causal effects that this shock would have, had the policy response been different. As such, TCA complements counterfactual analysis; indeed, TCA can be used within a counterfactual study to investigate the importance of specific transmission channels across different equilibria.

TCA also shares similarities with mediation analysis \citep[see, for example,][]{imaiIdentificationInferenceSensitivity2010, chanEfficientNonparametricEstimation2016, danielCausalMediationAnalysis2015, pearlCausalMediationFormula2012, hayesIntroductionMediationModeration2018} as it investigates causal effects going through intermediate variables. In contrast to TCA, mediation analysis focuses on causal models that do not exhibit any feedback mechanisms, limiting its usefulness for studying transmission channels in dynamic general equilibrium macroeconomics.

The use of DAGs for transmission channel analysis is related to the graphical causal analysis literature in computer science \citep[see][]{pearlCausalityModelsReasoning2009}. Contrary to that literature, we do not require that the causal system - the set of behavioural equations - can be cast in a DAG. Instead, we show that for a large class of linear macroeconomic models, a DAG representation of the equilibrium equations, rather than of the behavioural equations, always exists. This DAG representation, obtained using a QL-decomposition of the contemporaneous matrix, is related to the orthogonal reduced-form parameterisation of \citet{ariasInferenceBasedStructural2018}. Importantly, this also covers non-recursive models which exhibit contemporaneous feedback effects.

Our work also relates to the recent literature which uses potential outcomes in macroeconomics to gain more insights into total causal effects. \citet{asheshrambachanWhenCommonTime2021, kolesarDynamicCausalEffects2024} investigate the non-parametric meaning of common macroeconomic causal estimands. \citet{cloyneStateDependentLocalProjections2023} use potential outcomes to investigate state dependent causal effects. Lastly, \citet{angristSemiparametricEstimatesMonetary2018} estimate monetary policy effects semi-parametrically. Contrary to the aforementioned, our focus is on the decomposition of total effects rather than on total effects themselves.

Finally, our work is related to other decomposition methods used in the literature, such as subspace Granger causality, forecast error variance decompositions (FEVD) and historical decompositions. Contrary to TCA, subspace Granger causality focuses on predictability and not on the effect of shocks \citep{dufourShortRunLong1998}. The aim of FEVD and historical decompositions also differs from the aim of TCA. FEVD is used to decompose the forecast error variance into individual contributions of each structural shock. Historical decompositions provide information about which structural shocks drove specific historical developments. Neither, therefore, focuses on explaining the transmission of the shock through the economy; their focus is essentially still on total effects. 

The paper is organised as follows. Section \ref{sec:sec2-illustrative-example} illustrates the main ideas behind TCA using a simple three-variable example. The general TCA framework is developed in Section \ref{sec:sec3-general-framework}. Section \ref{sec:sec4} contains our main theoretical results, while Section \ref{sec:sec5-applications} applies TCA empirically to three different macroeconomic models. Section \ref{sec:conclusion} concludes, while supplementary results are contained in the appendix.

% Explanation of TCA in a small DSGE model
\section{TCA: An Illustrative Example}
\label{sec:sec2-illustrative-example}
We present the main idea and intuition behind TCA using a textbook version of the three-equation New Keynesian model \citep{galiMonetaryPolicyInflation2015} including the output gap $\mathrm{x}_t$, inflation $\mathrm{\pi}_t$ and nominal interest rates $\mathrm{i}_t$. The model is given by
\begin{equation}
\begin{split}
\text{IS:  } \quad &\mathrm{x}_t = \mathbb{E}[x_{t+1}] + \alpha_1(\btheta)[\mathrm{i}_t - \mathbb{E}[\pi_{t+1}]] + \varepsilon^d_t \\ 
\text{PC:  } \quad &\mathrm{\pi}_t = \mathbb{E}[\pi_{t+1}] + \alpha_2(\btheta)\mathrm{x}_t + \varepsilon^s_t \\ 
\text{MR:  } \quad &\mathrm{i}_t = \alpha_3\mathrm{x}_t + \alpha_4\mathrm{\pi}_t + \varepsilon^i_t,
\end{split}
\label{eq:sec2-dsge}
\end{equation}
and consists of an IS equation and a Phillips curve (PC) equation, together summarising the equilibrium in the goods market, as well as a (Taylor-type) rule specifying the central bank's interest rate policy (MR). The coefficients of the IS and PC equations, $\alpha_1$ and $\alpha_2$, depend on deep structural parameters $\btheta$ that specify the behaviour of firms and consumers in the economy. The coefficients of the interest rate rule, $\alpha_3$ and $\alpha_4$, have a behavioural interpretation since they explicitly specify the policy of the central bank. The set of deep structural parameters in the model is thus $\bvartheta = \{\btheta, \alpha_3, \alpha_4\}$. The demand, supply and interest rate (structural) shocks are respectively given by $\varepsilon^d_t$, $\varepsilon^s_t$ and $\varepsilon^i_t$.

For ease of exposition, we assume the structural shocks 
to be white noise and mutually uncorrelated. The equilibrium in \eqref{eq:sec2-dsge} exhibits the static representation 
\begin{equation}
\underbrace{
\begin{bmatrix}
1 & 0 & -\alpha_1(\btheta) \\
-\alpha_2(\btheta) & 1 & 0 \\
-\alpha_3 & -\alpha_4 & 1
\end{bmatrix}}_{\bA}
\underbrace{
\begin{bmatrix}\mathrm{x}_t \\ \mathrm{\pi}_t \\ \mathrm{i}_t\end{bmatrix}
}_{\by_t}= 
\underbrace{\begin{bmatrix}\varepsilon^d_t \\ \varepsilon^s_t \\ \varepsilon^i_t\end{bmatrix}}_{\bvarepsilon_t}.
\label{eq:sec2-dsge-static}
\end{equation}

A core task of macroeconomists is to assess dynamic causal effects of structural shocks on relevant, endogenously determined economic variables. Generally the focus is on \textit{total} effects, i.e.\ total changes in equilibrium quantities triggered by an economically interpretable exogenous event. Since model \eqref{eq:sec2-dsge-static} is static, total effects of structural shocks die out after a single period, and impact effects of the three structural shocks are summarised by the columns of the impulse response matrix 
\begin{equation*}
\bPhi(\bvartheta) =  \frac{1}{\eta(\bvartheta)}
\begin{bmatrix}
 1 & \alpha_1(\btheta) \alpha_4 & \alpha_1(\btheta) \\
 \alpha_2(\btheta) & 1-\alpha_1(\btheta)
   \alpha_3 & \alpha_1(\btheta)
   \alpha_2(\btheta) \\
 \alpha_2(\btheta) \alpha_4+\alpha_3 &
   \alpha_4 & 1 \\
\end{bmatrix},
\end{equation*}
where $\eta(\bvartheta) = -\alpha_1(\btheta) \alpha_2(\btheta) \alpha_4-\alpha_1(\btheta) \alpha_3+1$.

We focus on decomposing the total effect of a demand shock on interest rates. The total effect (TE), given by
\begin{equation}
    \begin{split}
        \text{\textbf{(TE)}}&\quad\bPhi(\bvartheta)_{3,1} = (\alpha_2(\btheta) \alpha_4+\alpha_3) / \eta(\bvartheta),
        \label{eq:sec2-te-general}
    \end{split}
\end{equation}
can be decomposed into two \textit{transmission channels}. The first channel, labelled the \textit{indirect transmission channel}, measures how much of the interest rate response to a demand shock can be explained by the response of inflation to a demand shock \textit{ceteris paribus}, i.e.\ holding all other endogenous and exogenous variables constant. The second channel, labelled the \textit{direct transmission channel}, is the remainder of the total effect that is not explained by the indirect channel. It measures how much of the interest rate response to a demand shock cannot be explained ceteris paribus by the response of inflation to the demand shock.%
\footnote{In the following, for brevity, we suppress explicitly mentioning the ceteris paribus condition, but it should be kept in mind that this is intended when we discuss effects.}

Throughout, we take the deep structural parameterisation $\bvartheta$ as given. This is contrary to counterfactual analysis where changes in total effects $\bPhi(\bvartheta)$ are analysed under changes in deep structural parameters $\bvartheta$, hence under different equilibria. By taking $\bvartheta$ and thus the dynamic equilibrium representation as given, TCA focuses on explaining the effects within a specific equilibrium rather then understanding effects across different equilibria. TCA is therefore, by definition, not subject to the Lucas Critique. To keep notation simple, we subsequently suppress the dependence of $\alpha_1$ and $\alpha_2$ on $\btheta$. Furthermore, we first discuss TCA under simplified equilibrium dynamics ($\alpha_1(\btheta)=0$) in Section \ref{sec:sec2-recursive} before turning to the more general equilibrium dynamics ($\alpha_1(\btheta)\neq 0$) in Section \ref{sec:sec2-nonrecursive}.

\subsection{A Recursive Model}
\label{sec:sec2-recursive}

Consider model \eqref{eq:sec2-dsge-static} with $\alpha_1=0$ which then implies that $\bA$ is lower-triangular; we label this the \textit{recursive} (R) model. Equilibrium dynamics of the output gap are fully determined by the demand shock and the demand shock's total effect on interest rates is given by $\alpha_2\alpha_4 + \alpha_3$.

Here there exist two \textit{transmission channels} that jointly explain the total effect. First, the direct transmission channel consists of the effect of the demand shock on the output gap which is directly carried forward to interest rates. Second, the indirect transmission channel consists of the effect of the demand shock on the output gap, which is carried forward to inflation and in turn to interest rates. 

TCA quantifies the effect that goes through each transmission channel, the \emph{transmission effect}. To simplify this task, we cast the analysis of transmission channels in a graphical framework by connecting the model to an associated Directed Acyclic Graph (DAG), visualised in Figure \ref{fig:graph-simplified-model} panel (R). The three key ingredients of the graph are the nodes, edges, and path coefficients. The nodes represent either the variables in the model (solid black circles) or the shocks (shaded blue squares). The edges are directed and represent the effects. Edges originating from shocks represent causal effects; their path coefficient represents the direct causal effect size of the structural shock on the destination variable. We label edges originating from variables as \emph{carrying effects}, as they carry the causal effect forward; their path coefficients quantify the direct effect of a unit increase in the origin variable (irrespective of what drove this increase) on the destination variable.

\begin{figure}
    \centering
    \begin{subfigure}[b]{0.20\textwidth}
    \centering
        \begin{tikzpicture}
        \node[circle, draw, line width=1pt, minimum size=0.8cm] (x) at (0, 4) {$\mathrm{x}$};
        \node[circle, draw, line width=1pt, minimum size=0.8cm] (pi) at (0, 2) {$\mathrm{\pi}$};
        \node[circle, draw, line width=1pt, minimum size=0.8cm] (i) at (0, 0) {$\mathrm{i}$};
        
        \draw[-latex, line width=1pt] (x) -- node[midway, left]{\footnotesize$\alpha_2$} (pi); 
        \draw[-latex, line width=1pt] (pi) -- node[midway, left]{\footnotesize$\alpha_4$} (i);
        \draw[-latex, line width=1pt] (x) edge [in=45, out=-45] node[midway, right]{\footnotesize$\alpha_3$} (i);

        \node[rectangle, draw=blue!50, fill=blue!20, line width=0.5pt, minimum size=0.3cm] (ed) at (-1.5, 4) {$\varepsilon^d$};
        \node[rectangle, draw=blue!50, fill=blue!20, line width=0.5pt, minimum size=0.3cm] (es) at (-1.5, 2) {$\varepsilon^s$};
        \node[rectangle, draw=blue!50, fill=blue!20, line width=0.5pt, minimum size=0.3cm] (ei) at (-1.5, 0) {$\varepsilon^i$};

        \draw[-latex, line width=1pt] (ed) -- node[midway, above]{\footnotesize$1$} (x); 
        \draw[-latex, line width=1pt] (es) -- node[midway, above]{\footnotesize$1$} (pi); 
        \draw[-latex, line width=1pt] (ei) -- node[midway, above]{\footnotesize$1$} (i); 
    \end{tikzpicture}
    \caption*{(R)}
    \label{fig:graph-simplified-model-all1}
    \end{subfigure}
    \hfill
    \begin{subfigure}[b]{0.20\textwidth}
    \centering
    \begin{tikzpicture}
        \node[circle, draw, line width=1pt, minimum size=0.8cm] (x) at (0, 4) {$\mathrm{x}$};
        \node[circle, draw, line width=1pt, minimum size=0.8cm] (pi) at (0, 2) {$\mathrm{\pi}$};
        \node[circle, draw, line width=1pt, minimum size=0.8cm] (i) at (0, 0) {$\mathrm{i}$};
        
        \draw[-latex, line width=1.5pt, black] (x) -- node[midway, left]{\footnotesize$\alpha_2$} (pi); 
        \draw[-latex, line width=1.5pt, black] (pi) -- node[midway, left]{\footnotesize$\alpha_4$} (i);
        \draw[-latex, line width=1.5pt, black] (x) edge [in=45, out=-45] node[midway, right]{\footnotesize$\alpha_3$} (i);

        \node[rectangle, draw=blue!50, fill=blue!20, line width=0.5pt, minimum size=0.3cm] (ed) at (-1.5, 4) {$\varepsilon^d$};
        \node[rectangle, draw=gray!20, fill=gray!05, line width=0.5pt, minimum size=0.3cm] (es) at (-1.5, 2) {\textcolor{black!30}{$\varepsilon^s$}};
        \node[rectangle, draw=gray!20, fill=gray!05, line width=0.5pt, minimum size=0.3cm] (ei) at (-1.5, 0) {\textcolor{black!30}{$\varepsilon^i$}};

        \draw[-latex, line width=1.5pt, black] (ed) -- node[midway, above]{\footnotesize$1$} (x); 
        \draw[-latex, line width=0.8pt, black!30] (es) -- node[midway, above]{\footnotesize$1$} (pi); 
        \draw[-latex, line width=0.8pt, black!30] (ei) -- node[midway, above]{\footnotesize$1$} (i); 
    \end{tikzpicture}
    \caption*{(TE)}
    \label{fig:graph-simplified-model-all2}
    \end{subfigure}
    \hfill
    \begin{subfigure}[b]{0.20\textwidth}
    \centering
    \begin{tikzpicture}
        \node[circle, draw, line width=1pt, minimum size=0.8cm] (x) at (0, 4) {$\mathrm{x}$};
        \node[circle, draw, line width=1pt, minimum size=0.8cm, black!30] (pi) at (0, 2) {$\mathrm{\pi}$};
        \node[circle, draw, line width=1pt, minimum size=0.8cm] (i) at (0, 0) {$\mathrm{i}$};
        
        \draw[-latex, line width=0.8pt, black!30] (x) -- node[midway, left]{\footnotesize$\alpha_2$} (pi); 
        \draw[-latex, line width=0.8pt, black!30] (pi) -- node[midway, left]{\footnotesize$\alpha_4$} (i);
        \draw[-latex, line width=1.5pt, black] (x) edge [in=45, out=-45] node[midway, right]{\footnotesize$\alpha_3$} (i);

        \node[rectangle, draw=gray!40, fill=gray!20, line width=0.5pt, minimum size=0.3cm] (ed) at (-1.5, 4) {$\varepsilon^d$};

        \node[rectangle, draw=blue!50, fill=blue!20, line width=0.5pt, minimum size=0.3cm] (ed) at (-1.5, 4) {$\varepsilon^d$};
        \node[rectangle, draw=gray!20, fill=gray!05, line width=0.5pt, minimum size=0.3cm] (es) at (-1.5, 2) {\textcolor{black!30}{$\varepsilon^s$}};
        \node[rectangle, draw=gray!20, fill=gray!05, line width=0.5pt, minimum size=0.3cm] (ei) at (-1.5, 0) {\textcolor{black!30}{$\varepsilon^i$}};

        \draw[-latex, line width=1.5pt, black] (ed) -- node[midway, above]{\footnotesize$1$} (x); 
        \draw[-latex, line width=0.8pt, black!30] (es) -- node[midway, above]{\footnotesize$1$} (pi); 
        \draw[-latex, line width=0.8pt, black!30] (ei) -- node[midway, above]{\footnotesize$1$} (i); 
        
    \end{tikzpicture}
    \caption*{(DE)}
    \label{fig:graph-simplified-model-other}
    \end{subfigure}
    \hfill
    \begin{subfigure}[b]{0.20\textwidth}
    \centering
    \begin{tikzpicture}
        \node[circle, draw, line width=1pt, minimum size=0.8cm] (x) at (0, 4) {$\mathrm{x}$};
        \node[circle, draw, line width=1pt, minimum size=0.8cm] (pi) at (0, 2) {$\mathrm{\pi}$};
        \node[circle, draw, line width=1pt, minimum size=0.8cm] (i) at (0, 0) {$\mathrm{i}$};
        
        \draw[-latex, line width=1.5pt, black] (x) -- node[midway, left]{\footnotesize$\alpha_2$} (pi); 
        \draw[-latex, line width=1.5pt, black] (pi) -- node[midway, left]{\footnotesize$\alpha_4$} (i);
        \draw[-latex, line width=0.8pt, black!30] (x) edge [in=45, out=-45] node[midway, right]{\footnotesize$\alpha_3$} (i);

        \node[rectangle, draw=blue!50, fill=blue!20, line width=0.5pt, minimum size=0.3cm] (ed) at (-1.5, 4) {$\varepsilon^d$};
        \node[rectangle, draw=gray!20, fill=gray!05, line width=0.5pt, minimum size=0.3cm] (es) at (-1.5, 2) {\textcolor{black!30}{$\varepsilon^s$}};
        \node[rectangle, draw=gray!20, fill=gray!05, line width=0.5pt, minimum size=0.3cm] (ei) at (-1.5, 0) {\textcolor{black!30}{$\varepsilon^i$}};

        \draw[-latex, line width=1.5pt, black] (ed) -- node[midway, above]{\footnotesize$1$} (x); 
        \draw[-latex, line width=0.8pt, black!30] (es) -- node[midway, above]{\footnotesize$1$} (pi); 
        \draw[-latex, line width=0.8pt, black!30] (ei) -- node[midway, above]{\footnotesize$1$} (i); 
        
    \end{tikzpicture}
    \caption*{(IE)}
    \label{fig:graph-simplified-model-expectation}
    \end{subfigure}
    \caption{Illustrative recursive example (R). We analyse the total effect (TE) of a demand shock on interest rates, which is decomposed into a direct  (DE) and indirect effect (IE) through inflation. Gray nodes and  edges are irrelevant for the effect shown in each panel.}
    \label{fig:graph-simplified-model}
\end{figure}

We decompose the total effect of the demand shock $\varepsilon_t^d$ on interest rates (TE, $\alpha_2\alpha_4 + \alpha_3$) into \emph{direct effects} (DE) and \emph{indirect effects} (IE) that go through the direct and indirect channels respectively. The total effect ($\alpha_2\alpha_4 + \alpha_3$) can be read of Figure \ref{fig:graph-simplified-model} panel (TE) by multiplying the path coefficients of the edges along each path from $\varepsilon_t^d$ to $\mathrm{i}_t$ and then adding up the results across all paths. The direct channel is visualised in Figure \ref{fig:graph-simplified-model} panel (DE), the indirect channel in panel (IE). Transmission effects can be read from the graph in a similar way as the total effect. The transmission effects of the direct and indirect channel are given by 
\begin{equation}
        \text{\textbf{(DE)}} \quad\alpha_3, \qquad
        \text{\textbf{(IE)}} \quad\alpha_2\alpha_4.
    \label{eq:sec2-recursive-de-ie}
\end{equation}
The transmission effects of the disjoint direct and indirect channels thus add up to the total effect; thereby decomposing the total effect. Throughout only information about $\varepsilon^d$ is required; other structural shocks  
do not have to be identified.

\subsection{The Non-Recursive Model}
\label{sec:sec2-nonrecursive}

Consider the general three-equation New Keynesian model \eqref{eq:sec2-dsge-static} with $\alpha_1\neq 0$. The contemporaneous matrix $\bA$ is no-longer lower-triangular; the model is now \textit{non-recursive} (NR). Equilibrium dynamics of the output gap are no-longer fully determined by the demand shock, as visualised in Figure \ref{fig:simplified-model-cyclic-reparameterised} panel (NR) by two incoming edges into $\mathrm{x}_t$. This complicates the equilibrium dynamics and TCA cannot be performed directly on this graph since the feedback loop (or cycle) between the output gap and inflation in Figure \ref{fig:simplified-model-cyclic-reparameterised} panel (NR) implies that no logically consistent definition of a transmission channel exists.

\begin{figure}
    \centering
    \begin{subfigure}[b]{0.18\textwidth}
    \centering
    \begin{tikzpicture}
        \node[circle, draw, line width=1pt, minimum size=0.8cm] (x) at (0, 4) {$\mathrm{x}$};
        \node[circle, draw, line width=1pt, minimum size=0.8cm] (pi) at (0, 2) {$\mathrm{\pi}$};
        \node[circle, draw, line width=1pt, minimum size=0.8cm] (i) at (0, 0) {$\mathrm{i}$};
        
        \draw[-latex, line width=1pt] (x) -- (pi);
        \draw[-latex, line width=1pt] (pi) -- (i); 
        \draw[-latex, line width=1pt] (x) edge [in=45, out=-45] (i);
        \draw[-latex, line width=1pt] (i) edge [in=225, out=135] (x);

        \node[rectangle, draw=blue!50, fill=blue!20, line width=0.5pt, minimum size=0.3cm] (ed) at (-1.5, 4) {$\varepsilon^d$};
        \node[rectangle, draw=blue!50, fill=blue!20, line width=0.5pt, minimum size=0.3cm] (es) at (-1.5, 2) {$\varepsilon^s$};
        \node[rectangle, draw=blue!50, fill=blue!20, line width=0.5pt, minimum size=0.3cm] (ei) at (-1.5, 0) {$\varepsilon^i$};

        \draw[-latex, line width=1pt] (ed) -- (x); 
        \draw[-latex, line width=1pt] (es) -- (pi); 
        \draw[-latex, line width=1pt] (ei) -- (i); 
        
    \end{tikzpicture}
    \caption*{(NR)}
    \label{fig:simplified-model-cyclic-reparameterised-all1}
    \end{subfigure}
    \hfill 
    \begin{subfigure}[b]{0.18\textwidth}
    \centering
    \begin{tikzpicture}
        \node[circle, draw, line width=1pt, minimum size=0.8cm] (x) at (0, 4) {$\mathrm{x}$};
        \node[circle, draw, line width=1pt, minimum size=0.8cm] (pi) at (0, 2) {$\mathrm{\pi}$};
        \node[circle, draw, line width=1pt, minimum size=0.8cm] (i) at (0, 0) {$\mathrm{i}$};
        
        \draw[-latex, line width=1pt] (x) -- (pi);
        \draw[-latex, line width=1pt] (pi) -- (i); 
        \draw[-latex, line width=1pt] (x) edge [in=45, out=-45] (i);

        \node[rectangle, draw=blue!50, fill=blue!20, line width=0.5pt, minimum size=0.3cm] (ed) at (-1.5, 4) {$\varepsilon^d$};
        \node[rectangle, draw=blue!50, fill=blue!20, line width=0.5pt, minimum size=0.3cm] (es) at (-1.5, 2) {$\varepsilon^s$};
        \node[rectangle, draw=blue!50, fill=blue!20, line width=0.5pt, minimum size=0.3cm] (ei) at (-1.5, 0) {$\varepsilon^i$};

        \draw[-latex, line width=1pt] (ed) -- (x); 
        \draw[-latex, line width=1pt] (ed) -- (pi); 
        \draw[-latex, line width=1pt] (ed) -- (i); 
        
        \draw[-latex, line width=1pt] (es) -- (x); 
        \draw[-latex, line width=1pt] (es) -- (pi); 
        \draw[-latex, line width=1pt] (es) -- (i); 
        
        \draw[-latex, line width=1pt] (ei) -- (x); 
        \draw[-latex, line width=1pt] (ei) -- (pi); 
        \draw[-latex, line width=1pt] (ei) -- (i); 
    \end{tikzpicture}
    \caption*{(NR')}
    \label{fig:simplified-model-cyclic-reparameterised-all2}
    \end{subfigure}
    \hfill 
    \begin{subfigure}[b]{0.18\textwidth}
    \centering
    \begin{tikzpicture}
        \node[circle, draw, line width=1pt, minimum size=0.8cm] (x) at (0, 4) {$\mathrm{x}$};
        \node[circle, draw, line width=1pt, minimum size=0.8cm] (pi) at (0, 2) {$\mathrm{\pi}$};
        \node[circle, draw, line width=1pt, minimum size=0.8cm] (i) at (0, 0) {$\mathrm{i}$};
        
        \draw[-latex, line width=1.5pt, black] (x) -- (pi);
        \draw[-latex, line width=1.5pt, black] (pi) -- (i); 
        \draw[-latex, line width=1.5pt, black] (x) edge [in=45, out=-45] (i);

        \node[rectangle, draw=blue!50, fill=blue!20, line width=0.5pt, minimum size=0.3cm] (ed) at (-1.5, 4) {$\varepsilon^d$};
        \node[rectangle, draw=gray!20, fill=gray!05, line width=0.5pt, minimum size=0.3cm] (es) at (-1.5, 2) {\textcolor{black!30}{$\varepsilon^s$}};
        \node[rectangle, draw=gray!20, fill=gray!05, line width=0.5pt, minimum size=0.3cm] (ei) at (-1.5, 0) {\textcolor{black!30}{$\varepsilon^i$}};

        \draw[-latex, line width=1.5pt, black] (ed) -- (x); 
        \draw[-latex, line width=1.5pt, black] (ed) -- (pi); 
        \draw[-latex, line width=1.5pt, black] (ed) -- (i); 
        
        \draw[-latex, line width=0.8pt, black!30] (es) -- (x); 
        \draw[-latex, line width=0.8pt, black!30] (es) -- (pi); 
        \draw[-latex, line width=0.8pt, black!30] (es) -- (i); 
        
        \draw[-latex, line width=0.8pt, black!30] (ei) -- (x); 
        \draw[-latex, line width=0.8pt, black!30] (ei) -- (pi); 
        \draw[-latex, line width=0.8pt, black!30] (ei) -- (i); 
    \end{tikzpicture}
    \caption*{(TE)}
    \label{fig:simplified-model-cyclic-reparameterised-mpshock}
    \end{subfigure}
    \hfill 
    \begin{subfigure}[b]{0.18\textwidth}
    \centering
    \begin{tikzpicture}
        \node[circle, draw, line width=1pt, minimum size=0.8cm] (x) at (0, 4) {$\mathrm{x}$};
        \node[circle, draw, line width=1pt, minimum size=0.8cm, black!30] (pi) at (0, 2) {$\mathrm{\pi}$};
        \node[circle, draw, line width=1pt, minimum size=0.8cm] (i) at (0, 0) {$\mathrm{i}$};
        
        \draw[-latex, line width=0.8pt, black!30] (x) -- (pi);
        \draw[-latex, line width=0.8pt, black!30] (pi) -- (i); 
        \draw[-latex, line width=1.5pt, black] (x) edge [in=45, out=-45] (i);

        \node[rectangle, draw=blue!50, fill=blue!20, line width=0.5pt, minimum size=0.3cm] (ed) at (-1.5, 4) {$\varepsilon^d$};
        \node[rectangle, draw=gray!20, fill=gray!05, line width=0.5pt, minimum size=0.3cm] (es) at (-1.5, 2) {\textcolor{black!30}{$\varepsilon^s$}};
        \node[rectangle, draw=gray!20, fill=gray!05, line width=0.5pt, minimum size=0.3cm] (ei) at (-1.5, 0) {\textcolor{black!30}{$\varepsilon^i$}};

        \draw[-latex, line width=1.5pt, black] (ed) -- (x); 
        \draw[-latex, line width=0.8pt, black!30] (ed) -- (pi); 
        \draw[-latex, line width=1.5pt, black] (ed) -- (i); 
        
        \draw[-latex, line width=0.8pt, black!30] (es) -- (x); 
        \draw[-latex, line width=0.8pt, black!30] (es) -- (pi); 
        \draw[-latex, line width=0.8pt, black!30] (es) -- (i); 
        
        \draw[-latex, line width=0.8pt, black!30] (ei) -- (x); 
        \draw[-latex, line width=0.8pt, black!30] (ei) -- (pi); 
        \draw[-latex, line width=0.8pt, black!30] (ei) -- (i); 
    \end{tikzpicture}
    \caption*{(DE)}
    \label{fig:simplified-model-cyclic-reparameterised-expectations1}
    \end{subfigure}
    \hfill
    \begin{subfigure}[b]{0.18\textwidth}
    \centering
    \begin{tikzpicture}
        \node[circle, draw, line width=1pt, minimum size=0.8cm] (x) at (0, 4) {$\mathrm{x}$};
        \node[circle, draw, line width=1pt, minimum size=0.8cm] (pi) at (0, 2) {$\mathrm{\pi}$};
        \node[circle, draw, line width=1pt, minimum size=0.8cm] (i) at (0, 0) {$\mathrm{i}$};
        
        \draw[-latex, line width=1.5pt, black] (x) -- (pi); 
        \draw[-latex, line width=1.5pt, black] (pi) -- (i);
        \draw[-latex, line width=0.8pt, black!30] (x) edge [in=45, out=-45] (i);

        \node[rectangle, draw=blue!50, fill=blue!20, line width=0.5pt, minimum size=0.3cm] (ed) at (-1.5, 4) {$\varepsilon^d$};
        \node[rectangle, draw=gray!20, fill=gray!05, line width=0.5pt, minimum size=0.3cm] (es) at (-1.5, 2) {\textcolor{black!30}{$\varepsilon^s$}};
        \node[rectangle, draw=gray!20, fill=gray!05, line width=0.5pt, minimum size=0.3cm] (ei) at (-1.5, 0) {\textcolor{black!30}{$\varepsilon^i$}};

        \draw[-latex, line width=1.5pt, black] (ed) -- (x); 
        \draw[-latex, line width=1.5pt, black] (ed) -- (pi); 
        \draw[-latex, line width=0.8pt, black!30] (ed) -- (i); 
        
        \draw[-latex, line width=0.8pt, black!30] (es) -- (x); 
        \draw[-latex, line width=0.8pt, black!30] (es) -- (pi); 
        \draw[-latex, line width=0.8pt, black!30] (es) -- (i); 
        
        \draw[-latex, line width=0.8pt, black!30] (ei) -- (x); 
        \draw[-latex, line width=0.8pt, black!30] (ei) -- (pi); 
        \draw[-latex, line width=0.8pt, black!30] (ei) -- (i); 
        
    \end{tikzpicture}
    \caption*{(IE)}
    \label{fig:simplified-model-cyclic-reparameterised-expectations2}
    \end{subfigure}
    \caption{
    Illustrative non-recursive example (NR) together with the re-parameterised acyclic graph suitable for TCA (NR'). We analyse the total effect (TE) of a demand shock on interest rates, which is decomposed into a direct (DE) and indirect effect (IE) through inflation. Gray nodes and edges are irrelevant for the effect shown in each panel.}
    \label{fig:simplified-model-cyclic-reparameterised}
\end{figure}

While simply restricting $\alpha_1$ to zero would break the cycle in the graph, it would also change the equilibrium since it would require an explicit change in the deep structural parameters $\bvartheta$. Restricting $\alpha_1=0$ would thus correspond to a counterfactual analysis, comparing the total dynamic effect across two equilibria; one equilibrium under $\alpha_1\neq 0$ and one under $\alpha_1=0$. TCA, in contrast, takes $\bvartheta$ and thus the dynamic equilibrium representation as given. To decompose the equilibrium dynamics leading to the total effects, TCA requires model \eqref{eq:sec2-dsge-static} to be re-parameterised into an equivalent equilibrium representation that lends itself to logically consistent definitions of transmission channels - a parameterisation that does not involve a feedback loop (cycle) and which can be represented by a DAG.

Since our focus is on the transmission of shocks in an equilibrium representation, any equivalent equilibrium representation can be used that permits the quantification of the transmission channels of interest. The re-parameterisation step is therefore crucially determined by the research question. To quantify the effect corresponding to the direct and indirect transmission channel, we seek a representation of the form 
\begin{equation}
    \bL 
        \begin{bmatrix}
        \mathrm{x}_t & \mathrm{\pi}_t & \mathrm{i}_t
    \end{bmatrix}'
    = \bQ'
        \begin{bmatrix}
        \varepsilon^d_t & 
        \varepsilon^s_t &
        \varepsilon^i_t
    \end{bmatrix}',
    \label{eq:simplified-model-ql}
\end{equation}
where equilibrium interest rates depend on inflation and the output gap; here $\bL$ is a lower-triangular matrix and $\bQ$ is a rotation matrix. Model  \eqref{eq:simplified-model-ql} describes the same dynamic equilibrium as model \eqref{eq:sec2-dsge-static} and is suitable to perform TCA for the research question at hand, but the interpretation of individual equations may be different and the re-parameterised relationships among variables may no longer have a behavioural interpretation. For example, the re-parameterised formulation of the ``inflation equation'' and the ``interest rate equation'' may now include the demand shock; and the re-parameterised relationship between the interest rate and inflation does not characterise the central bank's policy reaction anymore.

Representation \eqref{eq:simplified-model-ql} can be obtained using a QL-decomposition of the matrix ${\bA}$ which specifies the contemporaneous relationships among the endogenous variables. Multiplying both sides of \eqref{eq:simplified-model-ql} by $\bQ$ and noting that $\bQ\bQ'=\bI$, we obtain 
\begin{equation*}
    \bQ\bL
        \begin{bmatrix}
        \mathrm{x}_t & \mathrm{\pi}_t & \mathrm{i}_t
    \end{bmatrix}'
    =
        \begin{bmatrix}
        \varepsilon^d_t &
        \varepsilon^s_t &
        \varepsilon^i_t
    \end{bmatrix}',
    \label{eq:simplified-model-ql-1}
\end{equation*}
where $\bA=\bQ\bL$ is unique,\footnote{This uniqueness is only up to sign, but we require the diagonal of $\bL$ to be positive, as standard in software applications.} 
assuming that $\bA$ is non-singular. 

Model \eqref{eq:simplified-model-ql} can now be represented as a DAG, see Figure \ref{fig:simplified-model-cyclic-reparameterised} panel (NR') with path coefficients omitted for clarity.  Unlike to the DAG of Section \ref{sec:sec2-recursive}, edges now exist from all shocks to all variables. Focusing on the total effect of the demand shock on interest rates in Figure \ref{fig:simplified-model-cyclic-reparameterised} (TE), edges exist from the demand shock to the output gap, as before, but also to inflation and interest rates; their interpretation remains the same as in Section \ref{sec:sec2-recursive}. There are two paths from the demand shock to the output gap not going through inflation that form the direct transmission channel in Figure \ref{fig:simplified-model-cyclic-reparameterised} (DE) and two paths going through inflation that together form the indirect channel in Figure \ref{fig:simplified-model-cyclic-reparameterised} (IE). The corresponding transmission effects can be computed as before and are respectively given by
\begin{equation}
    \begin{split}
        \text{\textbf{(DE)}}&\quad(\alpha_3 + \alpha_1(1-\eta))/((1+\alpha_1^2)\eta), \\
        \text{\textbf{(IE)}}&\quad(\alpha_2\alpha_4)/((1+\alpha_1^2)\eta).
    \end{split}
    \label{eq:sec2-non-recursive-de-ie}
\end{equation}
To compute these effects, only the structural shock $\varepsilon^d$ needs to be identified. We will show in Section \ref{sec:sec4-main-results} that, in general, only one shock -- the initial shock of interest -- needs to be identified.

Finally, note that the transmission channels and effects obtained through model \eqref{eq:simplified-model-ql} depend on the ordering of the variables in the QL-decomposition. Each ordering defines a set of possible transmission channels, with the chosen ordering being entirely determined by the research question. Given its crucial role for TCA, we encode this variable ordering in a permutation matrix $\bT$ which we label the \textit{transmission matrix}. Importantly, the ordering encoded in the transmission matrix is not needed to identify the structural shock of interest and its meaning is therefore fundamentally different from the ordering for the Cholesky decomposition in recursively identified structural models. 

Figure \ref{fig:sec2-alternative-Ts} gives all possible permutations of the transmission matrix in the example of this section. With $\bA^*=\bA\bT'$ and $\by_t^*=\bT\by_t$, the QL-decomposition of $\bA^*$ results in six equivalent equilibrium representations, whose DAG representations are visualised in panels (a) to (f). The choice of transmission matrix determines the possible paths through the DAG and therefore transmission channels up for analysis. To answer our research question we need an equilibrium representation in which the interest rate depends on inflation and the output gap. This holds for the transmission matrices $\bT$ in panels (a) and (c), directly ruling out the choice of $\bT$ in the other panels. By ordering the output gap before inflation as in panel (a), we also enforce that inflation is dependent on the output gap, making it the only natural and logical choice to quantify our two transmission effects of interest.

\begin{figure}
    \centering
    \begin{subfigure}[b]{0.12\textwidth}
    \centering
    \begin{tikzpicture}[scale=0.7, transform shape]
        \node[circle, draw, line width=1pt, minimum size=0.8cm] (x) at (0, 4) {$\mathrm{x}$};
        \node[circle, draw, line width=1pt, minimum size=0.8cm] (pi) at (0, 2) {$\mathrm{\pi}$};
        \node[circle, draw, line width=1pt, minimum size=0.8cm] (i) at (0, 0) {$\mathrm{i}$};
        
        \draw[-latex, line width=1pt] (x) -- (pi);
        \draw[-latex, line width=1pt] (pi) -- (i); 
        \draw[-latex, line width=1pt] (x) edge [in=45, out=-45] (i);

        \node[rectangle, draw=blue!50, fill=blue!20, line width=0.5pt, minimum size=0.3cm] (ed) at (-1.5, 4) {$\varepsilon^d$};
        \node[rectangle, draw=gray!10, fill=gray!05, line width=0.5pt, minimum size=0.3cm] (es) at (-1.5, 2) {\textcolor{gray!30}{$\varepsilon^s$}};
        \node[rectangle, draw=gray!10, fill=gray!05, line width=0.5pt, minimum size=0.3cm] (ei) at (-1.5, 0) {\textcolor{gray!30}{$\varepsilon^i$}};

        \draw[-latex, line width=1pt, lightgray!50] (es) -- (x); 
        \draw[-latex, line width=1pt, lightgray!50] (es) -- (pi); 
        \draw[-latex, line width=1pt, lightgray!50] (es) -- (i); 
        
        \draw[-latex, line width=1pt, lightgray!50] (ei) -- (x); 
        \draw[-latex, line width=1pt, lightgray!50] (ei) -- (pi); 
        \draw[-latex, line width=1pt, lightgray!50] (ei) -- (i); 
        
        \draw[-latex, line width=1pt] (ed) -- (x); 
        \draw[-latex, line width=1pt] (ed) -- (pi); 
        \draw[-latex, line width=1pt] (ed) -- (i); 
        
        \node (matrix) at (-0.5, -1.5) {\footnotesize$\bT = \begin{bmatrix}
        1 & 0 & 0 \\ 
        0 & 1 & 0 \\ 
        0 & 0 & 1
        \end{bmatrix}$};
    \end{tikzpicture}
    \caption{}
    \label{fig:simplified-model-cyclic-reparameterised-all3}
    \end{subfigure}
    \hfill
    \begin{subfigure}[b]{0.12\textwidth}
    \centering
    \begin{tikzpicture}[scale=0.7, transform shape]
        \node[circle, draw, line width=1pt, minimum size=0.8cm] (x) at (0, 4) {$\mathrm{x}$};
        \node[circle, draw, line width=1pt, minimum size=0.8cm] (pi) at (0, 2) {$\mathrm{i}$};
        \node[circle, draw, line width=1pt, minimum size=0.8cm] (i) at (0, 0) {$\mathrm{\pi}$};
        
        \draw[-latex, line width=1pt] (x) -- (pi);
        \draw[-latex, line width=1pt] (pi) -- (i); 
        \draw[-latex, line width=1pt] (x) edge [in=45, out=-45] (i);

        \node[rectangle, draw=blue!50, fill=blue!20, line width=0.5pt, minimum size=0.3cm] (ed) at (-1.5, 4) {$\varepsilon^d$};
        \node[rectangle, draw=gray!10, fill=gray!05, line width=0.5pt, minimum size=0.3cm] (es) at (-1.5, 2) {\textcolor{gray!30}{$\varepsilon^s$}};
        \node[rectangle, draw=gray!10, fill=gray!05, line width=0.5pt, minimum size=0.3cm] (ei) at (-1.5, 0) {\textcolor{gray!30}{$\varepsilon^i$}};

        \draw[-latex, line width=1pt, lightgray!50] (es) -- (x); 
        \draw[-latex, line width=1pt, lightgray!50] (es) -- (pi); 
        \draw[-latex, line width=1pt, lightgray!50] (es) -- (i); 
        
        \draw[-latex, line width=1pt, lightgray!50] (ei) -- (x); 
        \draw[-latex, line width=1pt, lightgray!50] (ei) -- (pi); 
        \draw[-latex, line width=1pt, lightgray!50] (ei) -- (i); 
        
        \draw[-latex, line width=1pt] (ed) -- (x); 
        \draw[-latex, line width=1pt] (ed) -- (pi); 
        \draw[-latex, line width=1pt] (ed) -- (i); 
        
        \node (matrix) at (-0.5, -1.5) {\footnotesize$\bT = \begin{bmatrix}
        1 & 0 & 0 \\ 
        0 & 0 & 1 \\ 
        0 & 1 & 0
        \end{bmatrix}$};
    \end{tikzpicture}
    \caption{}
    \label{fig:simplified-model-cyclic-reparameterised-all4}
    \end{subfigure}
    \hfill
    \begin{subfigure}[b]{0.12\textwidth}
    \centering
    \begin{tikzpicture}[scale=0.7, transform shape]
        \node[circle, draw, line width=1pt, minimum size=0.8cm] (x) at (0, 4) {$\mathrm{\pi}$};
        \node[circle, draw, line width=1pt, minimum size=0.8cm] (pi) at (0, 2) {$\mathrm{x}$};
        \node[circle, draw, line width=1pt, minimum size=0.8cm] (i) at (0, 0) {$\mathrm{i}$};
        
        \draw[-latex, line width=1pt] (x) -- (pi);
        \draw[-latex, line width=1pt] (pi) -- (i); 
        \draw[-latex, line width=1pt] (x) edge [in=45, out=-45] (i);

        \node[rectangle, draw=blue!50, fill=blue!20, line width=0.5pt, minimum size=0.3cm] (ed) at (-1.5, 4) {$\varepsilon^d$};
        \node[rectangle, draw=gray!10, fill=gray!05, line width=0.5pt, minimum size=0.3cm] (es) at (-1.5, 2) {\textcolor{gray!30}{$\varepsilon^s$}};
        \node[rectangle, draw=gray!10, fill=gray!05, line width=0.5pt, minimum size=0.3cm] (ei) at (-1.5, 0) {\textcolor{gray!30}{$\varepsilon^i$}};

        \draw[-latex, line width=1pt, lightgray!50] (es) -- (x); 
        \draw[-latex, line width=1pt, lightgray!50] (es) -- (pi); 
        \draw[-latex, line width=1pt, lightgray!50] (es) -- (i); 
        
        \draw[-latex, line width=1pt, lightgray!50] (ei) -- (x); 
        \draw[-latex, line width=1pt, lightgray!50] (ei) -- (pi); 
        \draw[-latex, line width=1pt, lightgray!50] (ei) -- (i); 
        
        \draw[-latex, line width=1pt] (ed) -- (x); 
        \draw[-latex, line width=1pt] (ed) -- (pi); 
        \draw[-latex, line width=1pt] (ed) -- (i); 
        
        \node (matrix) at (-0.5, -1.5) {\footnotesize$\bT = \begin{bmatrix}
        0 & 1 & 0 \\ 
        1 & 0 & 0 \\ 
        0 & 0 & 1
        \end{bmatrix}$};
    \end{tikzpicture}
    \caption{}
    \label{fig:simplified-model-cyclic-reparameterised-all5}
    \end{subfigure}
    \hfill 
    \begin{subfigure}[b]{0.12\textwidth}
    \centering
    \begin{tikzpicture}[scale=0.7, transform shape]
        \node[circle, draw, line width=1pt, minimum size=0.8cm] (x) at (0, 4) {$\mathrm{\pi}$};
        \node[circle, draw, line width=1pt, minimum size=0.8cm] (pi) at (0, 2) {$\mathrm{i}$};
        \node[circle, draw, line width=1pt, minimum size=0.8cm] (i) at (0, 0) {$\mathrm{x}$};
        
        \draw[-latex, line width=1pt] (x) -- (pi);
        \draw[-latex, line width=1pt] (pi) -- (i); 
        \draw[-latex, line width=1pt] (x) edge [in=45, out=-45] (i);

        \node[rectangle, draw=blue!50, fill=blue!20, line width=0.5pt, minimum size=0.3cm] (ed) at (-1.5, 4) {$\varepsilon^d$};
        \node[rectangle, draw=gray!10, fill=gray!05, line width=0.5pt, minimum size=0.3cm] (es) at (-1.5, 2) {\textcolor{gray!30}{$\varepsilon^s$}};
        \node[rectangle, draw=gray!10, fill=gray!05, line width=0.5pt, minimum size=0.3cm] (ei) at (-1.5, 0) {\textcolor{gray!30}{$\varepsilon^i$}};

        \draw[-latex, line width=1pt, lightgray!50] (es) -- (x); 
        \draw[-latex, line width=1pt, lightgray!50] (es) -- (pi); 
        \draw[-latex, line width=1pt, lightgray!50] (es) -- (i); 
        
        \draw[-latex, line width=1pt, lightgray!50] (ei) -- (x); 
        \draw[-latex, line width=1pt, lightgray!50] (ei) -- (pi); 
        \draw[-latex, line width=1pt, lightgray!50] (ei) -- (i); 
        
        \draw[-latex, line width=1pt] (ed) -- (x); 
        \draw[-latex, line width=1pt] (ed) -- (pi); 
        \draw[-latex, line width=1pt] (ed) -- (i); 
        
        \node (matrix) at (-0.5, -1.5) {\footnotesize$\bT = \begin{bmatrix}
        0 & 1 & 0 \\ 
        0 & 0 & 1 \\ 
        1 & 0 & 0
        \end{bmatrix}$};
    \end{tikzpicture}
    \caption{}
    \label{fig:simplified-model-cyclic-reparameterised-all6}
    \end{subfigure}
    \hfill 
    \begin{subfigure}[b]{0.12\textwidth}
    \centering
    \begin{tikzpicture}[scale=0.7, transform shape]
        \node[circle, draw, line width=1pt, minimum size=0.8cm] (x) at (0, 4) {$\mathrm{i}$};
        \node[circle, draw, line width=1pt, minimum size=0.8cm] (pi) at (0, 2) {$\mathrm{\pi}$};
        \node[circle, draw, line width=1pt, minimum size=0.8cm] (i) at (0, 0) {$\mathrm{x}$};
        
        \draw[-latex, line width=1pt] (x) -- (pi);
        \draw[-latex, line width=1pt] (pi) -- (i); 
        \draw[-latex, line width=1pt] (x) edge [in=45, out=-45] (i);

        \node[rectangle, draw=blue!50, fill=blue!20, line width=0.5pt, minimum size=0.3cm] (ed) at (-1.5, 4) {$\varepsilon^d$};
        \node[rectangle, draw=gray!10, fill=gray!05, line width=0.5pt, minimum size=0.3cm] (es) at (-1.5, 2) {\textcolor{gray!30}{$\varepsilon^s$}};
        \node[rectangle, draw=gray!10, fill=gray!05, line width=0.5pt, minimum size=0.3cm] (ei) at (-1.5, 0) {\textcolor{gray!30}{$\varepsilon^i$}};

        \draw[-latex, line width=1pt, lightgray!50] (es) -- (x); 
        \draw[-latex, line width=1pt, lightgray!50] (es) -- (pi); 
        \draw[-latex, line width=1pt, lightgray!50] (es) -- (i); 
        
        \draw[-latex, line width=1pt, lightgray!50] (ei) -- (x); 
        \draw[-latex, line width=1pt, lightgray!50] (ei) -- (pi); 
        \draw[-latex, line width=1pt, lightgray!50] (ei) -- (i); 
        
        \draw[-latex, line width=1pt] (ed) -- (x); 
        \draw[-latex, line width=1pt] (ed) -- (pi); 
        \draw[-latex, line width=1pt] (ed) -- (i); 

        \node (matrix) at (-0.5, -1.5) {\footnotesize$\bT = \begin{bmatrix}
        0 & 0 & 1 \\ 
        0 & 1 & 0 \\ 
        1 & 0 & 0
        \end{bmatrix}$};
    \end{tikzpicture}
    \caption{}
    \label{fig:simplified-model-cyclic-reparameterised-mpshock1}
    \end{subfigure}
    \hfill 
    \begin{subfigure}[b]{0.12\textwidth}
    \centering
    \begin{tikzpicture}[scale=0.7, transform shape]
        \node[circle, draw, line width=1pt, minimum size=0.8cm] (x) at (0, 4) {$\mathrm{i}$};
        \node[circle, draw, line width=1pt, minimum size=0.8cm] (pi) at (0, 2) {$\mathrm{x}$};
        \node[circle, draw, line width=1pt, minimum size=0.8cm] (i) at (0, 0) {$\mathrm{\pi}$};
        
        \draw[-latex, line width=1pt] (x) -- (pi);
        \draw[-latex, line width=1pt] (pi) -- (i); 
        \draw[-latex, line width=1pt] (x) edge [in=45, out=-45] (i);

        \node[rectangle, draw=blue!50, fill=blue!20, line width=0.5pt, minimum size=0.3cm] (ed) at (-1.5, 4) {$\varepsilon^d$};
        \node[rectangle, draw=gray!10, fill=gray!05, line width=0.5pt, minimum size=0.3cm] (es) at (-1.5, 2) {\textcolor{gray!30}{$\varepsilon^s$}};
        \node[rectangle, draw=gray!10, fill=gray!05, line width=0.5pt, minimum size=0.3cm] (ei) at (-1.5, 0) {\textcolor{gray!30}{$\varepsilon^i$}};

        \draw[-latex, line width=1pt, lightgray!50] (es) -- (x); 
        \draw[-latex, line width=1pt, lightgray!50] (es) -- (pi); 
        \draw[-latex, line width=1pt, lightgray!50] (es) -- (i); 
        
        \draw[-latex, line width=1pt, lightgray!50] (ei) -- (x); 
        \draw[-latex, line width=1pt, lightgray!50] (ei) -- (pi); 
        \draw[-latex, line width=1pt, lightgray!50] (ei) -- (i); 
        
        \draw[-latex, line width=1pt] (ed) -- (x); 
        \draw[-latex, line width=1pt] (ed) -- (pi); 
        \draw[-latex, line width=1pt] (ed) -- (i); 

        \node (matrix) at (-0.5, -1.5) {\footnotesize$\bT = \begin{bmatrix}
        0 & 0 & 1 \\ 
        1 & 0 & 0 \\ 
        0 & 1 & 0
        \end{bmatrix}$};
    \end{tikzpicture}
    \caption{}
    \label{fig:simplified-model-cyclic-reparameterised-mpshock2}
    \end{subfigure}
    \caption{Panels (a) through (f) show all possible permutations of the three variables $\mathrm{x}_t$, $\mathrm{\pi}_t$ and $\mathrm{i}_t$ as encoded in the transmission matrix $\bT$. The associated DAG is shown above the matrix where gray nodes and edges are irrelevant for the transmission of a demand shock.}
    \label{fig:sec2-alternative-Ts}
\end{figure}

% TCA in general dynamic graphs; exlcluding PO notation
\section{The General Framework}
\label{sec:sec3-general-framework}

We now formalise the intuition developed in the previous section. Section \ref{sec:dynamic-models-total-effects} introduces our general dynamic model and its impulse-response representation which we use to compute total effects. Section \ref{sec:dynamic-graphs} then introduces the associated DAG which we use to decompose the total effect into effects through transmission channels, and formally defines transmission channels and transmission effects. 

\subsection{Dynamic Models and Total Effects}
\label{sec:dynamic-models-total-effects}
We assume $\by_t$ to be a $K$-dimensional stationary stochastic process with a structural Vector Autoregressive Moving Average (VARMA) representation given by
\begin{equation}
    \bA_0\by_t = \sum_{i=1}^{\ell}\bA_i\by_{t-i} + \sum_{j=1}^{q}\bPsi_{j}\bvarepsilon_{t-j} + \bvarepsilon_t ,
    \label{eq:general-model}
\end{equation}
where $\{\bA_i\}_{i=1}^{\ell}$ and $\{\bPsi_j\}_{j=1}^{q}$ are $K\times K$ coefficient matrices which are statistically identified using any common scheme such as the echelon form (e.g., \citealp{poskitt1992}), $\bA_0$ is a contemporaneous coefficient matrix assumed to be (partially) identified using some economic identification scheme, and $\bvarepsilon_t$ is a $K\times 1$ vector of white noise.  

The structural VARMA model \eqref{eq:general-model} encompasses a wide range of popular dynamic models used in macroeconomic research. The SVAR is recovered by setting $\bPsi_j = \bf 0$ (for $j=1, \ldots, q$), while  many linearised DSGE models can be represented as structural VARMAs with cross-equation restrictions \citep{ravennaVectorAutoregressionsReduced2007, fernandez-villaverdeABCsDsUnderstanding2007}. 
We make two standard assumptions on $\bvarepsilon_t$ and $\bA_0$.  
\begin{assumption}
    The white noise vector $\bvarepsilon_t$ consists of $K$ structural shocks satisfying $\E[\bvarepsilon_{t}]=\bm 0_K$, $\E[\bvarepsilon_{t}\bvarepsilon_{t}']=\bI_K$ and $\E[\bvarepsilon_t\bvarepsilon_{t-r}']=\bO_K$ for all $r\geq 1$.
    \label{assump:general-model-structural-shocks}
\end{assumption}
\begin{assumption}
    $\bA_0$ is non-singular. 
    \label{assump:general-models-a0-nonsingular}
\end{assumption}
\noindent Assumption \ref{assump:general-model-structural-shocks} assures that impulse responses with respect to the white noise disturbances $\bvarepsilon_t$ have a causal interpretation. Assumption \ref{assump:general-models-a0-nonsingular}  assures that these causally interpretable impulse responses can be obtained from reduced-form IRFs. Section \ref{sec:sec4-main-results} uses this last implication to show how transmission effects can be obtained from a single identified shock. Hence, $\bA_0$ may be only partially identified. 

While all equivalent equilibrium representations can be used to analyse total effects, decomposing these effects into effects through transmission channels
requires a precise definition of which variables depend on which other variables in equilibrium. TCA thus involves choosing the ordering of variables using the transmission matrix $\bT$. 
\begin{definition}
    The \emph{transmission matrix} $\bT$ denotes the $K\times K$ permutation matrix defining the ceteris paribus dependencies of the $K$ variables in the chosen equilibrium representation; it is fixed and specified by the researcher. 
    \label{def:transmission-matrix}
\end{definition}
\noindent Having a fixed and completely specified transmission matrix assures a fixed equilibrium representation. However, Section \ref{sec:sec4-main-results} shows that, under certain conditions when two distinct transmission matrices lead to the same transmission effect, the transmission matrix only needs to be partially specified, so only a partial variable ordering is required in such case. 

Under Assumptions \ref{assump:general-model-structural-shocks}-\ref{assump:general-models-a0-nonsingular} and a fixed $\bT$, 
the QL-decomposition of $\bA_0^*=\bA_0\bT'$ is unique, such that 
\eqref{eq:general-model} can be written in unique lower-triangular form 
\begin{equation}
    \bL\by^*_t = \sum_{i=1}^\ell \bQ'\bA_i^*\by^*_{t-i}  + \sum_{j=1}^q \bQ'\bPsi_j\bvarepsilon_{t-j} + \bQ'\bvarepsilon_t,
    \label{eq:general-model-ql}
\end{equation}
where $\bL$ is a $K\times K$ lower-triangular matrix, $\by^*_t=\bT\by_t$, $\bA^*_i=\bA_i\bT'$ for $i=1,\dots,\ell$ and $\bQ$ is a $K\times K$ rotation matrix.  

Throughout the paper, we focus on transmission channels of the total effect of shock $\varepsilon_{i, t}$ to $y^*_{j, t+h}$ over the fixed finite impulse-response horizon $h$. Then the lower-triangular form \eqref{eq:general-model-ql} can be  re-written in systems form, capturing all dynamics of the shock $\varepsilon_{i, t}$ up to horizon $h$. Defining $\bx=(\by_t^{*'}, \dots, \by_{t+h}^{*'})'$ and $\bvarepsilon=(\bvarepsilon_t', \dots, \bvarepsilon_{t+h}')'$, the systems form (see Appendix \ref{appendix:rewrite} for the derivation) is given by
\begin{equation}
    \bx = \bB\bx + \bOmega\bvarepsilon,
    \label{eq:general-model-xbx}
\end{equation}
where $\bB$ is lower-triangular with zeros on the diagonal, $\bOmega$ is lower-block-triangular, and both are respectively given by\footnote{For expositional ease, we henceforth omit dimension subscripts, letting $\bI = \bI_K$ and $\bO = \bO_K$.} 
\begin{equation}
\resizebox{0.9\textwidth}{!}{$
    \begin{array}{ccc}
         \bB = \begin{bmatrix}
        \bI - \bD\bL & \bO & \dots & \bO \\
        \bD\bQ'\bA^*_1 & \bI-\bD\bL & \dots & \bO \\
        \vdots & \ddots & \ddots & \vdots \\
        \bD\bQ'\bA^*_h & \dots & \bD\bQ'\bA^*_1 & \bI - \bD\bL
    \end{bmatrix}, & \quad & 
    \bOmega = \begin{bmatrix}
        \bD\bQ' & \bO & \dots & \bO \\
        \bD\bQ'\bPsi_1 & \bD\bQ' & \dots & \bO \\
        \vdots & \ddots & \ddots & \vdots \\
        \bD\bQ'\bPsi_h & \dots & \bD\bQ'\bPsi_1 & \bD\bQ'
   \end{bmatrix},
    \end{array}$}
\label{eq:xbx-B-and-Omega}
\end{equation}
with $\bD=\text{diag}(\bL)^{-1}$, where $\text{diag}(\bX)$ is a diagonal matrix of the diagonal of $\bX$, $\bA^*_i = \bO$ for $i>\ell$ and $\bPsi_j = \bO$ for $j > q$.

The total effects can then be obtained from the impulse-response representation 
\begin{equation}
   \bx = \bPhi \bvarepsilon, \qquad \bPhi=(\bI-\bB)^{-1}\bOmega. 
\label{eq:phi-total-effect}
\end{equation}
Under our assumptions the impulse response matrix $\bPhi$ always exists. The aim of TCA is to decompose these total effects into effects through transmission channels. However, before turning to the graphical representation used for this decomposition, we first revisit the examples of Section \ref{sec:sec2-illustrative-example} in our general framework. 

\begin{example} \label{ex:recursive}   
    For the recursive model of Section \ref{sec:sec2-recursive}, $\bT=\bI$, $\bB = \bI - \bD\bA$ and $\bOmega = \bD$ with $\bD=\text{diag}(\bA)^{-1}=\bI$. The total effect of $\varepsilon^d_t$ on $\mathrm{i}_t$ is therefore $\alpha_2\alpha_4 + \alpha_3$. 
\end{example}

\begin{example} \label{ex:non-recursive}
       For the non-recursive model of Section \ref{sec:sec2-recursive} with  $\bT=\bI$, 
       \begin{equation*}
       \resizebox{0.9\textwidth}{!}{$
           \begin{array}{ccc}
                \bB = \begin{bmatrix}
            0 & 0 & 0 \\
     \frac{\left(\alpha_1^2+1\right) \alpha_2+\alpha_1 \alpha_4 (1-\alpha_1 \alpha_3)}{\alpha_1^2
       \left(\alpha_4^2+1\right)+1} & 0 & 0 \\
     \frac{\alpha_1+\alpha_3}{\alpha_1^2+1} & \frac{\alpha_4}{\alpha_1^2+1} & 0 \\
        \end{bmatrix} & 
        \text{and} & 
        \bOmega = \begin{bmatrix}
            \frac{1}{1-\alpha_1 (\alpha_2 \alpha_4+\alpha_3)} & \frac{\alpha_1 \alpha_4}{1-\alpha_1
           (\alpha_2 \alpha_4+\alpha_3)} & \frac{\alpha_1}{1-\alpha_1 (\alpha_2 \alpha_4+\alpha_3)} \\
         -\frac{\alpha_1 \alpha_4}{\alpha_1^2 \left(\alpha_4^2+1\right)+1} & \frac{\alpha_1^2+1}{\alpha_1^2
           \left(\alpha_4^2+1\right)+1} & -\frac{\alpha_1^2 \alpha_4}{\alpha_1^2
           \left(\alpha_4^2+1\right)+1} \\
         -\frac{\alpha_1}{\alpha_1^2+1} & 0 & \frac{1}{\alpha_1^2+1} \\
        \end{bmatrix}.
           \end{array}$}
       \end{equation*}
    The total effect of $\varepsilon^d_t$ on $\mathrm{i}_t$ is therefore $(\alpha_2\alpha_4+\alpha_3) / \eta$, where $\eta = -\alpha_1 \alpha_2\alpha_4-\alpha_1\alpha_3+1$. 
\end{example}

\subsection{Dynamic Graphs and Transmission Channels}
\label{sec:dynamic-graphs}

To decompose the total effects given by $\bPhi$ in \eqref{eq:phi-total-effect} into effects through transmission channels, we introduce the graph $\calG(\bB, \bOmega)$ associated with system \eqref{eq:general-model-xbx}. The graph $\calG(\bB, \bOmega)$ is a DAG and has three ingredients: nodes, directed edges and path coefficients. A node exists for each shock $\varepsilon_{i,r}$ at time $r=t$ and for each variable $y_{i,r}$ at time points $r=t, \ldots, t+h$, where $h$ is the fixed horizon. Let  $\bx=(\by_t^{*'}, \ldots, \by_{t+h}^{*'})'$ and $\bvarepsilon=(\bvarepsilon_t', \ldots, \bvarepsilon_{t+h}')'$ as in equation \eqref{eq:general-model-xbx}. A directed edge $x_i \to x_j$ exists between variable $x_i$ and variable $x_j$ whenever $\bB_{j,i}\neq 0$, where $\bB_{j,i}$ denotes the $(j,i)$ element of $\bB$. The path coefficient of this edge is given by $\omega_{x_i, x_j}=\bB_{j,i}$ and is interpreted as the direct effect of a unit increase in $x_i$ (irrespective of the cause) on $x_j$. A directed edge $\varepsilon_i \to x_j$ from structural shock $\varepsilon_i$ to variable $x_j$ exists if $\bOmega_{j,i}\neq 0$. The path coefficient of this edge is given by $\omega_{\varepsilon_i, x_j} = \bOmega_{j,i}$ and is interpreted as the direct causal effect of a unit increase in $\varepsilon_i$ on $x_j$. 

\begin{example}  \label{example:tri-variate-SVAR-graph}
    Consider the trivariate SVAR(1) given by
    \begin{equation*}
        \bA_0\by_t = \bA_1\by_{t-1} + \bvarepsilon_t,
    \end{equation*} 
    with $\bA_0$ and $\bA_1$ being $3\times 3$ coefficient matrices, $\by_t = (y_{1,t}, y_{2,t}, y_{3,t})'$ and $\bvarepsilon_t = (\varepsilon_{1,t}, \varepsilon_{2,t}, \varepsilon_{3,t})'$. Fixing the horizon $h=1$ and the transmission matrix $\bT=\bI$, 
    \begin{equation*}
    \begin{split}
    \bx&=(y_{1,t}, y_{2, t}, y_{3,t}, y_{1, t+1}, y_{2, t+1}, y_{3, t+1})' \\
    \bvarepsilon &= (\varepsilon_{1,t}, \varepsilon_{2,t}, \varepsilon_{3,t}, \varepsilon_{1, t+1}, \varepsilon_{2, t+1}, \varepsilon_{3, t+1})',
    \end{split}
    \end{equation*}
    and the corresponding graph $\calG(\bB, \bOmega)$ is shown in Figure \ref{fig:svar-varma-examples}(a). Due to the dynamic nature of the SVAR(1), an edge exists from each component of $\by_t$ (i.e.\ $x_1$ to $x_3$ in the graph) to each component of $\by_{t+1}$ (i.e.\ $x_4$ to $x_6$ in the graph). 
\end{example}

\begin{example}
    Assume a trivariate SVARMA(1, 1) given by
    \begin{equation*}
        \bA_0\by_t = \bA_1\by_{t-1} + \bvarepsilon_t + \bPsi_1\bvarepsilon_{t-1},
    \end{equation*} 
    with $\bA_0$, $\bA_1$ and $\bPsi_1$ being $3\times 3$ coefficient matrices, $\by_t = (y_{1,t}, y_{2,t}, y_{3,t})'$ and $\bvarepsilon_t = (\varepsilon_{1,t}, \varepsilon_{2,t}, \varepsilon_{3,t})'$. Fixing the horizon $h=1$ and the transmission matrix $\bT=\bI$, $\bx$ and $\bvarepsilon$ are the same as in Example \ref{example:tri-variate-SVAR-graph}, but the graph is given in Figure \ref{fig:svar-varma-examples}(b). Similar to the graph corresponding to the SVAR(1), edges exist from each component of $\by_t$ to each component of $\by_{t+1}$. Unlike the SVAR case, edges from structural shocks can now also skip time periods, thereby going from each component of $\bvarepsilon_t$ to each component of  $\by_{t+1}$. One such edge is given by $\varepsilon_1 \to x_4$, which is still interpreted as a direct causal effect of a unit increase in $\varepsilon_1$ on $x_4$. 
\end{example}

\begin{figure}
    \centering
    \begin{subfigure}{0.4\textwidth}
    \centering
    \begin{tikzpicture}[scale=0.8, transform shape]
        \node[circle, draw, minimum size=1cm, line width=0.8pt] (y11) at (0, 4) {$x_1$};
        \node[circle, draw, line width=0.8pt, minimum size=1cm, black] (y21) at (0, 2) {$x_2$};
        \node[circle, draw, line width=0.8pt, minimum size=1cm, black] (y31) at (0, 0) {$x_3$};
        
        \node[circle, draw, minimum size=1cm, line width=0.8pt] (y12) at (3, 4) {$x_4$};
        \node[circle, draw, line width=0.8pt, minimum size=1cm, black] (y22) at (3, 2) {$x_5$};
        \node[circle, draw, minimum size=1cm, line width=0.8pt] (y32) at (3, 0) {$x_6$};

        \node[rectangle, draw=blue!50, fill=blue!20, line width=0.5pt, minimum size=0.3cm] (e11) at (-2, 4) {\textcolor{black}{$\varepsilon_{1}$}};
        \node[rectangle, draw=blue!50, fill=blue!20, line width=0.5pt, minimum size=0.3cm] (e21) at (-2, 2) {\textcolor{black}{$\varepsilon_{2}$}};
        \node[rectangle, draw=blue!50, fill=blue!20, line width=0.5pt, minimum size=0.3cm] (e31) at (-2, 0) {\textcolor{black}{$\varepsilon_{3}$}};

        \draw[-latex, line width=0.8pt, black] (y11) -- (y21);
        \draw[-latex, line width=0.8pt, black] (y11) edge[in=45, out=-45] (y31);
        \draw[-latex, line width=0.8pt, black] (y21) -- (y31);

        \draw[-latex, line width=0.8pt, black] (y12) -- (y22);
        \draw[-latex, line width=0.8pt, black] (y12) edge[in=45, out=-45] (y32);
        \draw[-latex, line width=0.8pt, black] (y22) -- (y32);

        \draw[-latex, line width=0.8pt, black] (y11) -- (y12);
        \draw[-latex, line width=0.8pt, black] (y11) -- (y22);
        \draw[-latex, line width=0.8pt, black] (y11) -- (y32);
        \draw[-latex, line width=0.8pt, black] (y21) -- (y12);
        \draw[-latex, line width=0.8pt, black] (y21) -- (y22);
        \draw[-latex, line width=0.8pt, black] (y21) -- (y32);
        \draw[-latex, line width=0.8pt, black] (y31) -- (y12);
        \draw[-latex, line width=0.8pt, black] (y31) -- (y22);
        \draw[-latex, line width=0.8pt, black] (y31) -- (y32);

        \draw[-latex, line width=0.8pt, black] (e11) -- (y11);
        \draw[-latex, line width=0.8pt, black] (e11) -- (y21);
        \draw[-latex, line width=0.8pt, black] (e11) -- (y31);
        \draw[-latex, line width=0.8pt, black] (e21) -- (y11);
        \draw[-latex, line width=0.8pt, black] (e21) -- (y21);
        \draw[-latex, line width=0.8pt, black] (e21) -- (y31);
        \draw[-latex, line width=0.8pt, black] (e31) -- (y11);
        \draw[-latex, line width=0.8pt, black] (e31) -- (y21);
        \draw[-latex, line width=0.8pt, black] (e31) -- (y31);

        \useasboundingbox (-2, -2) rectangle (4, 4);
        
    \end{tikzpicture}
    \caption{}
    \end{subfigure}
    \hfill
    \begin{subfigure}{0.4\textwidth}
    \centering
    \begin{tikzpicture}[scale=0.8, transform shape]
        \node[circle, draw, minimum size=1cm, line width=0.8pt] (y11) at (0, 4) {$x_1$};
        \node[circle, draw, line width=0.8pt, minimum size=1cm, black] (y21) at (0, 2) {$x_2$};
        \node[circle, draw, line width=0.8pt, minimum size=1cm, black] (y31) at (0, 0) {$x_3$};
        
        \node[circle, draw, minimum size=1cm, line width=0.8pt] (y12) at (3, 4) {$x_4$};
        \node[circle, draw, line width=0.8pt, minimum size=1cm, black] (y22) at (3, 2) {$x_5$};
        \node[circle, draw, minimum size=1cm, line width=0.8pt] (y32) at (3, 0) {$x_6$};

        \node[rectangle, draw=blue!50, fill=blue!20, line width=0.5pt, minimum size=0.3cm] (e11) at (-2, 4) {\textcolor{black}{$\varepsilon_{1}$}};
        \node[rectangle, draw=blue!50, fill=blue!20, line width=0.5pt, minimum size=0.3cm] (e21) at (-2, 2) {\textcolor{black}{$\varepsilon_{2}$}};
        \node[rectangle, draw=blue!50, fill=blue!20, line width=0.5pt, minimum size=0.3cm] (e31) at (-2, 0) {\textcolor{black}{$\varepsilon_{3}$}};

        \draw[-latex, line width=0.8pt, black] (y11) -- (y21);
        \draw[-latex, line width=0.8pt, black] (y11) edge[in=45, out=-45] (y31);
        \draw[-latex, line width=0.8pt, black] (y21) -- (y31);

        \draw[-latex, line width=0.8pt, black] (y12) -- (y22);
        \draw[-latex, line width=0.8pt, black] (y12) edge[in=45, out=-45] (y32);
        \draw[-latex, line width=0.8pt, black] (y22) -- (y32);

        \draw[-latex, line width=0.8pt, black] (y11) -- (y12);
        \draw[-latex, line width=0.8pt, black] (y11) -- (y22);
        \draw[-latex, line width=0.8pt, black] (y11) -- (y32);
        \draw[-latex, line width=0.8pt, black] (y21) -- (y12);
        \draw[-latex, line width=0.8pt, black] (y21) -- (y22);
        \draw[-latex, line width=0.8pt, black] (y21) -- (y32);
        \draw[-latex, line width=0.8pt, black] (y31) -- (y12);
        \draw[-latex, line width=0.8pt, black] (y31) -- (y22);
        \draw[-latex, line width=0.8pt, black] (y31) -- (y32);

        \draw[-latex, line width=0.8pt, black] (e11) -- (y11);
        \draw[-latex, line width=0.8pt, black] (e11) -- (y21);
        \draw[-latex, line width=0.8pt, black] (e11) -- (y31);
        \draw[-latex, line width=0.8pt, black] (e21) -- (y11);
        \draw[-latex, line width=0.8pt, black] (e21) -- (y21);
        \draw[-latex, line width=0.8pt, black] (e21) -- (y31);
        \draw[-latex, line width=0.8pt, black] (e31) -- (y11);
        \draw[-latex, line width=0.8pt, black] (e31) -- (y21);
        \draw[-latex, line width=0.8pt, black] (e31) -- (y31);

        \draw[-latex, line width=0.8pt, black] (e11) edge[in=135, out=45] (y12);
        \draw[-latex, line width=0.8pt, black] (e11) edge[in=135, out=-45] (y22);
        \draw[-latex, line width=0.8pt, black] (e11) edge[in=115, out=-25] (y32);
        \draw[-latex, line width=0.8pt, black] (e21) edge[in=-135, out=45] (y12);
        \draw[-latex, line width=0.8pt, black] (e21) edge[in=135, out=45] (y22);
        \draw[-latex, line width=0.8pt, black] (e21) edge[in=135, out=-45] (y32);
        \draw[-latex, line width=0.8pt, black] (e31) edge[in=-115, out=25] (y12);
        \draw[-latex, line width=0.8pt, black] (e31) edge[in=-135, out=45] (y22);
        \draw[-latex, line width=0.8pt, black] (e31) edge[in=-135, out=-45] (y32);

        \useasboundingbox (-2, -2) rectangle (4, 4);
    \end{tikzpicture}
    \caption{}
    \end{subfigure}
    \caption{Panels (a) and (b) show the DAG corresponding to a SVAR(1) and a SVARMA(1, 1) respectively, where $\bx=(y_{1,t}, y_{2, t}, y_{3,t}, y_{1, t+1}, y_{2, t+1}, y_{3, t+1})'$. Compared to panel (a), panel (b) has edges from all shocks $\varepsilon_{\cdot,t}$ to all variables in period $t+1$, $y_{\cdot, t+1}$, where $\cdot$ is used as a placeholder.}
    \label{fig:svar-varma-examples}
\end{figure}

As each edge describes a direct effect, paths consisting of chained edges are interpreted as descriptions of causal flows. The collection $\calP_{\varepsilon_i, x_j}$ of all paths  connecting structural shock $\varepsilon_i$ to variable $x_j$ is a set of causal descriptions; so is any subset $P_{\varepsilon_i, x_j}$ of this collection. We are now ready to define transmission channels. 

\begin{definition}
    Let $\calG(\bB, \bOmega)$ be the graph induced by model \eqref{eq:general-model-xbx}. A subset of paths from $\varepsilon_i$ to $x_j$, $P_{\varepsilon_i, x_j}\subset\calP_{\varepsilon_i, x_j}$, is called a \emph{transmission channel}.    
    \label{def:transmission-channel}
\end{definition}

Through TCA we quantify transmission effects as the effects flowing through transmission channels. To this end, we define the effects along a specific path $p$ and the effects along a collection of paths $P_{\varepsilon_i, x_j}$, the latter leads to the definition of the transmission effect. 

\begin{definition}
    Let $\calG(\bB, \bOmega)$ be a graph induced by model \eqref{eq:general-model-xbx}, $\calP_{\varepsilon_i, x_j}$ be the collection of all paths in $\calG(\bB, \bOmega)$ from $\varepsilon_i$ to $x_j$, and $\xi$ be the shock size. 
    \begin{enumerate}
        \item The \emph{path-specific effect} of a path 
        $p\in\calP_{\varepsilon_i, x_j}$ is 
        \begin{equation*}
        \calQ_{\xi}(p) = \xi\sum_{u \to v \in p}\omega_{u,v}, 
        \end{equation*} 
        with  path coefficient $\omega_{u, v}$ of the edge  from node $u$ to node $v$ on path $p$.
        \item The \emph{total path-specific effect} of a sub-collection of paths $P_{\varepsilon_i, x_j}\subseteq \calP_{\varepsilon_i, x_j}$ is 
        \begin{equation*}
       \calQ_{\xi}(P_{\varepsilon_i, x_j}) = \sum_{p\in P_{\varepsilon_i, x_j}}\calQ_{\xi}(p).
        \end{equation*}
        \item The \emph{transmission effect} of the transmission channel $P_{\varepsilon_i, x_j}\subset \calP_{\varepsilon_i, x_j}$ is the total path-specific effect of $P_{\varepsilon_i, x_j}$.
    \end{enumerate}
    \label{def:transmission-effect}
\end{definition}

\begin{continueexample}{ex:recursive}
    The graph $\calG(\bB, \bOmega)$ corresponding to the recursive model of Section \ref{sec:sec2-recursive} is given in Figure \ref{fig:graph-simplified-model} panel (R). Due to its recursive nature, all shocks have only a single directed edge to one variable. The indirect channel of $\varepsilon^d_t$ through $\mathrm{\pi}_t$ onto $\mathrm{i}_t$,  in Figure \ref{fig:graph-simplified-model} panel (IE), is given by the set of paths $P_{\varepsilon^d_t, \mathrm{i}_t} = \{\varepsilon^d_t\to\mathrm{x}_t\to\mathrm{\pi}_t\to\mathrm{i}_t\}$ with total path-specific effect $\calQ_1(P_{\varepsilon^d_t, \mathrm{i}_t}) = \alpha_2\alpha_4$. This is the transmission effect of the indirect transmission channel.  
\end{continueexample}

\begin{continueexample}{ex:non-recursive}
    The graph $\calG(\bB, \bOmega)$ corresponding to the non-recursive model of Section \ref{sec:sec2-nonrecursive} is given in Figure \ref{fig:simplified-model-cyclic-reparameterised} panel (NR'). Due to its non-recursive nature, edges now exist from structural shocks to several variables. These edges  
   represent direct causal effects of a unit increase in $\varepsilon^d_t$. The indirect channel of
    $\varepsilon^d_t$ through $\mathrm{\pi}_t$ on $\mathrm{i}_t$ is  given by $P_{\varepsilon^d_t, \mathrm{i}_t} = \{\varepsilon^d_t\to\mathrm{x}_t\to\mathrm{\pi}_t\to\mathrm{i}_t, \varepsilon^d_t\to\mathrm{\pi}_t\to\mathrm{i}_t\}$, see Figure  \ref{fig:simplified-model-cyclic-reparameterised} panel (IE). The total path-specific effect of $P_{\varepsilon^d_t, \mathrm{i}_t}$, and thus the transmission effect of the indirect channel, is given by $\calQ_1(P_{\varepsilon^d_t, \mathrm{i}_t}) = (\alpha_2\alpha_4)/((1+\alpha_1^2)\eta)$.
\end{continueexample}

Having formally defined TCA in dynamic models for system \eqref{eq:general-model-xbx} and its associated graph $\calG(\bB, \bOmega)$ (Definitions \ref{def:transmission-channel} and \ref{def:transmission-effect}), we provide some intuition for our three main results of Section \ref{sec:sec4-main-results}. For ease of exposition, consider a transmission channel $P_{\varepsilon_i, x_j}$ from $\varepsilon_i$ to $x_j$ through $x_k$ consisting of all paths starting at $\varepsilon_i$, going to $x_k$ and ending in $x_j$. First, note that the total effect of $\varepsilon_i$ on $x_j$ equals the total path-specific effect of $\calP_{\varepsilon_i, x_j}$. This ensures the coherence of our framework.

Second, the transmission channels of the structural shock $\varepsilon_i$ only require $\varepsilon_i$ to be structurally identified. Edges leaving the structural shock $\varepsilon_i$ are causal effects of $\varepsilon_i$, thereby requiring structural identification. Edges leaving any intermediate variable $x_m$ on a path $p\in P_{\varepsilon_i, x_j}$, however, are effects of a unit increase in $x_m$, irrespective of the cause, and do not require structural identification. Only the single edge on any path $p\in P_{\varepsilon_i, x_j}$ originating from $\varepsilon_i$ requires structural identification. 

Third, all transmission effects decomposing the total effect of $\varepsilon_i$ on $x_j$ can be calculated by combining the structural impulse responses of $\varepsilon_i$ together with non-structural impulse responses; thus, impulse responses are sufficient for the calculation of any transmission effect. All paths $p$ corresponding to the transmission channel $P_{\varepsilon_i, x_j}$ go through the intermediate variable $x_k$, through which the transmission channel $P_{\varepsilon_i, x_j}$ can be split. The first part consists of all paths connecting $\varepsilon_i$ to $x_k$. The effect through these paths -- the transmission effect -- is the total effect of $\varepsilon_i$ on $x_k$ obtained through the structural impulse response $\bPhi_{k,i}$. The second part consists of all paths connecting $x_k$ to $x_j$ whose transmission effect corresponds to the total effect of a unit increase in variable $x_k$ on $x_j$; in other words, the reduced-form impulse response. Next, we prove these main findings using the potential outcomes framework.

% Main results
\section{Main Results}
\label{sec:sec4}

In this section, we establish our main results on TCA. To this end, we use the potential outcomes framework as a convenient mathematical tool. In Section \ref{sec:sec4-potential-outcomes}, we introduce potential outcomes for TCA and show its equivalence with the graphical definitions and representation of TCA. In Section \ref{sec:sec4-main-results}, we then show that (i) the total effect of $\varepsilon_i$ on $x_j$ is the total path-specific effect of $\calP_{\varepsilon_i, x_j}$, (ii) IRFs are sufficient statistics to compute transmission effects since structural IRFs can be combined with IRFs to Cholesky-orthogonalised shocks 
to obtain any transmission effect, and (iii) only $\varepsilon_i$, the shock of interest, needs to be structurally identified for transmission effects to be identified. 
Although transmission channels are generally determined by the chosen transmission matrix $\bT$, we discuss in Section \ref{sec:sec4-transmission-matrix} the conditions under which transmission effects remain invariant to the choice of some specific transmission matrices.
All proofs are provided in Appendix \ref{appendix:proofs}.

\subsection{Transmission Effects as Potential Outcomes}
\label{sec:sec4-potential-outcomes}

We introduce potential outcomes for TCA by adapting notation used in the mediation analysis literature \citep[][]{danielCausalMediationAnalysis2015}. We show that the potential outcomes framework and the more intuitive, graphical framework for TCA used in Section \ref{sec:sec3-general-framework} are equivalent. Throughout this section, we consider the effects from the structural shock $\varepsilon_i$ to the variable $x_j$.

\subsubsection{Notation}

Recall system \eqref{eq:general-model-xbx}, $\bx = \bB\bx + \bOmega\bvarepsilon$. Since $\bB$ is lower-triangular with zeros on the diagonal, $x_j$ directly depends on the realisations of $x_i$ for $i<j$, which, in turn, directly depends on the realisations of $x_k$ for $k<i$. All variables may, additionally, depend on realisations of the shock of interest $\varepsilon_i$. However, shocks themselves are independent of all other variables and shocks. Thus, chains can be traced back from the variable of interest $x_j$ to the structural shock $\varepsilon_i$. Potential outcomes are statements about these chains, with the potential outcome for $x_j$ taking the form
\begin{equation*}
x_j(x_1(\epsilon_1), x_2(x_1(\epsilon_2), \epsilon_3), \dots) = x_j^*(\epsilon_1, \epsilon_2, \epsilon_3, \dots, \epsilon_{\eta(j)}) = x_j(\bepsilon^{(j)}),     
\end{equation*}
where $\bepsilon^{(j)}$ is the potential outcomes assignment vector. The elements of this vector, $\epsilon_l$ for $l=1\dots \eta(j)$, are the realisations of the structural shock experienced by the $l$th nested chain, where $\eta(j)=2^{j-1}$ is the number of chains moving back from $x_j$ to $\varepsilon_i$. 

\begin{continueexample}{ex:non-recursive}
Consider the non-recursive model in equation \eqref{eq:sec2-dsge-static} with associated graph in Figure \ref{fig:sec4-po-example} panel (a), where all except the demand shock are omitted for clarity. The first step to introducing potential outcomes is to bring the model into general notation: $\varepsilon_1=\varepsilon_t^d$, $x_1=\mathrm{x}_t$ , $x_2=\mathrm{\pi}_t$ and $x_3=\mathrm{i}_t$, see Figure \ref{fig:sec4-po-example} panel (a'). The potential outcome for the interest rate $x_3$ is then given by
\begin{equation*}
    x_3(x_1(\epsilon_1), x_2(x_1(\epsilon_2), \epsilon_3), \epsilon_4) = x_3^*(\epsilon_1, \epsilon_2, \epsilon_3, \epsilon_4) = x_3^*(\bepsilon^{(3)}),
\end{equation*}
where $\epsilon_l$ ($l=1, \ldots, 4$) are the realisations of the structural shock $\varepsilon_1$. Note that the different realisations of the structural shock do not have to be the same.

Figure \ref{fig:sec4-po-example} panels (b)-(e) depict the nested chains corresponding to the shock realisations $\epsilon_1$ to $\epsilon_4$, respectively. Panel (b) shows that $x_3$  depends on the realisation of $x_1$, which in turn depends on a realisation of the structural shock $\varepsilon_1$, denoted by $\epsilon_1$. Panels (c) and (d) both start from the dependence of $x_3$ on the realisation of $x_2$. Panel (c) considers the chain where $x_2$ depends on the realisation of $x_1$, which in turn depends on the realisation of the structural shock $\epsilon_2$. Panel (d) considers the chain where $x_2$ depends on the realisation of the structural shock $\epsilon_3$. Lastly, panel (e) shows that $x_3$ directly depends on the realisation of structural shock $\epsilon_4$.
\end{continueexample}

\begin{figure}
    \centering
    \begin{subfigure}{0.14\textwidth}
       \centering 
        \begin{tikzpicture}[scale=0.6, transform shape]
            \node[circle, draw, line width=1pt, minimum size=0.8cm] (x) at (0, 4) {$\mathrm{x}$};
            \node[circle, draw, line width=1pt, minimum size=0.8cm] (pi) at (0, 2) {$\mathrm{\pi}$};
            \node[circle, draw, line width=1pt, minimum size=0.8cm] (i) at (0, 0) {$\mathrm{i}$};
            
            \draw[-latex, line width=1pt] (x) -- (pi);
            \draw[-latex, line width=1pt] (pi) -- (i); 
            \draw[-latex, line width=1pt] (x) edge [in=45, out=-45] (i);
    
            \node[rectangle, draw=blue!50, fill=blue!20, line width=0.5pt, minimum size=0.3cm] (ed) at (-1.5, 4) {$\varepsilon^d$};
    
            \draw[-latex, line width=1pt] (ed) -- (x); 
            \draw[-latex, line width=1pt] (ed) -- (pi); 
            \draw[-latex, line width=1pt] (ed) -- (i); 
    
            \useasboundingbox (-2, -2) rectangle (6, 6);
        \end{tikzpicture}
        \caption{}
    \end{subfigure}
    \hfill 
    \begin{subfigure}{0.14\textwidth}
       \centering 
        \begin{tikzpicture}[scale=0.6, transform shape]
            \node[circle, draw, line width=1pt, minimum size=0.8cm] (x1) at (0, 4) {$x_1$};
            \node[circle, draw, line width=1pt, minimum size=0.8cm] (x2) at (0, 2) {$x_2$};
            \node[circle, draw, line width=1pt, minimum size=0.8cm] (x3) at (0, 0) {$x_3$};
            
            \draw[-latex, line width=1pt] (x1) -- (x2);
            \draw[-latex, line width=1pt] (x2) -- (x3); 
            \draw[-latex, line width=1pt] (x1) edge [in=45, out=-45] (x3);
    
            \node[rectangle, draw=blue!50, fill=blue!20, line width=0.5pt, minimum size=0.3cm] (e1) at (-1.5, 4) {$\varepsilon_1$};
    
            \draw[-latex, line width=1pt] (e1) -- (x1); 
            \draw[-latex, line width=1pt] (e1) -- (x2); 
            \draw[-latex, line width=1pt] (e1) -- (x3); 
            
            \useasboundingbox (-2, -2) rectangle (6, 6);
        \end{tikzpicture}
        \caption*{(a')}
    \end{subfigure}
    \hfill 
    \begin{subfigure}{0.14\textwidth}
        \centering
        \begin{tikzpicture}[scale=0.6, transform shape]
            \node[circle, draw, line width=1pt, minimum size=0.8cm] (x1) at (0, 4) {$x_1$};
            \node[circle, draw, line width=1pt, minimum size=0.8cm, black!30] (x2) at (0, 2) {$x_2$};
            \node[circle, draw, line width=1pt, minimum size=0.8cm] (x3) at (0, 0) {$x_3$};
            
            \draw[-latex, line width=0.8pt, black!30] (x1) -- (x2);
            \draw[-latex, line width=0.8pt, black!30] (x2) -- (x3); 
            \draw[-latex, line width=1pt, black] (x1) edge [in=45, out=-45] (x3);
    
            \node[rectangle, draw=blue!50, fill=blue!20, line width=0.5pt, minimum size=0.3cm] (e1) at (-1.5, 4) {$\varepsilon_1$};
    
            \draw[-latex, line width=1pt, black] (e1) -- (x1); 
            \draw[-latex, line width=0.8pt, black!30] (e1) -- (x2); 
            \draw[-latex, line width=0.8pt, black!30] (e1) -- (x3); 
    
            \node () at (-0.75, -1) {\footnotesize$x_3(x_1(\epsilon_1), \textcolor{black!30}{x_2(x_1(\epsilon_2), \epsilon_3), \epsilon_4})$};
            \node () at (-0.75, -1.5) {\footnotesize$=x_3^*(\epsilon_1, \textcolor{black!30}{\epsilon_2, \epsilon_3, \epsilon_4})$};
    
            \useasboundingbox (-2, -2) rectangle (6, 6);
        \end{tikzpicture}
        \caption{}
    \end{subfigure}
    \hfill
    \begin{subfigure}{0.14\textwidth}
        \centering
        \begin{tikzpicture}[scale=0.6, transform shape]
            \node[circle, draw, line width=1pt, minimum size=0.8cm] (x1) at (0, 4) {$x_1$};
            \node[circle, draw, line width=1pt, minimum size=0.8cm] (x2) at (0, 2) {$x_2$};
            \node[circle, draw, line width=1pt, minimum size=0.8cm] (x3) at (0, 0) {$x_3$};
            
            \draw[-latex, line width=1pt, black] (x1) -- (x2);
            \draw[-latex, line width=1pt, black] (x2) -- (x3); 
            \draw[-latex, line width=0.8pt, black!30] (x1) edge [in=45, out=-45] (x3);
    
            \node[rectangle, draw=blue!50, fill=blue!20, line width=0.5pt, minimum size=0.3cm] (e1) at (-1.5, 4) {$\varepsilon_1$};
    
            \draw[-latex, line width=1pt, black] (e1) -- (x1); 
            \draw[-latex, line width=0.8pt, black!30] (e1) -- (x2); 
            \draw[-latex, line width=0.8pt, black!30] (e1) -- (x3); 
    
            \node () at (-0.75, -1) {\footnotesize$x_3(\textcolor{black!30}{x_1(\epsilon_1)}, x_2(x_1(\epsilon_2)\textcolor{black!30}{, \epsilon_3})\textcolor{black!30}{, \epsilon_4})$};
            \node () at (-0.75, -1.5) {\footnotesize$=x_3^*(\textcolor{black!30}{\epsilon_1,} \epsilon_2\textcolor{black!30}{, \epsilon_3, \epsilon_4})$};
    
            \useasboundingbox (-2, -2) rectangle (6, 6);
        \end{tikzpicture}
        \caption{}
    \end{subfigure}
    \hfill
    \begin{subfigure}{0.14\textwidth}
        \centering
        \begin{tikzpicture}[scale=0.6, transform shape]
            \node[circle, draw, line width=1pt, minimum size=0.8cm, black!30] (x1) at (0, 4) {$x_1$};
            \node[circle, draw, line width=1pt, minimum size=0.8cm] (x2) at (0, 2) {$x_2$};
            \node[circle, draw, line width=1pt, minimum size=0.8cm] (x3) at (0, 0) {$x_3$};
            
            \draw[-latex, line width=0.8pt, black!30] (x1) -- (x2);
            \draw[-latex, line width=1pt, black] (x2) -- (x3); 
            \draw[-latex, line width=0.8pt, black!30] (x1) edge [in=45, out=-45] (x3);
    
            \node[rectangle, draw=blue!50, fill=blue!20, line width=0.5pt, minimum size=0.3cm] (e1) at (-1.5, 4) {$\varepsilon_1$};
    
            \draw[-latex, line width=0.8pt, black!30] (e1) -- (x1); 
            \draw[-latex, line width=1pt, black] (e1) -- (x2); 
            \draw[-latex, line width=0.8pt, black!30] (e1) -- (x3); 
    
            \node () at (-0.75, -1) {\footnotesize$x_3(\textcolor{black!30}{x_1(\epsilon_1)}, x_2(\textcolor{black!30}{x_1(\epsilon_2),} \epsilon_3)\textcolor{black!30}{, \epsilon_4})$};
            \node () at (-0.75, -1.5) {\footnotesize$=x_3^*(\textcolor{black!30}{\epsilon_1, \epsilon_2}, \epsilon_3\textcolor{black!30}{, \epsilon_4})$};
    
            \useasboundingbox (-2, -2) rectangle (6, 6);
        \end{tikzpicture}
        \caption{}
        \end{subfigure}
        \hfill
        \begin{subfigure}{0.14\textwidth}
            \centering
            \begin{tikzpicture}[scale=0.6, transform shape]
                \node[circle, draw, line width=1pt, minimum size=0.8cm, black!30] (x1) at (0, 4) {$x_1$};
                \node[circle, draw, line width=1pt, minimum size=0.8cm, black!30] (x2) at (0, 2) {$x_2$};
                \node[circle, draw, line width=1pt, minimum size=0.8cm] (x3) at (0, 0) {$x_3$};
                
                \draw[-latex, line width=0.8pt, black!30] (x1) -- (x2);
                \draw[-latex, line width=0.8pt, black!30] (x2) -- (x3); 
                \draw[-latex, line width=0.8pt, black!30] (x1) edge [in=45, out=-45] (x3);
        
                \node[rectangle, draw=blue!50, fill=blue!20, line width=0.5pt, minimum size=0.3cm] (e1) at (-1.5, 4) {$\varepsilon_1$};
        
                \draw[-latex, line width=0.8pt, black!30] (e1) -- (x1); 
                \draw[-latex, line width=0.8pt, black!30] (e1) -- (x2); 
                \draw[-latex, line width=1pt, black] (e1) -- (x3); 
        
                \node () at (-0.75, -1) {\footnotesize$x_3(\textcolor{black!30}{x_1(\epsilon_1), x_2(x_1(\epsilon_2), \epsilon_3), }\epsilon_4)$};
                \node () at (-0.75, -1.5) {\footnotesize$=x_3^*(\textcolor{black!30}{\epsilon_1, \epsilon_2, \epsilon_3, }\epsilon_4)$};

                \useasboundingbox (-2, -2) rectangle (6, 6);
            \end{tikzpicture}
            \caption{}
        \end{subfigure}
    \caption{Panel (a) shows the DAG of the non-recursive model in Example \ref{ex:non-recursive} with all except the demand shock omitted, while panel (a') shows the same DAG but in general notation. Panels (b) through (e) show, on top, the paths (in black) corresponding to each element in the potential outcome and its assignment vector (highlighted in black below the graph).}
    \label{fig:sec4-po-example}
\end{figure}

As evident from the example, keeping track of the various dependencies and the corresponding shock realisations in the potential outcomes assignment vector $\bepsilon^{(j)}$ quickly leads to notational clutter. For notational convenience, we therefore introduce additional indexing notation to keep track of elements in the assignment vector that correspond to dependencies involving certain intermediate variables. Defining $H(j)=2^j-1$, the entries $H(k_1-1)+1$ to $H(k_1)$ in the assignment vector $\bepsilon^{(j)}$ correspond to the  dependencies of the variable of interest $x_j$ on variable $x_{k_1}$, which we jointly collect in $\bepsilon^{(j\cdot k_1)}=(\epsilon^{(j)}_{H(k_1-1)+1}, ..., \epsilon^{(j)}_{H(k_1)})$. In the subset of  dependencies of $x_j$ on $x_{k_1}$, we can further single out the dependencies on another variable $x_{k_2}$,  $k_2 < k_1$, that is ranked higher up in the system. Then $\bepsilon^{(j\cdot k_1\cdot k_2)}$ collects the entries in the assignment vector corresponding to the  dependencies of $x_j$ on $x_{k_1}$ and $x_{k_1}$ in turn on $x_{k_2}$. Continuing in this way, we thus define
\begin{equation*}
    \begin{split}
        \bepsilon^{(j)} &= (\epsilon_1, ..., \epsilon_{\eta(j)})  \\
        \bepsilon^{(j\cdot k_1)} &= (\epsilon^{(j)}_{H(k_1-1)+1}, ..., \epsilon^{(j)}_{H(k_1)}) \\
        \bepsilon^{(j\cdot k_1\cdot k_2)} &= (\epsilon^{(j\cdot k_1)}_{H(k_2-1)+1}, ..., \epsilon^{(j\cdot k_1)}_{H(k_2)}) \\
        \vdots
    \end{split}
    \label{eq:index-notation}
\end{equation*}

\begin{continueexample}{ex:non-recursive}
    Figure \ref{fig:sec4-po-example} panels (c) and (d) show the two nested dependencies of $x_3$ on the intermediate variable $x_2$. The two elements in the potential outcomes assignment vector $\bepsilon^{(3)}$ corresponding to the shock realisations experienced by these two nested dependencies are $\bepsilon^{(3\cdot 2)}=(\epsilon_2, \epsilon_3)$. We can then further single out the element in the potential outcome assignment vector that corresponds to the dependency of  $x_3$ on $x_2$ and $x_2$ on $x_1$, as given by $\epsilon^{(3\cdot 2\cdot 1)}=\epsilon_2$. Note that we do not introduce additional notation for the final dependency on the structural shock since this is by definition always the last element.
\end{continueexample}

\subsubsection{Potential Transmission Channels}

Each potential outcomes assignment vector $\bepsilon^{(j)}$ defines a thought experiment in which some  chains experience a shock while others may not. The most basic thought experiment is the one in which all $\epsilon_i=\xi$, implying that all  chains experience the same shock realisation. A conceptually different thought experiment is one in which $\epsilon_k=0$, for some $1 \leq k \leq \eta(j)$, implying that some of the nested  dependencies may experience a non-zero shock of size $\xi$, while others do not. Causality is then carried along a subset of all nested  chains -- equivalently, along a subset of paths through the graph -- in other words, along a transmission channel. Throughout the paper we assume that a nested  chain is either shut off ($\epsilon_k=0$) or is active and together with all other active chains experiences the same realised shock size ($\epsilon_l=\xi)$, as formalised in Assumption \ref{assump:equal-treatment}.

\begin{assumption}
    The potential outcomes assignment vector $\bepsilon^{(j)}$ consists of components $\epsilon_i\in\{0, \xi\}$ for all $1\leq i \leq \eta(j)$.
    \label{assump:equal-treatment}
\end{assumption}

We can now define transmission channels in potential outcome notation.
\begin{definition} \label{def:po-transmission-channel}
     Let $\bepsilon^{(j)}$ be a potential outcomes assignment vector with $\epsilon_i$ satisfying Assumption \ref{assump:equal-treatment}. If $\epsilon_k=0$ and $\epsilon_l=\xi$ for some $k, l$, then $\bepsilon^{(j)}$ is a transmission channel. 
\end{definition}

\begin{continueexample}{ex:non-recursive}
   Consider the potential outcomes assignment vector $\bepsilon^{(3)}$ for the non-recursive case in Figure \ref{fig:sec4-po-example}. If $\bepsilon^{(3)}=(0, 1, 0, 0)$ implying $\epsilon^{(3\cdot 2\cdot 1)}=1$, then only the second nested  chain experiences a shock. This chain is shown in Figure \ref{fig:sec4-po-example} panel (c) and corresponds to a transmission channel in which the shock changes $x_1$ which in turn changes $x_2$ which finally changes $x_3$. If $\bepsilon^{(3)}=(0, 1, 1, 0)$, then we obtain the indirect transmission channel, namely 
   the combination of the second and third transmission channel shown in Figure \ref{fig:sec4-po-example} panels (c) and (d) respectively.  
\end{continueexample}

\subsubsection{Causal Effects}

The causal effect of a potential outcomes assignment vector measures the effect of a thought experiment by comparing  it to a baseline. Under Assumption \ref{assump:equal-treatment}, a natural baseline is $\epsilon_i=0$ for all $i=1, \dots, \eta(j)$ such that no chain experiences a shock. If $\bepsilon^{(j)}$ is a transmission channel according to Definition \ref{def:po-transmission-channel}, then the transmission effect compares the reaction of $x_j$ under the thought experiment, $x_j^*(\bepsilon^{(j)})$, to the reaction of $x_j$ under the baseline, $x_j^*(\bm 0)$. This is the causal effect of $\bepsilon^{(j)}$ on $x_j$. 

\begin{definition} \label{def:po-transmission-effect}
    The causal effect of 
    $\varepsilon_i$ on 
    $x_j$ is given by 
    \begin{equation*}
        \calC_{\varepsilon_i, x_j}(\bepsilon^{(j)}) = \E[x^*_j(\bepsilon^{(j)}) - x^*_j(\bm 0)],
    \end{equation*}
    where $\bepsilon^{(j)}$ is a potential outcomes assignment vector and $\bm 0$ is a conformable vector of zeros. If $\bepsilon^{(j)}$ is a transmission channel according to Definition \ref{def:po-transmission-channel}, then $\calC_{\varepsilon_i, x_j}(\bepsilon^{(j)})$ is the transmission effect of that transmission channel. 
\end{definition}

\begin{continueexample}{ex:non-recursive}
    Let $\bepsilon^{(3)}=(e_1, e_2, e_3, e_4)$. Then 
    \begin{equation*}
        \begin{split}
            \E[x_3^*(\bepsilon^{(3)})] &= B_{3,1}\E[x_1^*(\epsilon^{(3\cdot 1)})] + B_{3,2}\E[x_2^*(\bepsilon^{(3\cdot 2)})] + \Omega_{3,1}e_4 \\ 
            &= B_{3,1}\Omega_{1,1}e_1 + B_{3,2}B_{2,1}\E[x_1^*(\epsilon^{(3\cdot 2\cdot 1)})] + B_{3,2}\Omega_{2,1}e_3 + \Omega_{3, 1}e_4 \\ 
            &= B_{3,1}\Omega_{1,1}e_1 + B_{3,2}B_{2,1}\Omega_{1, 1}e_2 + B_{3,2}\Omega_{2,1}e_3 + \Omega_{3, 1}e_4
        \end{split}
    \end{equation*}
    The transmission effect of the indirect transmission channel, $\bepsilon^{(3)}=(0, 1, 1, 0)$ is therefore $\calC_{\varepsilon_i, x_j}(\bepsilon^{(3)})=B_{3,2}(B_{2,1}\Omega_{1,1} + \Omega_{2,1})=(\alpha_2\alpha_4)/((1+\alpha_1^2)\eta)$, while the transmission effect of the direct transmission channel, $\bepsilon^{(3)}=(1, 0, 0, 1)$, is given by  $\calC_{\varepsilon_i, x_j}(\bepsilon^{(3)})=B_{3,1}\Omega_{1,1} + \Omega_{3,1} = (\alpha_3 + \alpha_1(1-\eta))/((1+\alpha_1^2)\eta)$.
\end{continueexample}

\subsubsection{Equivalence to the Graphical Framework}

Finally, in Theorem \ref{thm:framework-equivalence}, we show that transmission channels defined in Definitions \ref{def:transmission-channel} and \ref{def:po-transmission-channel}, and transmission effects defined in Definitions \ref{def:transmission-effect} and \ref{def:po-transmission-effect} are equivalent.

\begin{theorem}
    Assume $\by_t$ is generated by model \eqref{eq:general-model} and Assumptions \ref{assump:general-model-structural-shocks} to \ref{assump:equal-treatment} hold, such that $\by_t$ can be equivalently represented in static form \eqref{eq:general-model-xbx}. Let $\calG(\bB, \bOmega)$ be the graph induced by the model. Consider the effect from the structural shock $\varepsilon_i$ to the variable of interest $x_j$, then we have the following equivalence:
    \begin{enumerate}[label=(\roman*)]
        \item Given a collection of paths $P_{\varepsilon_i, x_j}\subseteq\calP_{\varepsilon_i, x_j}$, there exists a potential outcomes assignment vector $\bepsilon^{(j)}$ such that $\calQ_{\xi}(P_{\varepsilon_j, x_j})=\calC_{\varepsilon_i, x_j}(\bepsilon^{(j)})$. 
        \item Given a potential outcomes assignment vector $\bepsilon^{(j)}$, there exists a collection of paths $P_{\varepsilon_i, x_j}\subseteq\calP_{\varepsilon_i, x_j}$ such that $\calC_{\varepsilon_i, x_j}(\bepsilon^{(j)})=\calQ_{\xi}(P_{\varepsilon_j, x_j})$. 
    \end{enumerate}
    \label{thm:framework-equivalence}
\end{theorem}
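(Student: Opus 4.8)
The plan is to reduce the claimed equivalence to a single combinatorial fact: the components of the potential outcomes assignment vector $\bepsilon^{(j)}$ are in bijection with the paths in $\calP_{\varepsilon_i, x_j}$, in such a way that the linear coefficient multiplying $\epsilon_l$ in the expansion of $\E[x_j^*(\bepsilon^{(j)})]$ equals the product of edge coefficients along the corresponding path $p_l$. Once this is established, both directions are immediate. By Definition~\ref{def:po-transmission-effect} the causal effect is $\calC_{\varepsilon_i,x_j}(\bepsilon^{(j)}) = \E[x_j^*(\bepsilon^{(j)})]-\E[x_j^*(\bm 0)]$, and since the system $\bx=\bB\bx+\bOmega\bvarepsilon$ is linear and homogeneous in the shocks with $\E[\varepsilon_{i'}]=0$ for $i'\neq i$, the baseline term vanishes. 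Under Assumption~\ref{assump:equal-treatment} every $\epsilon_l\in\{0,\xi\}$, so $\calC_{\varepsilon_i,x_j}(\bepsilon^{(j)})=\xi\sum_{l:\epsilon_l=\xi} c_l$, where $c_l$ denotes the path-coefficient product of $p_l$; this is exactly the total path-specific effect $\calQ_{\xi}$ of the channel $\{p_l:\epsilon_l=\xi\}$ (Definition~\ref{def:transmission-effect}). Reading this correspondence from a given channel to its assignment vector yields~(i), and from a given assignment vector to its channel yields~(ii).

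First I would set up the expansion. Because $\bB$ is strictly lower-triangular, the $j$th row of $\bx=\bB\bx+\bOmega\bvarepsilon$ reads $x_j=\sum_{k<j}B_{j,k}x_k+\Omega_{j,i}\varepsilon_i+(\text{terms in other shocks})$. Taking expectations and recalling that each potential outcome tracks only the realisations of $\varepsilon_i$ fed along its nested chains, this yields the recursion $\E[x_j^*(\bepsilon^{(j)})]=\sum_{k=1}^{j-1}B_{j,k}\,\E[x_k^*(\bepsilon^{(j\cdot k)})]+\Omega_{j,i}\,\epsilon^{(j)}_{\eta(j)}$, where the sub-vector $\bepsilon^{(j\cdot k)}$ is precisely the block of components indexed $H(k-1)+1,\dots,H(k)$ singled out by the notation built around $H(k)=2^k-1$.

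The core argument is an induction on $j$. The base case $j=1$ is the single edge $\varepsilon_i\to x_1$ with $\E[x_1^*(\epsilon_1)]=\Omega_{1,i}\epsilon_1$ and $\eta(1)=1$. For the inductive step I would apply the hypothesis to each $x_k$ with $k<j$: multiplying the expansion of $\E[x_k^*(\bepsilon^{(j\cdot k)})]$ by $B_{j,k}$ turns every monomial (a path-coefficient product from $\varepsilon_i$ to $x_k$) into the product for the same sub-path extended by the edge $x_k\to x_j$, i.e.\ a path in $\calP_{\varepsilon_i, x_j}$ whose last edge enters $x_j$ from $x_k$; the final term $\Omega_{j,i}\epsilon^{(j)}_{\eta(j)}$ supplies the direct path $\varepsilon_i\to x_j$. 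Grouping paths by their last edge shows these monomials enumerate $\calP_{\varepsilon_i,x_j}$ once each, and the block-size identity $\sum_{k=1}^{j-1}\eta(k)+1=\sum_{k=1}^{j-1}2^{k-1}+1=2^{j-1}=\eta(j)$ confirms that the block partition of $\bepsilon^{(j)}$ matches this last-edge partition of the paths exactly, establishing both the bijection and the coefficient matching.

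The step I expect to be the main obstacle is exactly this alignment of the nested index notation with the recursive last-edge-into-$x_j$ decomposition of paths: one must check that the offsets $H(k-1)+1,\dots,H(k)$ carve $\bepsilon^{(j)}$ into blocks whose sizes $\eta(k)=2^{k-1}$ coincide with the number of paths from $\varepsilon_i$ to $x_j$ whose penultimate node is $x_k$, and that within each block the inductive labelling of sub-paths remains consistent. A minor bookkeeping point to address is that when some $B_{j,k}$ or $\Omega_{m,i}$ vanishes the associated formal chain has coefficient $c_l=0$ and no corresponding edge in $\calG(\bB,\bOmega)$; such dead chains contribute zero to both $\calC_{\varepsilon_i,x_j}$ and $\calQ_{\xi}$ and are simply absent from $\calP_{\varepsilon_i,x_j}$, so the bijection is between the active chains and the genuine paths and the equivalence is unaffected.
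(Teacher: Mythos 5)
Your proof is correct and takes essentially the same route as the paper's: both expand the potential outcome via the recursion $\E[x_j^*(\bepsilon^{(j)})]=\sum_{k<j}\bB_{j,k}\E[x_k^*(\bepsilon^{(j\cdot k)})]+\bOmega_{j,i}\epsilon^{(j)}_{\eta(j)}$, identify each resulting monomial with a path whose coefficient is the product of its edge weights, and read the entries of the assignment vector as path selectors, which immediately gives both directions of the equivalence. The only differences are organisational — you run an induction on $j$ grouping paths by their last edge and verify the block-size identity $\sum_{k=1}^{j-1}\eta(k)+1=\eta(j)$, whereas the paper unrolls the recursion fully and groups terms by the number of intermediate variables — and your explicit handling of zero-coefficient (dead) chains makes precise a point the paper leaves implicit.
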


\begin{continueexample}{ex:non-recursive} 
   Figure \ref{fig:sec4-po-example} makes use of the equivalence result between the graphical and the potential outcomes frameworks. 
   Panels (c)-(e) show on top the path that is either shut-off ($\epsilon_i=0$; gray edges) or active ($\epsilon_i=\xi$; black edges) with the potential outcomes underneath highlighting which element in the potential outcomes assignment vector $\bepsilon^{(3)}$ corresponds to the path under consideration. 
\end{continueexample}

\subsection{Properties of TCA}
\label{sec:sec4-main-results}

The previous section introduced potential outcomes and showed in Theorem \ref{thm:framework-equivalence} that the potential outcomes and graphical framework for TCA  are equivalent. In this section, we continue with the potential outcomes framework and use its equivalence  with the graphical framework to derive three main results. We focus throughout on the (transmission) effect of the structural shock $\varepsilon_i$ on $x_j$.

\subsubsection{Total Effects}

First, since causal effects flow along paths in the associated graph $\calG(\bB, \bOmega)$, intuitively, the total causal effect should be the effect along all paths connecting $\varepsilon_i$ and $x_j$, $\calP_{\varepsilon_i, x_j}$. Theorem \ref{thm:total-effect-and-decomposition} shows that this is the case. 
\begin{theorem}\label{thm:total-effect-and-decomposition}
    Let $\bm 1_{\eta(j)}$ be a $\eta(j)$ dimensional vector of ones, $P^{(1)}_{\varepsilon_i, x_j}, \dots, P^{(k)}_{\varepsilon_i, x_j}$ be a partition of $\calP_{\varepsilon_i, x_j}$ and $\be_1\dots \be_k$ be $\eta(j)$ dimensional vectors in $\{0, \xi\}^{\eta(j)}$ such that $\sum_{m=1}^k \be_m = \xi\bm 1_{\eta(j)}$. Then, under the conditions of Theorem \ref{thm:framework-equivalence}, 
    \begin{enumerate}[label=(\roman*), ref=\ref{thm:total-effect-and-decomposition}(\roman*)]
        \item\label{thm2-part1} $\calQ_{\xi}(\calP_{\varepsilon_i, x_j})=\calC_{\varepsilon_i, x_j}(\xi \bm 1_{\eta(j)}) = \xi\bPhi_{j,i}$ 
        \item\label{thm2-part2} $\sum_{m=1}^k \calQ_{\xi}(P^{(m)}_{\varepsilon_i, x_j}) = \xi\bPhi_{j,i}$
        \item\label{thm2-part3} $\sum_{m=1}^k \calC_{\varepsilon_i, x_j}(\be_m) = \xi\bPhi_{j,i}$. 
    \end{enumerate}
\end{theorem}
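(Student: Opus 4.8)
The plan is to establish the three identities in the order (i), (ii), (iii), since each later part reduces to the earlier ones once part \ref{thm2-part1} is in hand. The analytic heart of the argument is part \ref{thm2-part1}, and here I would exploit that $\bB$ is lower-triangular with a zero diagonal, hence nilpotent on the finite system of dimension $N=K(h+1)$: we have $\bB^{N}=\bO$, so the Neumann series terminates and $(\bI-\bB)^{-1}=\sum_{m=0}^{N-1}\bB^{m}$. Substituting into $\bPhi=(\bI-\bB)^{-1}\bOmega$ gives $\bPhi_{j,i}=\sum_{m=0}^{N-1}(\bB^{m}\bOmega)_{j,i}$. I would then read each summand combinatorially: since $\bB_{v,u}=\omega_{x_u,x_v}$ and $\bOmega_{v,i}=\omega_{\varepsilon_i,x_v}$ by the construction of $\calG(\bB,\bOmega)$, the entry $(\bB^{m}\bOmega)_{j,i}$ expands as a sum over all index chains $i\to l_0\to\cdots\to l_m=j$ of the product of the corresponding edge coefficients, i.e.\ exactly the total over all $\varepsilon_i$-to-$x_j$ paths whose variable segment has length $m$ of $\prod_{u\to v\in p}\omega_{u,v}$. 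Summing over $m$ enumerates every path of $\calP_{\varepsilon_i,x_j}$ once, whence $\xi\bPhi_{j,i}=\xi\sum_{p\in\calP_{\varepsilon_i,x_j}}\prod_{u\to v\in p}\omega_{u,v}=\calQ_{\xi}(\calP_{\varepsilon_i,x_j})$. The remaining equality $\calC_{\varepsilon_i,x_j}(\xi\bm 1_{\eta(j)})=\xi\bPhi_{j,i}$ is immediate, since activating every chain at the common value $\xi$ recovers the genuine response of $x_j$ to a shock of size $\xi$ whose expectation is $\xi\bPhi_{j,i}$; alternatively it follows from Theorem \ref{thm:framework-equivalence}, noting that the all-paths collection corresponds to the all-active assignment.

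Part \ref{thm2-part2} then requires no further analytic input. By Definition \ref{def:transmission-effect} the functional $\calQ_{\xi}$ is additive over disjoint collections of paths, so for any partition $P^{(1)}_{\varepsilon_i,x_j},\dots,P^{(k)}_{\varepsilon_i,x_j}$ of $\calP_{\varepsilon_i,x_j}$ I would simply regroup the sum, obtaining $\sum_{m=1}^{k}\calQ_{\xi}(P^{(m)}_{\varepsilon_i,x_j})=\sum_{m=1}^{k}\sum_{p\in P^{(m)}_{\varepsilon_i,x_j}}\calQ_{\xi}(p)=\sum_{p\in\calP_{\varepsilon_i,x_j}}\calQ_{\xi}(p)=\calQ_{\xi}(\calP_{\varepsilon_i,x_j})$, which equals $\xi\bPhi_{j,i}$ by part \ref{thm2-part1}.

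For part \ref{thm2-part3} I would use linearity of the potential outcome in its assignment vector. Expanding the nested conditional means exactly as in the worked non-recursive example shows that $\E[x_j^*(\bepsilon^{(j)})]$ is an affine function $a+\sum_{l=1}^{\eta(j)}c_l\,\epsilon_l$ of the coordinates of $\bepsilon^{(j)}$, with each $c_l$ the product of edge coefficients along the $l$-th chain; subtracting the baseline $\E[x_j^*(\bm 0)]=a$ removes the constant, so $\calC_{\varepsilon_i,x_j}(\be)=\sum_l c_l\,e_l$ is a genuine linear form in $\be$. Since the hypotheses force $\sum_{m=1}^{k}\be_m=\xi\bm 1_{\eta(j)}$, this yields $\sum_{m=1}^{k}\calC_{\varepsilon_i,x_j}(\be_m)=\calC_{\varepsilon_i,x_j}\bigl(\sum_{m=1}^{k}\be_m\bigr)=\calC_{\varepsilon_i,x_j}(\xi\bm 1_{\eta(j)})=\xi\bPhi_{j,i}$ by part \ref{thm2-part1}. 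Equivalently, one may route through Theorem \ref{thm:framework-equivalence}: the constraints $\be_m\in\{0,\xi\}^{\eta(j)}$ with $\sum_m\be_m=\xi\bm 1_{\eta(j)}$ make the supports partition the coordinates, hence the associated path collections partition $\calP_{\varepsilon_i,x_j}$, reducing part \ref{thm2-part3} to part \ref{thm2-part2}.

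I expect the only real obstacle to sit in part \ref{thm2-part1}: making the identification between the index sequences produced by the matrix powers $\bB^{m}$ and the directed paths of $\calG(\bB,\bOmega)$ fully rigorous, and verifying that letting $m$ range from $0$ to $N-1$ counts each path exactly once. This last point is where acyclicity enters — every path is simple and of length at most $N$, so no path is double-counted and none is missed. Everything downstream is additivity of $\calQ_\xi$ or linearity of $\calC_{\varepsilon_i,x_j}$, and is routine.
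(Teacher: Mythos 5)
Your proposal is correct, but for part \ref{thm2-part1} it takes a genuinely different route from the paper. The paper's proof of part \ref{thm2-part1} leans entirely on Theorem \ref{thm:framework-equivalence}: setting $\bepsilon^{(j)}=\xi\bm 1_{\eta(j)}$ selects the full collection $\calP_{\varepsilon_i,x_j}$, so $\calC_{\varepsilon_i,x_j}(\xi\bm 1_{\eta(j)})=\calQ_\xi(\calP_{\varepsilon_i,x_j})$, and the value $\xi\bPhi_{j,i}$ is then obtained by identifying the all-active potential outcome with the actual response, $\calC_{\varepsilon_i,x_j}(\xi\bm 1_{\eta(j)})=\E[x_j\mid\varepsilon_i=\xi]-\E[x_j\mid\varepsilon_i=0]=\xi\bPhi_{j,i}$. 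You instead prove $\calQ_\xi(\calP_{\varepsilon_i,x_j})=\xi\bPhi_{j,i}$ directly by matrix algebra: nilpotency of the strictly lower-triangular $\bB$ terminates the Neumann series for $(\bI-\bB)^{-1}$, and each entry $(\bB^m\bOmega)_{j,i}$ enumerates exactly the paths whose variable segment has $m$ edges, each counted once. This is self-contained --- it never passes through potential outcomes --- and it makes explicit the combinatorial content that in the paper is buried in the enumeration "observations" inside the proof of Theorem \ref{thm:framework-equivalence}; the cost is that you partially re-derive what that theorem already supplies, and you still need it (or your brief conditional-expectation identification) to attach $\calC_{\varepsilon_i,x_j}(\xi\bm 1_{\eta(j)})$ to the chain of equalities. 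Parts \ref{thm2-part2} and \ref{thm2-part3} match the paper in substance: your direct regrouping in (ii) is the non-recursive version of the paper's repeated application of Lemma \ref{enum:prop-3}, and in (iii) the paper uses precisely the reduction you offer as an alternative (the supports of the $\be_m$ partition the coordinates, hence the induced path collections partition $\calP_{\varepsilon_i,x_j}$, reducing (iii) to (ii)), while your primary argument via linearity of $\be\mapsto\calC_{\varepsilon_i,x_j}(\be)$ is an equally valid shortcut, since the expected potential outcome is linear in the assignment vector with path products as coefficients.
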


\noindent Part (i) states that the total effect given by entry $\bPhi_{j,i}$ in the impulse response matrix is equal to the total path-specific effect of the collection of all paths $\calQ_1(\calP_{\varepsilon_i, x_j})$, which is in turn equal to the causal effect of the potential outcomes assignment vector $\bepsilon^{(j)}=\bm 1_{\eta(j)}$ by Theorem \ref{thm:framework-equivalence}. Multiplication by $\xi$ simply accounts for the shock size. Part (ii) states that the total path-specific effect of each set in the partition $P^{(1)}_{\varepsilon_i, x_j}, \dots, P^{(k)}_{\varepsilon_i, x_j}$ of $\calP_{\varepsilon_i, x_j}$ decomposes the total effect. Thus, since each set $P_{\varepsilon_i,x_j}^{(i)}$ in the partition is a transmission channel by Definition \ref{def:transmission-channel}, transmission effects of disjoint transmission channels decompose the total effect. Part (iii) implies the same statement for a set of potential outcomes assignment vectors $\be_1\dots \be_k$ such that $\sum_{m=1}^k \be_m = \xi\bm 1_{\eta(j)}$, which is the potential outcomes equivalent to the partition $P^{(1)}_{\varepsilon_i, x_j}, \dots, P^{(k)}_{\varepsilon_i, x_j}$. 

\subsubsection{IRF Sufficiency}

While Theorem \ref{thm:total-effect-and-decomposition} shows that total effects are decomposed by non-overlapping transmission channels, it says nothing about how the transmission effects can be obtained. This is the subject of Theorem \ref{thm:irf-sufficiency}. 
\begin{theorem}\label{thm:irf-sufficiency}
    Let $\tilde\bPhi$
    be the IRFs to Cholesky-orthogonalised shocks with the ordering of the variables as implied by $\bT$.
    Under the conditions of Theorem \ref{thm:framework-equivalence}, for some function $f: \R^{hK\times hK}\times \R^{hK\times hK} \to \R$, and for all $x_r$, $x_s$, $x_j$ and $\varepsilon_i$ with $r<s$, 
    \begin{enumerate}[label=(\roman*), ref=\ref{thm:irf-sufficiency}(\roman*)]
        \item\label{thm3-part1} $\calQ_{1}(\calP_{x_r, x_s}) = \tilde\bPhi_{s,r}\tilde\bPhi_{r,r}^{-1}$
        \item\label{thm3-part2} $\calQ_{\xi}(P_{\varepsilon_i, x_j}) = f(\bPhi, \tilde\bPhi)$.
    \end{enumerate}
\end{theorem}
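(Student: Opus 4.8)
The plan is to prove both parts by translating the graphical path algebra of $\calG(\bB,\bOmega)$ into matrix identities and then reading those identities off the impulse-response matrices. For part \ref{thm3-part1}, I would first observe that $\calQ_1(\calP_{x_r,x_s})$ sums, over all directed paths from $x_r$ to $x_s$, the product of the edge coefficients along each path, and that every edge coefficient between variable nodes is an entry of $\bB$. Since $\bB$ is strictly lower-triangular, a path of length $m$ from $r$ to $s$ contributes $(\bB^m)_{s,r}$, and summing over lengths gives the finite Neumann series $\calQ_1(\calP_{x_r,x_s})=\sum_{m\ge 1}(\bB^m)_{s,r}=[(\bI-\bB)^{-1}]_{s,r}$ for $r<s$. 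The task thus reduces to expressing $[(\bI-\bB)^{-1}]_{s,r}$ through the Cholesky IRFs $\tilde\bPhi$.

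Next I would pin down $\tilde\bPhi$. The reduced form of \eqref{eq:general-model-ql} has one-step innovations with covariance $\bL^{-1}(\bL^{-1})'$, whose lower-triangular Cholesky factor with positive diagonal is exactly $\bL^{-1}$; hence the Cholesky-orthogonalised shocks are $\bQ'\bvarepsilon_t$ and $\tilde\bPhi=\bPhi\,\mathrm{blockdiag}(\bQ)=(\bI-\bB)^{-1}\tilde\bOmega$, where $\tilde\bOmega=\bOmega\,\mathrm{blockdiag}(\bQ)$ is lower-triangular with diagonal blocks $\bD=\mathrm{diag}(\bL)^{-1}$. Two consequences follow: $\tilde\bPhi$ is lower-triangular with $\tilde\bPhi\tilde\bPhi'=\bPhi\bPhi'$, so $\tilde\bPhi$ is the Cholesky factor of $\bPhi\bPhi'$ and requires no structural identification; and its diagonal blocks equal $\bL^{-1}$, whence $\tilde\bPhi_{r,r}=\bD_{r,r}$. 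When $\tilde\bOmega=\mathrm{diag}(\tilde\bPhi)$ one gets $(\bI-\bB)^{-1}=\tilde\bPhi\,\mathrm{diag}(\tilde\bPhi)^{-1}$ and therefore $[(\bI-\bB)^{-1}]_{s,r}=\tilde\bPhi_{s,r}\tilde\bPhi_{r,r}^{-1}$, which is part \ref{thm3-part1}.

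For part \ref{thm3-part2} I would exploit that in a DAG the paths of a channel $P_{\varepsilon_i,x_j}$ routed through an intermediate variable $x_k$ factorise as (paths $\varepsilon_i\rightsquigarrow x_k$) followed by (paths $x_k\rightsquigarrow x_j$), so the total path-specific effect factorises into a product. By Theorem \ref{thm:total-effect-and-decomposition} the first factor is the structural total effect $\xi\bPhi_{k,i}$, while the second is the carrying effect $\calQ_1(\calP_{x_k,x_j})$, which part \ref{thm3-part1} expresses through $\tilde\bPhi$; for this channel the effect is literally $\xi\bPhi_{k,i}\tilde\bPhi_{j,k}\tilde\bPhi_{k,k}^{-1}$. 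A general channel is a finite sum of such through-node products, and since each path coefficient is an entry of $\bB$ (captured by $\tilde\bPhi$ via part \ref{thm3-part1}) combined with the structural initiation read off $\bPhi$, every transmission effect is a function $f(\bPhi,\tilde\bPhi)$; crucially only the column $\bPhi_{\cdot,i}$, i.e.\ the structural responses of the single shock $\varepsilon_i$, enters structurally, while $\tilde\bPhi$ is purely reduced-form.

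The main obstacle is the moving-average part. With $\bPsi_j\neq\bO$ the matrix $\tilde\bOmega$ acquires nonzero off-diagonal blocks $\bD\bQ'\bPsi_j\bQ$, so $\tilde\bOmega\neq\mathrm{diag}(\tilde\bPhi)$, and the naive quotient $\tilde\bPhi_{s,r}\tilde\bPhi_{r,r}^{-1}$ then also picks up MA propagation that does not correspond to a carrying effect through $x_r$ (in a univariate VARMA, $[(\bI-\bB)^{-1}]_{2,1}=a_1$ whereas the quotient returns $a_1+\psi_1$). The delicate step is therefore to isolate, inside $\tilde\bPhi$, the contemporaneous-and-autoregressive transmission encoded in $(\bI-\bB)^{-1}$ and the diagonal blocks $\bL^{-1}$ from the MA propagation sitting in the off-diagonal blocks of $\tilde\bOmega$; this separation is what renders the clean quotient in part \ref{thm3-part1} immediate in the SVAR case and is the load-bearing part of the recovery argument behind the function $f$ in part \ref{thm3-part2}.
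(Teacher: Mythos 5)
For part \ref{thm3-part1} your argument and the paper's are essentially the same proof in different notation: the paper passes to the auxiliary system $\bx=\tilde\bB\bx+\tilde\bOmega\bgamma$ with $\bgamma_t=\bQ'\bvarepsilon_t$ (the Cholesky shocks under the ordering $\bT$), notes $\tilde\bB=\bB$ so that all variable-to-variable path coefficients are unchanged, applies Theorem \ref{thm:total-effect-and-decomposition} to the shock $\gamma_r$ to get $\tilde\bPhi_{s,r}=\tilde\omega_{\gamma_r,x_r}\,\calQ_{1}(\calP_{x_r,x_s})$ and $\tilde\bPhi_{r,r}=\tilde\omega_{\gamma_r,x_r}$, and divides; your Neumann series together with the identity $(\bI-\bB)^{-1}=\tilde\bPhi\,\mathrm{diag}(\tilde\bPhi)^{-1}$ is exactly this computation in matrix form.

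The MA obstacle you flag is real, and you should know that the paper does not overcome it: its proof contains the assertion ``Since $\tilde\bOmega$ is diagonal, any path in $\calP_{\gamma_i,x_j}$ must always first go into $x_i$'', which holds only when every $\bPsi_j=\bO$. Otherwise $\tilde\bOmega$ has precisely the off-diagonal blocks $\bD\bQ'\bPsi_j\bQ$ you point to, the Cholesky shock has edges that skip $x_r$, and the factorisation breaks. Your univariate ARMA$(1,1)$ computation ($\calQ_{1}(\calP_{x_1,x_2})=a_1$ versus $\tilde\bPhi_{2,1}\tilde\bPhi_{1,1}^{-1}=a_1+\psi_1$) is a genuine counterexample to the displayed identity at the stated level of generality. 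So what you honestly label an unresolved ``delicate step'' is in fact a gap in the paper's own proof; both your argument and the paper's are valid for the SVAR subcase $q=0$, where $\tilde\bOmega$ is indeed diagonal. You are not missing an idea that the paper supplies.

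Where your proposal genuinely falls short of the paper is part \ref{thm3-part2}. The claim that ``a general channel is a finite sum of such through-node products'' is false as written: the paper's leading channels are defined by exclusions (all paths \emph{not} passing through some $x_k$), which are differences, not sums, of through-node effects, and even ``through $x_{k_1}$ and $x_{k_2}$'' factorises into total effects only once the conditioning nodes are ordered. The paper instead encodes an arbitrary channel as a Boolean formula, rewrites it in disjunctive normal form, and inducts on the number of disjunctions and negations, using the path algebra of Lemma \ref{lem:properties} (the conjunction rule \ref{enum:prop-2}, the union/difference rules \ref{enum:prop-3}--\ref{enum:prop-7}, and the complement rule \ref{enum:prop-8}) to reduce every term to products of total effects, each an entry of $\bPhi$ or a ratio from $\tilde\bPhi$ by Theorem \ref{thm:total-effect-and-decomposition} and part \ref{thm3-part1}. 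Your parenthetical observation that $\bB=\bI-\mathrm{diag}(\tilde\bPhi)\tilde\bPhi^{-1}$ and $\bOmega_{\cdot,i}=[(\bI-\bB)\bPhi]_{\cdot,i}$ are themselves functions of $(\bPhi,\tilde\bPhi)$ would rescue existence of $f$ in the SVAR case with far less machinery, but you leave it implicit, and it delivers only existence, not the explicit constructive form of $f$ that the paper's induction provides and on which Theorem \ref{thm:single-shock} and the algorithms in Appendix \ref{appendix:computation} rely.
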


\noindent Theorem \ref{thm:irf-sufficiency} shows that the transmission effect of any transmission channel is a combination of the impulse response matrices $\bPhi$ and $\tilde\bPhi$; thus, IRFs are sufficient statistics for the calculation of transmission effects. This extends sufficiency results of IRFs for FEVDs \citep{kilianStructuralVectorAutoregressive2017}, policy counterfactual \citep{mckayWhatCanTime2023} and  
policy optimality \citep{barnichonSufficientStatisticsApproach2023} analysis to TCA. 

The specific functional form $f(\cdot)$ of the transmission effects is application-dependent; Appendix \ref{appendix:computation} gives computational algorithms to obtain the functional form and to compute the transmission effect. Note also that although the sufficiency result makes use of Cholesky-orthogonalised IRFs, the Cholesky orthogonalisation is not used to identify a structural shock; it is simply a computational tool. Only the impulse response matrix $\bPhi$ is structural, while the matrix $\tilde\bPhi$ contains impulse responses to Cholesky-orthogonalised shocks -- they are orthogonalised with respect to the transmission matrix $\bT$ rather than a structural ordering.
This means that 
$\tilde\bPhi$ and thus, in turn, the transmission effects 
are sensitive to the ordering in $\bT$. 
However, as Section \ref{sec:sec4-transmission-matrix} shows, since $f(\cdot)$ often does not make use of all columns of $\tilde\bPhi$, the transmission effect is often invariant to re-ordering of some variables; equivalently to re-ordering of some rows in the transmission matrix $\bT$.

\begin{continueexample}{ex:non-recursive}
Consider the transmission effect of the indirect channel for the non-recursive case displayed in Figure \ref{fig:sec4-po-example}. This effect can be computed in two stages. In the first stage the effect follows all paths from the demand shock to inflation and is given by the total effect $\alpha_2/\eta$ of the demand shock on inflation; hence a structural impulse response from $\bPhi$. In the second stage, the effect is carried further from inflation to interest rates and is given by $\alpha_4/(1+\alpha_1^2)$; hence a Cholesky impulse response from $\tilde\bPhi$ obtained following the ordering in $\bT = \bI$. Multiplying both IRFs results in the indirect effect in equation \eqref{eq:sec2-non-recursive-de-ie}. 
\end{continueexample}

\subsubsection{Identification Requirements}

As the final step in our theoretical development, Theorem \ref{thm:single-shock} shows that TCA only requires the shock of interest to be identified.
\begin{theorem}
    Under the same conditions as in Theorem \ref{thm:framework-equivalence}, transmission effects can be obtained if the $i$th column of the impulse response matrix, $\bPhi_{\cdot, i}$, is known.
    \label{thm:single-shock}
\end{theorem}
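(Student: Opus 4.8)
The plan is to obtain Theorem~\ref{thm:single-shock} as a sharpening of Theorem~\ref{thm:irf-sufficiency}, by tracking precisely where the structural impulse responses $\bPhi$ enter the representation $\calQ_{\xi}(P_{\varepsilon_i, x_j}) = f(\bPhi, \tilde\bPhi)$ and showing that they enter only through the $i$th column $\bPhi_{\cdot, i}$. Theorem~\ref{thm:irf-sufficiency}(ii) already guarantees that every transmission effect is \emph{some} function of the two impulse-response matrices; since $\tilde\bPhi$ is a Cholesky object (ordered by $\bT$) that is obtainable from the reduced-form dynamics under Assumption~\ref{assump:general-models-a0-nonsingular} without identifying any shock, it remains only to argue that $f$ never requires a column of $\bPhi$ other than $\bPhi_{\cdot, i}$, which is exactly the structural information carried by identifying $\varepsilon_i$.

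The key structural observation is that $\varepsilon_i$ is a source node of the DAG $\calG(\bB, \bOmega)$, so every path $p \in \calP_{\varepsilon_i, x_j}$ leaves $\varepsilon_i$ through exactly one edge, to some first variable node $x_m$ with coefficient $\bOmega_{m,i}$, and thereafter travels only among variable nodes, contributing a product of $\bB$-coefficients. Hence each path-specific effect factors as a single entry from column $i$ of $\bOmega$ times a product of variable-to-variable coefficients, and $\bOmega$ enters any transmission effect exclusively through $\bOmega_{\cdot, i}$. I would make this factorisation explicit either from the potential-outcomes expansion of $\E[x_j^*(\bepsilon^{(j)})]$ (which, as in the running non-recursive example, is linear in the shock realisations with every coefficient carrying a single $\bOmega_{\cdot,i}$ factor) or directly from the path-sum interpretation of $(\bI-\bB)^{-1}\bOmega$.

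Next I would aggregate. Any transmission channel $P_{\varepsilon_i, x_j}$ can be organised by splitting each of its paths at an intermediate variable node $x_k$: the initial segment runs from $\varepsilon_i$ to $x_k$ and the terminal segment from $x_k$ to $x_j$. Because $\bB$ is lower-triangular with a zero diagonal, every path is strictly increasing in the node ordering, so the two segments meet only at $x_k$ and may be chosen independently; the total path-specific effect of such a sub-collection therefore factors as (sum over admissible initial segments) times (sum over admissible terminal segments). By Theorem~\ref{thm2-part1}, summing the initial segments over all paths from $\varepsilon_i$ to $x_k$ gives $\calQ_1(\calP_{\varepsilon_i, x_k}) = \bPhi_{k,i}$, an entry of column $i$ of $\bPhi$, while by Theorem~\ref{thm3-part1} the terminal segments aggregate to $\tilde\bPhi_{j,k}\tilde\bPhi_{k,k}^{-1}$, which involves only the non-structural responses $\tilde\bPhi$. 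Writing a general channel as a Boolean combination of such ``through $x_k$'' sub-collections and applying inclusion--exclusion, every structural factor that appears is of the form $\bPhi_{k,i}$, and all remaining factors come from $\tilde\bPhi$; no off-column entry $\bPhi_{k,m}$ with $m \neq i$ can ever arise, since producing one would require a path-segment leaving a shock node $\varepsilon_m \neq \varepsilon_i$, whereas each path leaves $\varepsilon_i$ once and touches no other shock. Thus $f$ depends on $\bPhi$ only through $\bPhi_{\cdot,i}$, which is the claim.

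The main obstacle is the aggregation step for a \emph{general} transmission channel: an arbitrary subset of paths need not respect the clean ``all initial segments $\times$ all terminal segments'' product structure that lets the initial segments collapse into the single column $\bPhi_{\cdot,i}$, and, crucially, the individual $\bB$-coefficients are themselves structural (they depend on $\bL$ and $\bQ$), so it is only after aggregation that they become the non-structural quantity $\tilde\bPhi$. I would resolve this by using that the admissible channels are precisely those expressible through constraints on which intermediate variables are or are not traversed---equivalently, the potential-outcome assignment vectors of Definition~\ref{def:po-transmission-channel} single out nested chains by their split variables---so that inclusion--exclusion over split nodes always reduces the structural content to column-$i$ entries $\bPhi_{k,i}$ while delegating every variable-to-variable aggregate to $\tilde\bPhi$.
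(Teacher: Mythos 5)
Your proposal is correct and follows essentially the same route as the paper: it leans on Theorem \ref{thm:irf-sufficiency}, expresses a general channel through Boolean constraints on intermediate variables, and reduces everything by inclusion--exclusion (the operations of Lemma \ref{lem:properties}) to terminal terms of the form $\calQ_{\xi}(\calP_{\varepsilon_i,x_k})\,\calQ_{1}(\calP_{x_k,x_j}) = \bPhi_{k,i}\,\tilde\bPhi_{j,k}\tilde\bPhi_{k,k}^{-1}$, so the only structural input is the column $\bPhi_{\cdot,i}$ --- exactly how the paper argues via Lemmas \ref{enum:prop-1}, \ref{enum:prop-2} and \ref{enum:prop-8}. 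The only cosmetic difference is your conservative remark that the individual $\bB$-coefficients are structural (given $\bT$ they are in fact recoverable from the reduced form, since $\bL$ and the products $\bQ'\bA_l^*$ are), but since you delegate them to $\tilde\bPhi$ after aggregation this does not affect the argument.
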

\noindent TCA  provides additional insights into what drives equilibrium dynamics due to a structural shock $\varepsilon_i$. This implies that TCA only requires the identification of $\varepsilon_i$, the structural intervention under investigation. This stands in contrast to counterfactual policy analysis (e.g. \citet{mckayWhatCanTime2023} or \citet{simsDOESMONETARYPOLICY2006}), where in addition to $\varepsilon_i$ further identified shocks are needed to mimic the counterfactual policy rule in a different dynamic equilibrium.

Theorem \ref{thm:irf-sufficiency} and \ref{thm:single-shock} together imply that TCA is feasible whenever standard macroeconomic analysis of total dynamic causal effects is feasible. In addition, the theoretical properties established provide guidance for the practical computation of transmission effects either by aggregating path effects, or by calculating the relevant impulse responses. While both approaches are valid, their computational efficiency will depend on the specific application. Appendix \ref{appendix:computation} goes into more details about the computational aspects and proposes an algorithm that exploits the established properties to yield a computationally efficient procedure. 

\subsection{Invariance of Transmission Effect}
\label{sec:sec4-transmission-matrix}

TCA requires the researcher to choose a transmission matrix $\bT$, see Definition \ref{def:transmission-matrix}. With $K$ variables in the VARMA \eqref{eq:general-model}, there are $K!$ choices. 
We, next,
provide conditions under which 
transmission effects are invariant to the choice of $\bT$, which 
offers guidance to practitioners on how
an appropriate set of transmission matrices for the research question at hand can be specified. 

Our first result identifies when edges in $\calG(\bB, \bOmega)$ linking $\varepsilon_i$ to $x_j$ remain invariant under variable re-ordering in the transmission matrix.

\begin{lemma}\label{lemma:invariance-Omega}
    Let $r=mod(i+K-1, K) + 1$, with $mod(a,b)$ the remainder of dividing $a$ by $b$.
    Then $\bOmega_{i,j}$ in \eqref{eq:general-model-xbx} is invariant to re-ordering of the rows $1:(r-1)$ and $(r+1):K$ of the matrix $\bT$, where $a:b$ is the unit range from $a$ to $b$.
\end{lemma}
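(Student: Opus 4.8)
The plan is to prove the slightly stronger claim that the \emph{entire} $i$-th row of $\bOmega$ is unchanged by the permitted reorderings, which trivially gives the invariance of the single entry $\bOmega_{i,j}$. Writing $r=\mathrm{mod}(i+K-1,K)+1$ for the position of the global index $i$ within its $K$-block, the block structure of $\bOmega$ in \eqref{eq:xbx-B-and-Omega} shows that every nonzero sub-row of row $i$ is either the $r$-th row of $\bD\bQ'$ (on the diagonal block) or the $r$-th row of $\bD\bQ'\bPsi_s$ (on a sub-diagonal block). Since the moving-average matrices $\bPsi_s$ entering \eqref{eq:general-model} and \eqref{eq:general-model-ql} are fixed and do not depend on $\bT$, it suffices to show that the $r$-th row of $\bD\bQ'$, namely $D_{rr}\,q_r^\top$ with $q_r$ the $r$-th column of $\bQ$ and $D_{rr}=1/L_{rr}$, is invariant. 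First I would translate the operation on $\bT$ into one on $\bA_0^*=\bA_0\bT'$: reordering rows of $\bT$ means replacing $\bT$ by $\tilde\bT=\bP\bT$ for some permutation $\bP$, hence $\tilde\bA_0^*=\bA_0\tilde\bT'=\bA_0^*\bP'$, i.e.\ the same column permutation applied to $\bA_0^*$.

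Next I would record the reverse Gram--Schmidt characterisation of the QL-decomposition $\bA_0^*=\bQ\bL$. Because $\bL=\bQ'\bA_0^*$ is lower-triangular, $q_r^\top(\bA_0^*)_{\cdot,k}=L_{r,k}=0$ for every $k>r$, so $q_r$ is orthogonal to columns $r+1,\dots,K$ of $\bA_0^*$; and reading off column $r$ of $\bA_0^*=\bQ\bL$ gives $q_r L_{rr}=(\bA_0^*)_{\cdot,r}-\sum_{m>r}q_m L_{m,r}$, which is exactly the component of $(\bA_0^*)_{\cdot,r}$ orthogonal to $\mathrm{span}\{(\bA_0^*)_{\cdot,r+1},\dots,(\bA_0^*)_{\cdot,K}\}$. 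Under Assumption \ref{assump:general-models-a0-nonsingular} this component is nonzero, and with the sign convention that the diagonal of $\bL$ be positive (the footnote to \eqref{eq:general-model-ql}) it pins down $q_r$ and $L_{rr}$ uniquely: $q_r$ is the \emph{normalised} orthogonal component and $L_{rr}>0$ its norm. The key consequence is that $q_r$ and $L_{rr}$ depend on $\bA_0^*$ only through column $r$ and through the linear span of columns $r+1,\dots,K$; the columns $1,\dots,r-1$ never enter.

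Finally I would apply this to the two permitted reorderings. A permutation $\bP$ that only shuffles rows within $1{:}(r-1)$ and within $(r{+}1){:}K$ fixes position $r$, so $(\tilde\bA_0^*)_{\cdot,r}=(\bA_0^*)_{\cdot,r}$ is unchanged, while it merely permutes columns $r+1,\dots,K$ among themselves, leaving $\mathrm{span}\{(\tilde\bA_0^*)_{\cdot,r+1},\dots,(\tilde\bA_0^*)_{\cdot,K}\}$ equal to the original span; the reordering of rows $1{:}(r-1)$ affects only columns that do not appear in the characterisation. Hence $\tilde q_r=q_r$ and $\tilde L_{rr}=L_{rr}$, so $D_{rr}\,q_r^\top$, and therefore the whole $i$-th row of $\bOmega$ and in particular $\bOmega_{i,j}$, is invariant. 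The one step requiring genuine care is the middle one: establishing that the positive-diagonal QL factor is unique and that its $r$-th column and $r$-th diagonal entry are functions of only the stated data of $\bA_0^*$ (column $r$ and the span of the later columns). Once this \emph{set-invariance of the span} is in place, the permutation bookkeeping in the last step is routine.
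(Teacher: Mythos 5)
Your proof is correct and takes essentially the same route as the paper: both arguments reduce the $i$-th row of $\bOmega$ to $\bL_{r,r}^{-1}q_r^\top$ multiplied by $\bT$-independent matrices ($\bPsi_s$ or the identity), and both establish invariance of $q_r$ and $\bL_{r,r}$ via the Gram--Schmidt characterisation of the QL factor. The only difference is presentational: the paper obtains that characterisation by passing to the QR decomposition of the column-reversed matrix $\bA_0\bS$, whereas you derive it directly from the lower-triangularity of $\bL=\bQ'\bA_0^*$ and non-singularity, which makes the key middle step more self-contained.
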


Lemma \ref{lemma:invariance-Omega} states that edges connecting shock $\varepsilon_i$ to variable $x_j$ are invariant to re-ordering variables within the block before and after $x_j$ as long as variables from earlier time periods are not ordered after variables from later time periods. Intuitively, this is because the edge measures the causal effect of $\varepsilon_i$ on $x_j$ that does not go through variables ordered before $x_j$. This only depends on the variables that are ordered before and after, but not on the ordering within the respective groups.

Our second result explores when edges in the graph $\calG(\bB, \bOmega)$ connecting variable $x_i$ to variable $x_j$ are invariant to re-ordering in the transmission matrix. 

\begin{lemma}\label{lemma:invariance-B}
    Define $r$ as in Lemma \ref{lemma:invariance-Omega} and  $c=mod(j+K-1, K) + 1$.
    Then $\bB_{i,j}$ in \eqref{eq:general-model-xbx} is invariant to re-ordering of the rows $1:min(r,c)$, $(min(r,c)+1):(max(r,c)-1)$, and $(max(r,c)+1):K$ of  transmission matrix $\bT$.
\end{lemma}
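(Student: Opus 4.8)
The plan is to split the nonzero entries of $\bB$ in \eqref{eq:xbx-B-and-Omega} into their two structurally distinct types and reduce both to a linear‑projection statement, leaning on Lemma \ref{lemma:invariance-Omega} for the cross‑time‑block entries. Write $x_i$ for the destination variable (at within‑block position $r$) and $x_j$ for the source (at within‑block position $c$). Then $\bB_{i,j}$ is either a diagonal‑block entry $(\bI-\bD\bL)_{r,c}$, when $x_i$ and $x_j$ share a time block, or an off‑diagonal‑block entry $(\bD\bQ'\bA^*_m)_{r,c}$ for some lag $m\geq 1$, when $x_i$ lies in a later block. I would first record the projection reading of these entries: since $\bD\bL$ has unit diagonal and $\bD\bQ'\bvarepsilon_t$ is orthonormal, row $r$ of the lower‑triangular system \eqref{eq:general-model-ql} expresses $y^*_{r,t}$ net of its projection onto all variables ordered before position $r$ (and all lags). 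Hence $(\bI-\bD\bL)_{r,c}$ is exactly the partial regression coefficient of the destination on the source controlling for the remaining variables at positions $1,\dots,r-1$, and $(\bD\bQ'\bA^*_m)_{r,c}$ is the analogous partial coefficient on the lagged source.

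For the diagonal‑block case I would use that a linear‑projection coefficient depends only on (a) the regressand, (b) the particular regressor whose coefficient is read off, and (c) the \emph{set} of regressors, not on the order in which that set is listed. Re‑ordering the rows of $\bT$ within the three ranges delimited by $\min(r,c)=c$ and $\max(r,c)=r$ keeps positions $r$ and $c$ occupied by the same variables, so the regressand $x_i$ and the singled‑out regressor $x_j$ are unchanged, and never moves a variable across position $r$, so the regressor set $\{1,\dots,r-1\}$ is preserved. Thus $(\bI-\bD\bL)_{r,c}$ is unchanged. This is the two‑sided analogue of Lemma \ref{lemma:invariance-Omega}, now with two protected positions instead of one.

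For the off‑diagonal‑block case I would exploit the factorisation $\bD\bQ'\bA^*_m=(\bD\bQ')(\bA_m\bT')$, which gives $(\bD\bQ'\bA^*_m)_{r,c}=\sum_k (\bD\bQ')_{r,k}\,(\bA_m)_{k,\pi(c)}$, where the sum runs over the \emph{un‑permuted} shock index $k$ and $\pi(c)$ is the original index of the variable placed at position $c$ by $\bT$. The first factor $(\bD\bQ')_{r,k}$ is precisely a first‑block entry of $\bOmega$, so by Lemma \ref{lemma:invariance-Omega} it is invariant to re‑ordering the rows of $\bT$ within $1:(r-1)$ and $(r+1):K$, and this holds simultaneously for every $k$ because $k$ indexes shocks, which $\bT$ leaves untouched. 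The second factor depends on the ordering only through $\pi(c)$, i.e.\ only through which variable occupies position $c$. Consequently the product, and hence $\bB_{i,j}$, is invariant under any re‑ordering that (i) fixes position $r$ and respects the before/after split at $r$ required by Lemma \ref{lemma:invariance-Omega}, and (ii) fixes position $c$; re‑orderings within the three ranges cut at $\min(r,c)$ and $\max(r,c)$ are exactly of this form for either relative order of $r$ and $c$. Combining the two cases proves the lemma.

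The main obstacle is the off‑diagonal‑block (lagged) case, where source and destination live in different time blocks yet are permuted by the \emph{same} within‑block permutation $\bT$: one must show that the requirement from the destination side (the projection structure pinned to position $r$, supplied by Lemma \ref{lemma:invariance-Omega}) and the requirement from the source side (the column selection in $\bA_m\bT'$, pinned to position $c$) are simultaneously satisfied precisely by the partition cut at $\min(r,c)$ and $\max(r,c)$. A secondary point is to justify the linear‑projection reading of the entries for the finite‑horizon VARMA system \eqref{eq:general-model-xbx} rather than a clean VAR, which I would handle directly via the orthonormality of $\bD\bQ'\bvarepsilon_t$ and the triangularity of $\bI-\bD\bL$ recorded in \eqref{eq:general-model-ql}.
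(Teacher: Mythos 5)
Your proposal is correct and, for the off-diagonal (lagged) blocks, is essentially the paper's argument in different packaging; the genuine difference lies in the diagonal block. The paper proves Lemmas \ref{lemma:invariance-Omega} and \ref{lemma:invariance-B} jointly and purely algebraically: it first shows the QL factor column $\bQ_{\cdot,r}$ is invariant --- via the QR decomposition of the column-reversed matrix, where $\bQ^*_{\cdot,r}$ is a Gram--Schmidt residual, hence depends only on the variable occupying position $r$ and on the \emph{set} of variables ordered after it --- and then writes every nonzero entry of $\bB$, diagonal blocks included, in the single template $[\bD\bQ'\bA_l^*]_{r,c}=\bL_{r,r}^{-1}\bQ'_{r,\cdot}[\bA_l^*]_{\cdot,c}$ for $0\le l\le h$, using $\bI-\bD\bL=\bI-\bD\bQ'\bA_0^*$; invariance follows because $\bL_{r,r}^{-1}\bQ'_{r,\cdot}=[\bD\bQ']_{r,\cdot}$ is pinned down by position $r$ while $[\bA_l^*]_{\cdot,c}$ depends only on the occupant of position $c$. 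Your off-diagonal treatment (rows of $\bD\bQ'$ viewed as first-block entries of $\bOmega$, invoked through Lemma \ref{lemma:invariance-Omega}, multiplied by the column of $\bA_m$ selected by $\pi(c)$) is exactly this argument, legitimately routed through the earlier lemma. For the diagonal block you instead use a population partial-regression reading of the triangularised system; this is valid --- the equation-$r$ error $\bL_{r,r}^{-1}\gamma_{r,t}$, with $\bgamma_t=\bQ'\bvarepsilon_t$, is orthogonal to the contemporaneous regressors ordered before $r$ and to all lagged variables and lagged shocks (note $\bQ'\bvarepsilon_t$ is orthonormal, while $\bD\bQ'\bvarepsilon_t$ is merely orthogonal) --- and it buys the transparent ``projection coefficients depend on the regressor set, not its order'' reasoning, at the cost of precisely the orthogonality/VARMA justification you flag, which the paper's algebraic route avoids entirely. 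You could also have skipped the case split by observing $(\bI-\bD\bL)_{r,c}=-[\bD\bQ'\bA_0\bT']_{r,c}$ for $r\neq c$ and recycling your own off-diagonal argument with $m=0$. Finally, note that your proof (like the paper's own) establishes invariance for permutations fixing the occupants of \emph{both} positions $r$ and $c$; taken literally, the lemma's first range $1:\min(r,c)$ would allow the variable at position $\min(r,c)$ to move, under which the entry is generally not invariant, so that range should be read as $1:(\min(r,c)-1)$.
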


Lemma \ref{lemma:invariance-B} has a similar intuition as Lemma \ref{lemma:invariance-Omega}. It states that an edge from a variable $x_i$ to another variable $x_j$ is invariant to any re-ordering of variables within the block of variables before $x_i$, between $x_i$ and $x_j$, and after $x_j$ as long as variables from earlier time periods are not ordered after variables from later time periods.

Note the analogy of the previous two lemmas to the invariance of the ordering in Cholesky identification schemes, where the ordering also only matters with respect to the shock and outcome variables. While the underlying conceptual framework here is different, as it does not relate to identification but to the flow of transmission effects, the analogy can be helpful in guiding the choice of a transmission matrix in practice.

Finally, we present invariance conditions on the IRFs needed to compute the transmission effects. This lemma is more general than Lemmas \ref{lemma:invariance-Omega} and \ref{lemma:invariance-B}: 
if Lemmas \ref{lemma:invariance-Omega} and \ref{lemma:invariance-B} hold for each edge on the paths corresponding to the IRF, then Lemma \ref{lemma:invariance-cholesky} holds too, but the reverse need not be true. Since structural IRFs are not subject to the choice of a transmission matrix $\bT$, the invariance conditions in the IRF space only involve the Cholesky IRFs $\tilde\bPhi_{i, j}$. Such conditions have been derived in Proposition 4.1 of \citet{christianoMonetaryPolicyShocks1999}. Lemma \ref{lemma:invariance-cholesky} presents these results in the context of TCA.

\begin{lemma}
    Let $c$ be defined as in Lemma \ref{lemma:invariance-B} and let $\tilde\bPhi$ be the Cholesky impulse responses as in Theorem \ref{thm:irf-sufficiency}. Then, $\tilde\bPhi_{\cdot, j}$ is invariant to re-ordering of the rows $1:(c-1)$, $(c+1):K$ of the transmission matrix $\bT$.
    \label{lemma:invariance-cholesky}
\end{lemma}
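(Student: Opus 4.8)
The plan is to reduce the claim to the classical invariance of Cholesky-orthogonalised responses to the ordering of the variables placed \emph{before}, and \emph{after}, the shocked variable --- the result attributed to Proposition 4.1 of \citet{christianoMonetaryPolicyShocks1999} --- and to verify that the two row-blocks of $\bT$ named in the statement lift, on the stacked vector $\bx$, to permutations of exactly this admissible type. I would first make explicit what $\tilde\bPhi$ is: with the stacked ordering that lists variables by time period and, within each period, according to $\bT$, the matrix $\tilde\bPhi$ is the unique positive-diagonal lower-triangular Cholesky factor of $\bSigma=\bPhi\bPhi'$ (recall $\bx=\bPhi\bvarepsilon$ with $\E[\bvarepsilon\bvarepsilon']=\bI$ under Assumption \ref{assump:general-model-structural-shocks}), so that $\tilde\bPhi_{\cdot,j}$ is the response to the $j$th Cholesky shock $\nu_j$, namely the normalised residual of the innovation in $x_j$ after projecting it on $\mathrm{span}\{x_1,\dots,x_{j-1}\}$. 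Because $\tilde\bPhi$ is lower-triangular, $\tilde\bPhi_{s,j}=0$ for $s<j$, so only the diagonal and the ``after'' entries $s\ge j$ can be non-zero.

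Next I would carry out the dynamic bookkeeping, mirroring Lemmas \ref{lemma:invariance-Omega} and \ref{lemma:invariance-B}. Writing $c$ for the within-period position of $j$, the key observation is that, since $\bT$ is applied identically in every period, re-ordering the rows $1:(c-1)$ (or $(c+1):K$) of $\bT$ never moves a variable across a period boundary and never touches within-period position $c$; hence it fixes the stacked index $j$. A permutation of rows $1:(c-1)$ permutes, within $j$'s own period, variables that all lie before $j$; within each earlier period it permutes a subset of variables that are all before $j$ anyway; and within each later period it permutes variables that are all after $j$. The analogous check for rows $(c+1):K$ shows it again fixes $j$ and only reshuffles variables inside the before-block in earlier periods and inside the after-block in $j$'s own and in later periods. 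In both cases the induced stacked permutation leaves the \emph{set} of variables ordered before $j$ unchanged, and earlier periods remain ordered before later ones.

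I would then invoke the classical fact. Since both admissible re-orderings preserve the set of variables ordered before $j$, they preserve $\mathrm{span}\{x_1,\dots,x_{j-1}\}$, hence the projection residual and therefore the orthogonalised shock $\nu_j$ itself. Consequently the diagonal $\tilde\bPhi_{j,j}=\mathrm{Var}(\nu_j)^{1/2}$ and every response $\tilde\bPhi_{s,j}=\mathrm{Cov}(x_s,\nu_j)\,\mathrm{Var}(\nu_j)^{-1/2}$ are unchanged in value; the entries with $s<j$ are zero by lower-triangularity. Thus the only effect a re-ordering can have on the column is to relabel the rows of the non-zero after-block among themselves, with each individual variable's response preserved. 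This is precisely the invariance required for the transmission effects, whose building blocks $\tilde\bPhi_{s,r}$ are indexed by economic variable rather than by stacked position; the positive-diagonal normalisation removes the residual's sign ambiguity and pins the column down uniquely.

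The main obstacle is the middle step: making the stacked translation watertight, i.e.\ proving that the two admissible row-blocks of $\bT$ generate stacked permutations that fix $j$, preserve the before-set, and respect the time ordering. Once that is in place the static Cholesky argument applies verbatim. The genuinely new content over Lemmas \ref{lemma:invariance-Omega}--\ref{lemma:invariance-B} is only that here invariance holds for the \emph{aggregate} response even when individual edge coefficients $\bB_{\cdot,\cdot}$ and $\bOmega_{\cdot,\cdot}$ along the underlying paths are themselves not invariant, which is exactly why the span/projection argument is used in place of an edge-by-edge comparison.
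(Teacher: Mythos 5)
Your proof is correct and takes essentially the same route as the paper, whose entire proof of this lemma is a one-line appeal to Proposition 4.1 of \citet{christianoMonetaryPolicyShocks1999} --- exactly the classical invariance you invoke. The difference is only that you spell out what that citation hides (the identification of $\tilde\bPhi$ as the positive-diagonal Cholesky factor of $\bPhi\bPhi'$, the stacked-index bookkeeping showing that the two row-blocks of $\bT$ lift to admissible permutations fixing $j$ and its before-set, and the span/projection argument), and all of these steps are sound.
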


Intuitively, since $\tilde\bPhi_{\cdot, j}$ measures the effect of a unit increase in $x_j$ without increasing the variables ordered before $x_j$, all that matters is which variables are ordered before and after, but not the ordering within their respective groups. Thus, variables ordered before and after $x_j$ can be re-ordered in their respective groups as long as the time ordering is not broken.

Although Lemma \ref{lemma:invariance-cholesky} is more general than Lemmas \ref{lemma:invariance-Omega} and \ref{lemma:invariance-B}, some transmission channels are easier to analyse for invariance using the latter two. Take, for example, the transmission channel consisting of only the direct edge from a shock to a variable. Lemma \ref{lemma:invariance-Omega} directly provides conditions under which this transmission effect is invariant. Application of Lemma \ref{lemma:invariance-cholesky}, on the other hand, requires representing the single edge as a complicated function of IRFs, making the analysis of invariances much more difficult. Thus, both sets of Lemmas 
have their merit. 

% Empirical applications
\section{Empirical Applications}
\label{sec:sec5-applications}

In this section, we empirically demonstrate the versatility of TCA 
in a variety of settings. Section \ref{sec:sec5-mckay-wolf} demonstrates how TCA can be used to decompose monetary policy effects estimated using SVARs into contemporaneous and non-contemporaneous effects. Section \ref{sec:sec5-ramey} applies TCA to  government spending within a local projections (LP) framework decomposing the total effect of a government spending news shock into implementation and anticipation effects. Section \ref{sec:sec5-dsge} shows how TCA can be used to decompose total impulse responses obtained from the linearised prototypical DSGE model by \citet{smetsShocksFrictionsUS2007}. Appendix \ref{appendix:empirical-details} provides additional details, such as graphical representations of the specified transmission channels. All algorithms used for the computation of the transmission effects are implemented in the Julia package \href{https://github.com/enweg/TransmissionChannelAnalysis.jl}{TransmissionChannelAnalysis.jl} and the Matlab \href{https://github.com/enweg/tca-matlab-toolbox}{TCA Toolbox}.%
\footnote{The packages are available on GitHub: \href{https://github.com/enweg/TransmissionChannelAnalysis.jl}{https://github.com/enweg/TransmissionChannelAnalysis.jl} and \href{https://github.com/enweg/tca-matlab-toolbox}{https://github.com/enweg/tca-matlab-toolbox}. Replication code for the empirical applications considered in this section is also available on GitHub: \href{https://github.com/enweg/tca-replication-material}{https://github.com/enweg/tca-replication-material}.} 

\subsection{Forward Guidance of Monetary Policy}
\label{sec:sec5-mckay-wolf}

We use TCA to decompose the effects of monetary policy shocks into contemporaneous effects, related to a direct change in short-term interest rates, and effects that are not linked to direct changes of the main policy instrument, such as forward guidance. \citet{mckayWhatCanTime2023} argue that the \citet[][henceforth RR]{romerNewMeasureMonetary2004} shock series predominantly drives short-term interest rates. 
In contrast, they argue that the \citet[][henceforth GK]{gertlerMonetaryPolicySurprises2015} shock series moves long(er)-term interest rates and thus rather captures the non-contemporaneous components of monetary policy such as forward guidance. TCA provides a framework to quantify and assess this hypothesis by decomposing the total impulse responses to either monetary policy shock into direct implementation and non-contemporaneous effects. To conduct TCA, and to stay as close as possible to \citet{mckayWhatCanTime2023}, we estimate an SVAR(4) in the federal funds rate $\mathrm{i}_t$, the output gap $\mathrm{x}_t$, inflation $\mathrm{\pi}_t$ and commodity prices $\mathrm{p}_t$ over the period 1969Q1 to 2007Q4. We identify the monetary policy shock using either RR or GK as an internal instrument \citep{plagborg-mollerLocalProjectionsVARs2021}.\footnote{All data was obtained from the replication package of \citet{mckayWhatCanTime2023}.}

We define the direct implementation channel of monetary policy as the effect of a monetary policy shock going through a contemporaneous adjustment in the federal funds rate. The non-contemporaneous effect is then defined as the complement; that is, the effect of the monetary policy shock not going through a contemporaneous adjustment in short-term interest rates. Since the federal funds rate is implicitly held constant on impact, the non-contemporaneous effect captures dimensions of monetary policy that relate to (expected) future changes of short-term interest rates, i.e.\ forward guidance.\footnote{Note that other effects might also be captured. We leave a thorough identification of forward guidance effects through a more precise definition of forward guidance channels to future research.}

Often, when a specific variable can be directly associated with the identified shock of interest and this specific variable can be straightforwardly related to the specification of the transmission channel the researcher wants to investigate, it helps guiding the specification of the transmission matrix. In this application the federal funds rate, as the policy instrument, is intuitively associated with monetary policy surprises. Ordering the federal funds rate first, allows for a straightforward distinction between the direct implementation channel, effects only going directly through the federal funds rate, and its complement, the non-contemporaneous channel, capturing transmission effects (on all other endogenous variables) \textit{not} going directly through the federal funds rate. Results in Section \ref{sec:sec4-transmission-matrix} indicate that all transmission matrices that order the federal funds rate first result in the same direct and non-contemporaneous transmission effects.

Figure \ref{fig:instrument-comparison} shows the total, direct implementation, and non-contemporaneous effects of the RR (top) and GK (bottom) shock, all normalised to a 25bp contemporaneous increase in the federal funds rate.\footnote{The scale of the impulse responses is determined by the chosen normalisation procedure, which makes a direct comparison in terms of absolute effects of the RR and GK shocks difficult. However, relative decompositions into transmission channels and qualitative conclusions about the shapes of the IRFs can be compared, since they are invariant to the chosen normalisation.}
The total effect is depicted as a scatter-line, while the direct implementation and non-contemporaneous effects are depicted as stacked bars. The GK shock triggers dynamics in the federal funds rate and inflation that are clearly distinct from the dynamics induced by the RR shock. Additionally, the majority of the GK induced reactions are explained by non-contemporaneous effects, while only a small part of the reactions to an RR shock can be explained by non-contemporaneous effects. Thus, the results quantitatively confirm the qualitative discussion in \citet{mckayWhatCanTime2023}. The RR shock series measures mostly contemporaneous effects of monetary policy, while the GK shock series identifies non-contemporaneous components of monetary policy that are likely linked to forward guidance. 

\begin{figure}[t]
    \centering
    \begin{tabular}{cc}
    \textbf{RR} & \includegraphics[align=c, width=0.9\textwidth]{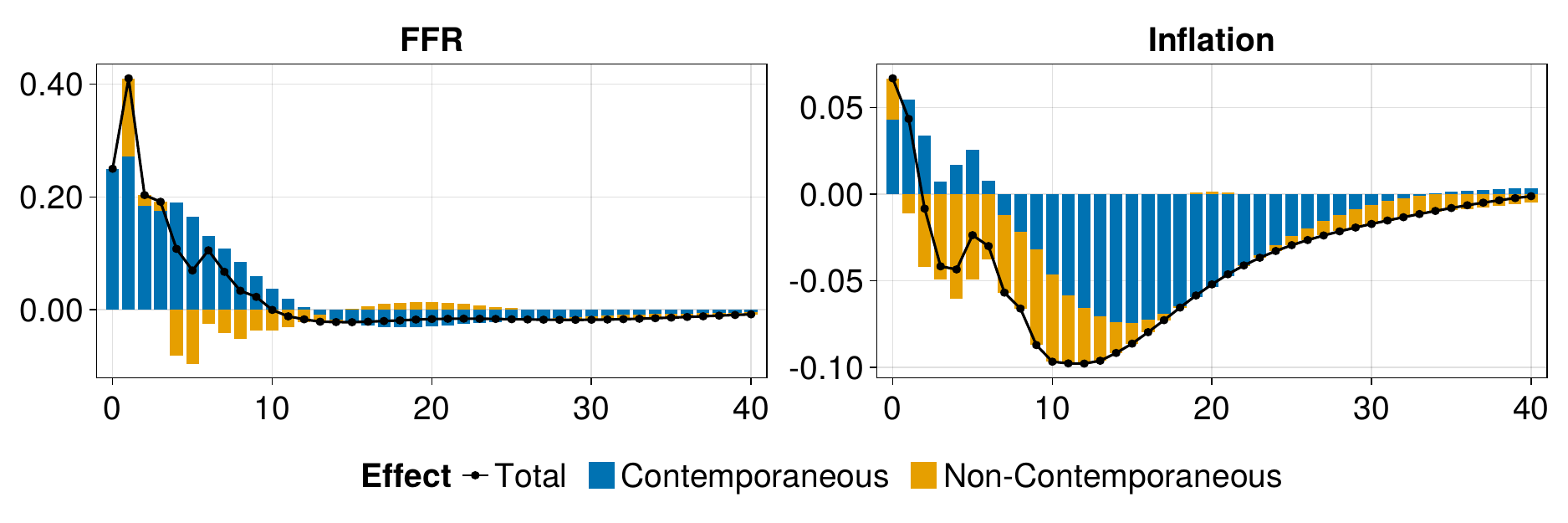} \\
    \textbf{GK} & \includegraphics[align=c, width=0.9\textwidth]{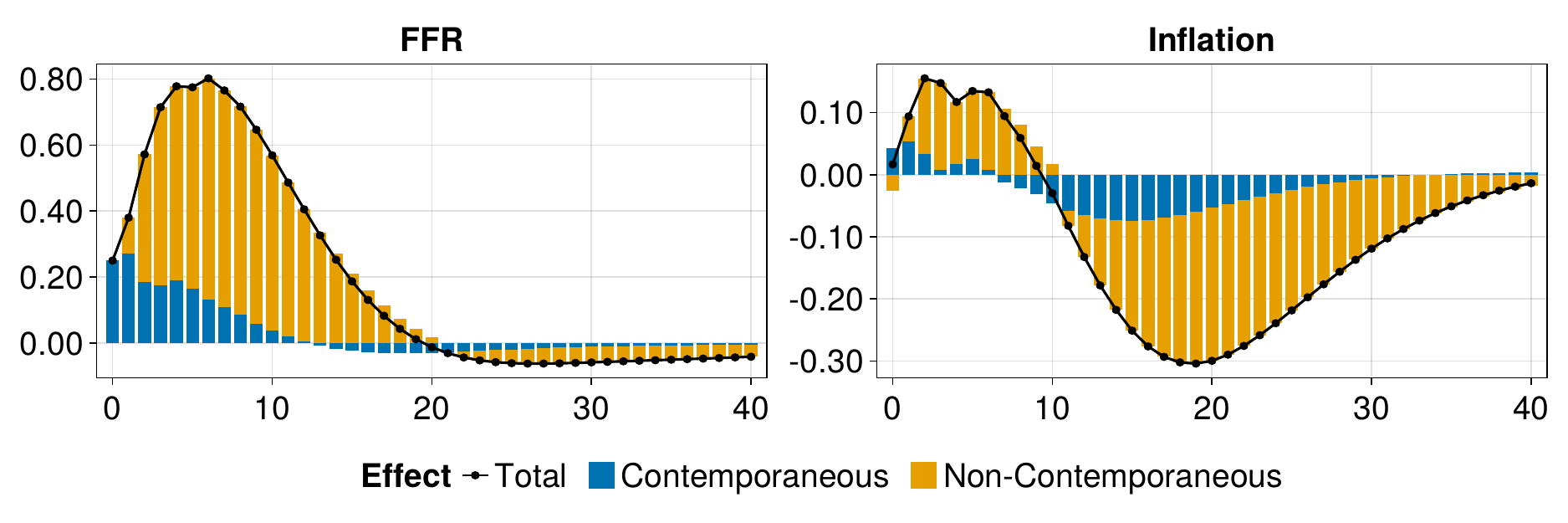}
    \end{tabular}
    \caption{
    Decomposition of the total effect of an RR (top) and GK (bottom) identified monetary policy shock into a direct implementation and non-contemporaneous channel.
    }
    \label{fig:instrument-comparison}
\end{figure}

\subsection{Anticipation Effects of Government Spending}
\label{sec:sec5-ramey}
In Section \ref{sec:sec5-mckay-wolf} we defined the direct implementation channel of monetary policy as the effect of a policy that only goes through a contemporaneous change in short-term interest rates. In this section we build on this idea to specify a channel capturing the anticipation effects of government spending news shocks. The analysis is motivated by \citet{Ramey2011, rameyGovernmentSpendingMultipliers2018, rameyDefenseNewsShocks2016} who argue that many series of identified government spending shocks, such as war dates series \citep{rameyCostlyCapitalReallocation1998}, do not correctly capture important anticipation effects. As their proposed series captures news about government military spending, we define the anticipation channel as the news shock effect not driven by the response of government military spending up to horizon $H$. 

Here, Ramey's news series is directly associated with a news shock. Thus, similar to the federal funds rate in Section \ref{sec:sec5-mckay-wolf}, we order the news series first in the transmission matrix. Moreover, we order government military spending second, as we investigate news shock effects not going through actual government military spending, and, importantly, want to allow for the other variables to only partially adjust if either the anticipation or implementation channels are blocked. A similar ordering with the shock's associated variable first (potentially measuring news or expectations) and ``implementing" variables afterwards can be adopted in many other potential applications of TCA, especially those investigating shocks to news or expectations. The results of Section \ref{sec:sec4-transmission-matrix} then imply that all involved impulse response functions are invariant to re-ordering of the remaining variables and all transmission matrices ordering the news variable first and government military spending second result in identical transmission effects. 

We estimate impulse responses using local projections based on quarterly data from 1890Q1 to 2015Q1. In line with the original papers, we identify the news shock recursively with the news measure ordered first by estimating the regression
\begin{equation*}
    y_{i, t+h} = \alpha^h + \beta_{i}^h\text{news}_t + \sum_{l=1}^4 \by_{t-l}'\psi_{i, l}^h + \varepsilon_{i,t}^h,
\end{equation*}
where $\by_t=(\text{news}_t, \text{mil}_t, \text{gov}_t, \text{gdp}_t)$, $\text{news}_t$ is the news measure, $\text{mil}_t$ government military spending, $\text{gov}_t$  total government spending, and $\text{gdp}_t$ is GDP, all expressed in real terms and as percent of real potential GDP. Then, $\beta_i^h$ measures the impulse response of a news shock of one percent of real potential GDP on $y_{i, t+h}$. To estimate transmission channels we need additional, reduced-form impulse responses. These are estimated using local projections of the form
\begin{equation*}
    y_{i, t+h} = \tilde\alpha^h + \tilde\beta_i^h\text{news}_t + \tilde\gamma_i^h\text{mil}_t + \sum_{l=1}^4 \by_{t-l}'\tilde\psi_{i,l}^h + \tilde\varepsilon_{i,t}^h,
\end{equation*}
where $\tilde\gamma_i^h$ is the Cholesky impulse response of a government military spending shock on $y_{i, t+h}=x_{i+4h}$, following the ordering defined in the transmission matrix. Anticipation and implementation effects 
can now be computed as in Appendix \ref{appendix:computation}.

Figure \ref{fig:sec5-ramey} shows the total, implementation, and anticipation effects of the government (military) spending news shock on GDP, total government spending, and government military spending. The total effect is shown as a black scatter-line with the anticipation and implementation effect shown as stacked bars in blue and yellow, respectively. The response of both, GDP and government spending, is predominantly driven by anticipation effects. Implementation effects are small (but positive) as long as defence spending is increasing. Interestingly, anticipation and implementation effects start to offset each other at longer horizons; a pattern that is hidden in total impulse responses but that becomes apparent when using TCA. Reconciling these findings with theory is outside the scope of this paper.
A crucial difference to the related literature, which uses military spending news as exogenous measure to identify overall government spending news shocks, is that we can attribute observed (implementation) effects solely to changes in military spending. This could possibly suggest that implementation effects of military spending and civilian government spending are different (c.f. \citealp{Perotti2014}). In contrast, anticipation effects are much less tangibly defined and, by construction, could relate to broader economic expectations not only related to future military spending. Further investigation goes beyond the scope of this paper and is left for future research. A sub-sample analysis could be a start to shed some more light on the drivers of this finding (e.g. \citealp{AcaraiEtAl2023}). 

\begin{figure}
    \centering
    \includegraphics[width=0.9\textwidth]{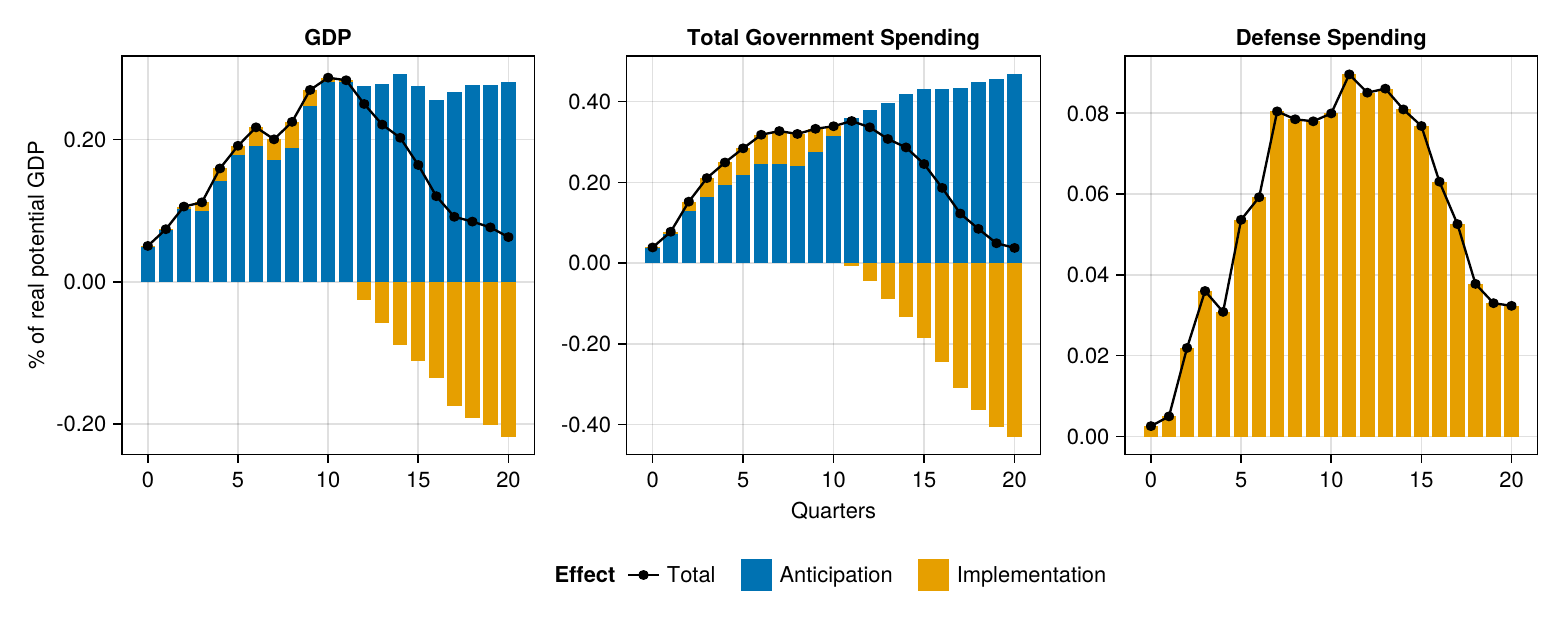}
    \caption{
    Decomposition of the total effect of a defense-news shock of one percent of potential GDP into an anticipation and implementation channel.
    }
    \label{fig:sec5-ramey}
\end{figure}

\subsection{The Role of Wages in Monetary DSGEs}
\label{sec:sec5-dsge}

In the previous sections we studied transmission channels using time series models typically used in empirical macroeconomics. However, TCA can also be applied in other dynamic models often used in macroeconomic analysis, such as DSGEs. Since DSGE models consist of many interrelated equilibrium equations, TCA may provide important insights into the exact mechanisms behind total dynamic effects.   

We analyse the role of wages in the transmission of monetary policy shocks using the linearised \citet{smetsShocksFrictionsUS2007} model,\footnote{Replication files including the standard parameterisation of the \citet{smetsShocksFrictionsUS2007} model were obtained from \citet{pfeiferDSGEcollection}.} 
including the following endogenous variables: policy rate ($r_t$),  labour hours ($l_t$), wage growth ($w_t$), consumption growth ($c_t$), investment growth ($i_t$), output growth ($y_t$), and realised inflation ($\pi_t$). As required for TCA, we re-formulate the linearised DSGE in SVARMA form using the method of \citet{morrisVARMARepresentationDSGE2016}.

We differentiate two channels through which a contractionary monetary policy shock affects inflation. First, higher interest rates discourage consumption and investment and hence reduce aggregate demand which puts prices gradually under downward pressure. We refer to this as the demand channel. Second, in response to weaker demand, firms reduce labour hours, which gradually puts wages under downward pressure. Lower wages reduce marginal costs, which eventually influences firms' price-setting decisions. We refer to this as the wage channel. 

To investigate the quantitative importance of the demand and wage channel, we use TCA and decompose the total dynamic effect of a monetary policy shock on inflation into two mutually exclusive effects: a broad wage channel capturing all effects that go through wages $w_t$ in at least one period, and a demand channel capturing all effects that do not go through wages in any period. 

Since we decompose the effect on inflation, we order inflation last. Additionally, and in line with the previous applications, we order the shock's associated variable -- the interest rate -- first. This leaves us with the decision of the relative ordering of wages $w_t$ and the remaining endogenous variables $(l_t, c_t, i_t, y_t)$, as our results in Section \ref{sec:sec4-transmission-matrix} imply that the relative ordering inside the group $(l_t, c_t, i_t, y_t)$ does not matter. We explore two choices. First, labelled as `\emph{channel 1}', we use a transmission matrix that orders wages second to last, allowing the demand channel to contemporaneously feed into the wage channel, in line with the discussion above. Second, in `\emph{channel 2}' we consider a transmission matrix that orders wages second, allowing the wage channel to contemporaneously feed into the demand channel. 

Figure \ref{fig:sec5-dsge-decomposition} shows the resulting transmission effects. Each horizon shows two sets of stacked bars, where the bars on the left show the decomposition obtained by ordering wages second to last in the transmission matrix, and the bars to the right show the decomposition obtained by ordering wages second in the transmission matrix. 
The blue bars correspond to the effect through the demand channel and the red to the effect through the wage channel, and add up to the total effect shown as a black scatter-line. 

\begin{figure}
    \centering
    \includegraphics[width=0.9\linewidth]{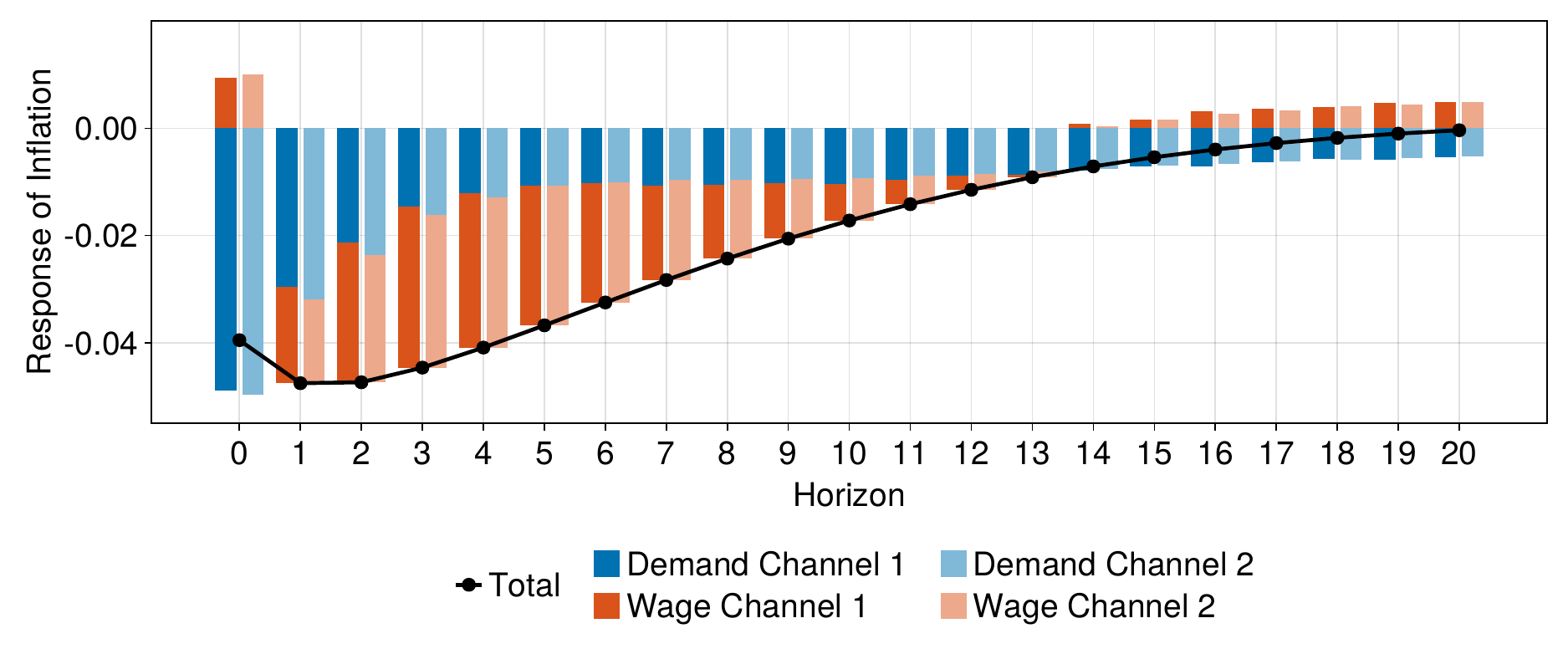}
    \caption{Decomposition of total effects of a contractionary monetary policy shock within the \citet{smetsShocksFrictionsUS2007} model into effects through demand and wage channels. Channel 1 is obtained by ordering wages second to last, while Channel 2 is obtained by ordering wages second in the transmission matrix.}
    \label{fig:sec5-dsge-decomposition}
\end{figure}

The two considered transmission matrices yield nearly identical effects. This suggests that the contemporaneous interaction between the wage and demand channel is likely small.
In both cases the initial reduction of inflation is completely driven by the demand channel and thus by demand-suppressing factors, with the wage channel adding slight inflationary pressures. Moreover, wages become more important over time, and the negative total response of inflation between periods two and ten is driven by effects going through the wage channel -- the humped shaped response is entirely explained by the wage channel. This confirms and assigns a quantitative value to the model's qualitative causal chain discussed above, in which wages play a delayed role, with suppression in demand, due to the monetary policy shock, feeding into wages, and eventually into prices. 

% Conclusion
\section{Conclusion}
\label{sec:conclusion}

We develop a new coherent framework for analysing transmission channels in a large class of dynamic models by connecting them to a graphical representation. Within the graphical representation, transmission channels are defined as paths from the shock of interest to the outcome variable; transmission effects are the effects flowing along these paths. We show that this graphical representation is equivalent to a potential outcomes representation which we use to prove that impulse responses are sufficient for computing all transmission effects.

Current methods to analyse transmission channels are purely based on qualitative inspections of various impulse responses, which cannot be used to precisely define and to quantitatively compare the importance of transmission channels. Our framework paves the way for a formal quantitative analysis of transmission channels, thereby extending the empirical macroeconomic toolbox. Of particular relevance for applied work is that TCA does not require any additional identification requirements beyond the identification of the structural shock of interest.

A crucial component of TCA is a precise definition of the transmission channel of interest. We show how this can be formalised through the transmission matrix which defines the ceteris paribus ordering of the variables in equilibrium; this ordering is entirely defined by the research question. We also show that in some cases, a partial ordering is sufficient. This is also the case in our empirical applications, in which we show how TCA can shed a quantitative light on the channels through which policy operates in a variety of macroeconom(etr)ic settings.

In some applications, the desired transmission channel can be implemented through multiple transmission matrices, resulting in different but qualitatively similar versions of the transmission channel -- as in Section \ref{sec:sec5-dsge}. If the set of possible transmission matrices is small, the analysis can be applied to each matrix and the results can be compared. If, on the other hand, the set of possible transmission matrices is large, and results in Section \ref{sec:sec4-transmission-matrix} cannot be used to further reduce this set, manually comparing the results obtained using each transmission matrix may be infeasible. In these circumstances it can be of benefit to plot a range, such as the minimum and maximum at each horizon across all definitions of the channel. Although such an approach loses a lot of detailed information, it gives a quick overview of how much the various transmission channels differ, and is more informative than horizon-wise means or medians, which are difficult to interpret.

While our proposed framework is applicable to a large set of commonly used dynamic macroeconomic models, including SVARs, DSGEs, and local projections, it is currently not applicable to general state space models, which are used to model dynamics via a system of latent variables. Although some state space representations, such as those used for DSGE models, can be rewritten into SVARMA form, this rewriting cannot accommodate all such models; in particular, one cannot apply our framework directly to dynamic factor models. In state space models, transmission channels and effects can still be defined in an equivalent way using either the graphical or the potential outcomes representation. However, this definition comes with statistical identification problems that are beyond the scope of this paper. We therefore leave this extension for future research.

Our proposed framework relies on linearity of the dynamic model and is thus inherently subject to potential shortcomings of linear models. This might be considered a limitation as some evidence points to nonlinear effects of policies. For example, the effects of monetary expansions and contractions appear to be distinct \citep{coverAsymmetricEffectsPositive1992, angristSemiparametricEstimatesMonetary2018, tenreyroPushingStringUS2016}. This may, for instance, be driven by the difference in the transmission of an expansionary and a contractionary shock. Under the current setup, TCA cannot differentiate between expansionary and contractionary shocks -- it simply analyses the average transmission effects unconditional of the sign. However, most of macroeconomic analysis still relies on linear dynamic models, and TCA is applicable in a very wide class of these. TCA in its current form can therefore already complement the vast majority of empirical macroeconomic analyses. 
In addition, linear estimands are often robust to non-linearities, and moving towards non-linearity can often do more harm than good \citep{kolesarDynamicCausalEffects2024}. Furthermore, extending TCA to nonlinear models, particularly when allowing for interaction effects, requires a complete rethink of how to define transmission channels. While this is another exciting avenue for future research, it is clearly far outside the scope of the paper.

\bigskip
\noindent
{\bf Acknowledgements.}
We thank Sumanta Basu, Martin Ellison, Tobias Hartl, James McNeil, Geert Mesters, Jose Luis Montiel Olea, and seminar participants at Cornell University, the Deutsche Bundesbank Vo-Seminar,  EC$^2$ 2024 conference, European Seminar on Bayesian Econometrics 2023, Maastricht MILE seminar, the NESG conference, and VTSS Workshop for Junior Researchers (2025) for their helpful comments and discussions. 
The last author was financially supported by the Dutch Research Council (NWO) under grant number VI.Vidi.211.032. Remaining errors are our own.

%-------------------------------------------------------------------------------
% BIBLIOGRAPHY
%-------------------------------------------------------------------------------
\printbibliography

@article{poskitt1992,
  title        = {Identification of Echelon canonical forms for vector linear processes using least squares},
  author       = {Poskitt, D.},
  year         = 1992,
  journal      = {The Annals of Statistics},
  volume       = 20,
  number       = 1,
  pages        = {195--215}
}

@article{Ramey2011,
  title        = {Identifying government spending shocks: It's all in the timing},
  author       = {Ramey, Valerie A},
  year         = 2011,
  journal      = {The Quarterly Journal of Economics},
  publisher    = {MIT Press},
  volume       = 126,
  number       = 1,
  pages        = {1--50},
  %doi          = {10.1093/qje/qjq008}
}

@book{kilianStructuralVectorAutoregressive2017,
  title        = {Structural Vector Autoregressive Analysis},
  shorttitle   = {Structural Vector Autoregressive Analysis},
  author       = {Kilian, Lutz and L\"utkepohl, Helmut},
  publisher    = {{Cambridge University Press}},
  isbn         = {978-1-107-19657-5},
  date         = 2017,
  % doi          = {10.1017/9781108164818},
}

@article{imaiIdentificationInferenceSensitivity2010,
  title        = {Identification, Inference and Sensitivity Analysis for Causal Mediation Effects},
  author       = {Imai, Kosuke and Keele, Luke and Yamamoto, Teppei},
  volume       = 25,
  number       = 1,
  pages        = {51--71},
  date         = 2010,
  journaltitle = {Statistical Science},
  % doi          = {10.1214/10-sts321},
}

@misc{chanEfficientNonparametricEstimation2016,
	title        = {Efficient nonparametric estimation of causal mediation effects},
	author       = {K. C. G. Chan and K. Imai and S. C. P. Yam and Z. Zhang},
	year         = 2016,
	eprint       = {1601.03501},
	archiveprefix = {arXiv},
	primaryclass = {stat.ME}
}

@book{pearlCausalityModelsReasoning2009,
  title        = {Causality: Models, Reasoning, and Inference},
  author       = {Pearl, Judea},
  year         = 2009,
  location     = {{Cambridge}},
  publisher    = {{Cambridge University Press}},
  isbn         = {978-0-521-89560-6},
  edition      = {2nd ed.},
  % doi          = {10.1017/cbo9780511803161},
}

@article{pearlCausalMediationFormula2012,
  title        = {The Causal Mediation Formula -- A Guide to the Assessment of Pathways and Mechanisms},
  author       = {Pearl, Judea},
  year         = 2012,
  volume       = 13,
  pages        = {426--436},
  journaltitle = {Prevention Science},
}

@article{angristSemiparametricEstimatesMonetary2018,
  title        = {Semiparametric Estimates of Monetary Policy Effects: String Theory Revisited},
  author       = {Angrist, Joshua D. and Jord\`a, \`Oscar and Kuersteiner, Guido M.},
  year         = 2018,
  volume       = 36,
  number       = 3,
  pages        = {371--387},
  journaltitle = {Journal of Business \& Economic Statistics},
  % doi          = {10.1080/07350015.2016.1204919},
}

@techreport{asheshrambachanWhenCommonTime2021,
  title        = {When Do Common Time Series Estimands Have Nonparametric Causal Meaning?},
  author       = {Rambachan, Ashesh and Shephard, Neil},
  date         = 2021,
  url          = {https://scholar.harvard.edu/files/shephard/files/causalmodelformacro20211012.pdf}, 
  type         = {Manuscript, Harvard University}
}

@techreport{cloyneStateDependentLocalProjections2023,
  title        = {State-Dependent Local Projections: Understanding Impulse Response Heterogeneity},
  author       = {Cloyne, James and Jord\`a, \`Oscar and Taylor, Alan M.},
  year         = 2023,
  series       = {Working Paper Series},
  number       = 30971,
  institution  = {{National Bureau of Economic Research}},
  type         = {Working Paper},
  %doi          = {10.3386/w30971},
}

@article{danielCausalMediationAnalysis2015,
  title        = {Causal Mediation Analysis with Multiple Mediators},
  author       = {Daniel, R. M. and De Stavola, B. L. and Cousens, S. N. and Vansteelandt, S.},
  volume       = 71,
  number       = 1,
  pages        = {1--14},
  date         = {2015-03},
  journaltitle = {Biometrics},
  % doi          = {10.1111/biom.12248},
}

@book{chiswellMathematicalLogic2007,
  title        = {Mathematical Logic},
  author       = {Chiswell, Ian and Hodges, Wilfrid},
  location     = {{London ; New York}},
  publisher    = {{Oxford University Press}},
  series       = {Oxford Texts in Logic},
  number       = 3,
  isbn         = {978-0-19-921562-1},
  date         = 2007,
  pagetotal    = 250,
}

@book{rautenbergConciseIntroductionMathematical2010,
  title        = {A Concise Introduction to Mathematical Logic},
  author       = {Rautenberg, Wolfgang},
  location     = {{New York, NY}},
  publisher    = {{Springer}},
  series       = {Universitext},
  isbn         = {978-1-4419-1221-3},
  date         = 2010,
  edition      = {3rd ed.},
  % doi          = {10.1007/978-1-4419-1221-3},
}

@article{kilianDoesFedRespond2011,
  title        = {Does the {Fed} Respond to Oil Price Shocks?},
  author       = {Kilian, Lutz and Lewis, Logan T.},
  volume       = 121,
  number       = 555,
  pages        = {1047--1072},
  date         = 2011,
  journaltitle = {The Economic Journal},
  % doi          = {10.1111/j.1468-0297.2011.02437.x},
}

@article{simsDOESMONETARYPOLICY2006,
  title        = {Does Monetary Policy Generate Recessions?},
  author       = {Sims, Christopher A. and Zha, Tao},
  year         = 2006,
  volume       = 10,
  number       = 2,
  pages        = {231--272},
  journaltitle = {Macroeconomic Dynamics},
  % doi          = {10.1017/s136510050605019x},
}

@article{plagborg-mollerLocalProjectionsVARs2021,
  title        = {Local Projections and {VARs} Estimate the Same Impulse Responses},
  author       = {{Plagborg-M{\o}ller}, Mikkel and Wolf, Christian K.},
  year         = 2021,
  journal      = {Econometrica},
  volume       = 89,
  number       = 2,
  pages        = {955--980},
  % doi          = {10.3982/ecta17813},
}

@article{ravennaVectorAutoregressionsReduced2007,
  title        = {Vector autoregressions and reduced form representations of {DSGE} models},
  author       = {Ravenna, Federico},
  year         = 2007,
  journal      = {Journal of Monetary Economics},
  volume       = 54,
  number       = 7,
  pages        = {2048--2064},
  % doi          = {10.1016/j.jmoneco.2006.09.002},
}

@article{mckayWhatCanTime2023,
  title        = {What Can Time-Series Regressions Tell Us About Policy Counterfactuals?},
  author       = {McKay, Alisdair and Wolf, Christian K.},
  year         = 2023,
  journal      = {Econometrica},
  volume       = 91,
  number       = 5,
  pages        = {1695--1725},
  % doi          = {10.3982/ecta21045},
}

@article{romerNewMeasureMonetary2004,
  title        = {A New Measure of Monetary Shocks: Derivation and Implications},
  author       = {Romer, Christina D and Romer, David H},
  year         = 2004,
  journal      = {American Economic Review},
  volume       = 94,
  number       = 4,
  pages        = {1055--1084},
  % doi          = {10.1257/0002828042002651},
}

@article{gertlerMonetaryPolicySurprises2015,
  title        = {Monetary Policy Surprises, Credit Costs, and Economic Activity},
  author       = {Gertler, Mark and Karadi, Peter},
  year         = 2015,
  journal      = {American Economic Journal: Macroeconomics},
  volume       = 7,
  number       = 1,
  pages        = {44--76},
  % doi          = {10.1257/mac.20130329},
}

@article{fernandez-villaverdeABCsDsUnderstanding2007,
  title        = {{ABCs (and Ds)} of Understanding {VARs}},
  author       = {Fern\'andez-Villaverde, Jes\'us and Rubio-Ram\'irez, Juan F and Sargent, Thomas J and Watson, Mark W},
  volume       = 97,
  number       = 3,
  pages        = {1021--1026},
  date         = {2007},
  journaltitle = {American Economic Review},
  % doi          = {10.1257/aer.97.3.1021},
}

@book{galiMonetaryPolicyInflation2015,
  title        = {Monetary policy, inflation, and the business cycle: an introduction to the {New Keynesian} framework and its applications},
  author       = {Gal\'i, Jordi},
  location     = {Princeton ; Oxford},
  publisher    = {Princeton University Press},
  isbn         = {978-0-691-16478-6},
  date         = 2015,
  edition      = {2nd ed.},
  pagetotal    = 279
}

@incollection{christianoMonetaryPolicyShocks1999,
  title        = {Monetary Policy Shocks: What have we learned and to what end?},
  author       = {Christiano, Lawrence J and Eichenbaum, Martin and Evans, Charles~L.},
  booktitle    = {Handbook of Macroeconomics},
  volume       = {1A},
  isbn         = {978-0-444-50156-1},
  date         = 1999, 
  pages        = {65--148}, 
  chapter      = {2}, 
  publisher    = {Elsevier},
  editor       = {John B. Taylor and Michael Woodford},
}

@article{ariasInferenceBasedStructural2018,
  title        = {Inference Based on Structural Vector Autoregressions Identified With Sign and Zero Restrictions: Theory and Applications},
  author       = {Arias, Jonas E. and Rubio-Ram\'irez, Juan F and Waggoner, Daniel F.},
  volume       = 86,
  number       = 2,
  pages        = {685--720},
  date         = {2018},
  journaltitle = {Econometrica},
  % doi          = {10.3982/ecta14468},
}

@book{hayesIntroductionMediationModeration2018,
  title        = {Introduction to mediation, moderation, and conditional process analysis: a regression-based approach},
  author       = {Hayes, Andrew F.},
  location     = {New York},
  publisher    = {Guilford Press},
  isbn         = {978-1-4625-3465-4},
  date         = 2018,
  edition      = {2nd ed.},
  pagetotal    = 692
}

@incollection{RameyMacroeconomicShocks2016,
  title        = {Macroeconomic Shocks and Their Propagation},
  author       = {Ramey, Valerie A},
  year         = 2016,
  booktitle    = {Handbook of Macroeconomics},
  volume       = {2},
  pages        = {71--162},
  chapter      = {2},
  editor       = {John B. Taylor and Harald Uhlig},
  publisher    = {Elsevier}
}

@article{nakamuraIdentification2018,
  title        = {Identification in Macroeconomics},
  author       = {Nakamura, Emi and Steinsson, J\'{o}n},
  year         = 2018,
  journal      = {Journal of Economic Perspectives},
  volume       = 32,
  number       = 3,
  pages        = {59--86},
  % doi          = {10.1257/jep.32.3.59}
}

@article{jordaEstimationAndInference2005,
  title        = {Estimation and inference of impulse responses by Local Projections},
  author       = {Jord{\`a}, {\`O}scar},
  year         = 2005,
  journal      = {American Economic Review},
  volume       = 95,
  number       = 1,
  pages        = {161--182}, 
  % doi          = {10.1257/0002828053828518}
}

@article{coverAsymmetricEffectsPositive1992,
  title        = {Asymmetric Effects of Positive and Negative Money-Supply Shocks},
  author       = {Cover, J. P.},
  volume       = 107,
  number       = 4,
  pages        = {1261--1282},
  date         = {1992},
  journaltitle = {The Quarterly Journal of Economics},
  % doi          = {10.2307/2118388},
}

@article{tenreyroPushingStringUS2016,
  title        = {Pushing on a String: {US} Monetary Policy Is Less Powerful in Recessions},
  author       = {Tenreyro, Silvana and Thwaites, Gregory},
  volume       = 8,
  number       = 4,
  pages        = {43--74},
  date         = {2016},
  journaltitle = {American Economic Journal: Macroeconomics},
  % doi          = {10.1257/mac.20150016},
}

@article{barnichonSufficientStatisticsApproach2023,
  title        = {A Sufficient Statistics Approach for Macro Policy},
  author       = {Barnichon, R\'egis and Mesters, Geert},
  volume       = 113,
  number       = 11,
  pages        = {2809--2845},
  date         = {2023},
  journaltitle = {American Economic Review},
  % doi          = {10.1257/aer.20220581},
}

@article{barnichonIdentifyingModernMacro2020,
	title        = {Identifying Modern Macro Equations with Old Shocks},
	author       = {Barnichon, Regis and Mesters, Geert},
	volume       = 135,
	number       = 4,
	pages        = {2255--2298},
	date         = 2020,
	journaltitle = {The Quarterly Journal of Economics}
	% doi          = {10.1093/qje/qjaa022},
}

@article{rameyGovernmentSpendingMultipliers2018,
  title        = {Government Spending Multipliers in Good Times and in Bad: Evidence from {US} Historical Data},
  author       = {Ramey, Valerie A. and Zubairy, Sarah},
  volume       = 126,
  number       = 2,
  pages        = {850--901},
  date         = 2018,
  journaltitle = {Journal of Political Economy}
  % doi          = {10.1086/696277},
}

@article{rameyCostlyCapitalReallocation1998,
  title        = {Costly capital reallocation and the effects of government spending},
  author       = {Ramey, Valerie A. and Shapiro, Matthew D.},
  volume       = 48,
  pages        = {145--194},
  date         = 1998,
  journaltitle = {Carnegie-Rochester Conference Series on Public Policy}
}

@techreport{rameyDefenseNewsShocks2016,
  title        = {Defense News Shocks, 1889 - 2015: Estimates Based on News Sources},
  author       = {Ramey, Valerie A.},
  date         = 2016,
  type  = {Manuscript, Univ. California San Diego},
  url          = {https://econweb.ucsd.edu/~vramey/research/Defense_News_Narrative.pdf},
}

@article{smetsShocksFrictionsUS2007,
  title        = {Shocks and Frictions in {US} Business Cycles: A {B}ayesian {DSGE} Approach},
  author       = {Smets, Frank and Wouters, Rafael},
  year         = 2007,
  journal      = {American Economic Review},
  volume       = 97,
  number       = 3,
  pages        = {586--606},
  % doi          = {10.1257/aer.97.3.586}
}

@software{pfeiferDSGEcollection,
  title        = {{DSGE\_mod}:  A collection of Dynare models},
  author       = {Pfeifer, Johannes},
  year         = 2024,
  publisher    = {Zenodo},
  version      = {v2.0.0},
  doi          = {10.5281/zenodo.10810290},
  % url          = {https://zenodo.org/doi/10.5281/zenodo.10810290},
  % urldate      = {2024-06-11},
}

@article{morrisVARMARepresentationDSGE2016,
  title        = {{VARMA} representation of {DSGE} models},
  author       = {Morris, Stephen D.},
  year         = 2016,
  journal      = {Economics Letters},
  volume       = 138,
  pages        = {30--33},
  % doi          = {10.1016/j.econlet.2015.11.027}
}

@article{wielandFinancialDampening2020,
	title        = {Financial Dampening},
	author       = {Wieland, Johannes F. and Yang, Mu-Jeung},
	volume       = 52,
	number       = 1,
	pages        = {79--113},
	date         = 2020,
	journaltitle = {Journal of Money, Credit and Banking}
	% doi          = {10.1111/jmcb.12681},
}

@techreport{caravelloEvaluatingPolicyCounterfactuals2024,
	title        = {Evaluating Policy Counterfactuals: A {VAR-Plus} Approach},
	author       = {Caravello, Tom\'as and McKay, Alisdair and Wolf, Christian},
	year         = 2024,
	series       = {Working Paper Series},
	number       = {32988},
	type         = {Working Paper},
	institution  = {National Bureau of Economic Research}
	% doi          = {10.3386/w32988},
}

@article{dufourShortRunLong1998,
	title        = {Short Run and Long Run Causality in Time Series: Theory},
	author       = {Dufour, Jean-Marie and Renault, Eric},
	volume       = 66,
	number       = 5,
	pages        = {1099--1125},
	date         = 1998,
	journaltitle = {Econometrica}
	% doi          = {10.2307/2999631},
}

@misc{kolesarDynamicCausalEffects2024,
	title         = {Dynamic Causal Effects in a Nonlinear World: the Good, the Bad, and the Ugly},
	author        = {Michal Kolesár and Mikkel Plagborg-Møller},
	year          = 2024,
	eprint        = {2411.10415},
	archiveprefix = {arXiv},
	primaryclass  = {econ.EM}
}

@article{AcaraiEtAl2023,
	title        = {Fiscal foresight and the effects of government spending: It’s all in the monetary-fiscal mix},
	author       = {Guido Ascari and Peder Beck-Friis and Anna Florio and Alessandro Gobbi},
	year         = 2023,
	journal      = {Journal of Monetary Economics},
	volume       = 134,
	pages        = {1--15},
	% doi          = {10.1016/j.jmoneco.2022.11.008}
}

@techreport{Perotti2014,
	title        = {Defense Government Spending Is Contractionary, Civilian Government Spending Is Expansionary},
	author       = {Roberto Perotti},
	year         = 2014,
	series       = {Working Paper Series},
	number       = 20179,
	institution  = {{National Bureau of Economic Research}},
	type         = {Working Paper}
}

%-------------------------------------------------------------------------------
% APPENDIX
%-------------------------------------------------------------------------------

\numberwithin{lemma}{section}
\numberwithin{equation}{section}
\numberwithin{figure}{section}
\numberwithin{table}{section}
\begin{appendices}
% <<<<< MAIN APPENDIX >>>>>
% Introduces Boolean algebra and proofs all results
\section{Proofs of Main Theorems}
\label{appendix:proofs}
Section \ref{appendix:Boolean} first introduces some notation and preliminary results based on Boolean algebra needed in the proofs of the main theorems. 
The proofs of Theorems \ref{thm:framework-equivalence} to \ref{thm:single-shock} and Lemmas \ref{lemma:invariance-Omega} to \ref{lemma:invariance-cholesky}
are then presented in Section \ref{appendix:proofs-main-theorems}. 

\subsection{Boolean Algebra for Transmission Channels} \label{appendix:Boolean}
Definition \ref{def:transmission-channel} defines transmission channels as a sub-collection of paths. In practice, it is more convenient to define transmission channels by statements about which variables lie on the paths and which do not. Boolean algebra plays an important role in translating such statements of transmission channels into sub-collections of paths. Here we introduce the basics of Boolean algebra\footnote{For an in-depth overview of Boolean algebra and mathematical logic, see \citet{chiswellMathematicalLogic2007, rautenbergConciseIntroductionMathematical2010}.} needed in the context of TCA. The main result of this section is Lemma \ref{lem:properties} which provides an algebra for transmission channels and is key to proving Theorems \ref{thm:total-effect-and-decomposition}, \ref{thm:irf-sufficiency}, and \ref{thm:single-shock}. 

Let $b$ be a Boolean formula that is true if and only if a path satisfies it, and false otherwise. We denote $\calP_{u, v}$ the collection of all paths in $\calG(\bB, \bOmega)$ from variable/shock $u$ to variable $v$. Then $P_{u,v}(b)\subseteq \calP_{u,v}$ represents the sub-collection of paths of $\calP_{u,v}$ that satisfy $b$. By Definition \ref{def:transmission-channel}, $P_{u,v}(b)$ is a transmission channel if $u$ is a shock and at least one path does not satisfy $b$. 

A Boolean formula can consist of the following elements:
\begin{enumerate}
    \item $u$: denotes that variable $u$ must be on the paths.
    \item $b' \land u$: denotes that the paths must satisfy the Boolean formula $b'$, and variable $u$ must be on the paths.
    \item $b' \land b''$: denotes that both Boolean formulas $b'$ and $b''$ must be satisfied by the paths.
    \item $b' \lor u$: denotes that the paths must either satisfy the boolean formula $b'$ or $u$ must be on the paths, or both.
    \item $b' \lor b''$: denotes that the paths must either satisfy the boolean formula $b'$ or $b''$ or both.
    \item $\neg u$: denotes that variable $u$ cannot be on the paths. 
    \item $\neg b'$: denotes that the Boolean formula $b$ cannot be satisfied by the paths.
\end{enumerate}

Denote the total path-specific effect of the collection $P_{u,v}(b)$ by
\begin{equation*}
    \calQ_{\xi}[P_{u,v}(b)] = \xi\sum_{p \in P_{u,v}(b)}\prod_{(k\to l) \in p}\omega_{kl},
\end{equation*}
where $\xi$ is the shock size, and $\omega_{k, l}$ the path coefficient of the edge going from node $k$ to node $l$ on path $p$, in line with Definition \ref{def:transmission-effect}.

Employing standard set theory arguments and Boolean algebra, several properties of the total path-specific effect can be demonstrated. These properties are collected in the next lemma. The proofs can be found in Appendix \ref{appendix:proof-algebra}.
\begin{lemma} \label{lem:properties}
    Let $\by_t$ be generated by \eqref{eq:general-model} and Assumptions \ref{assump:general-model-structural-shocks} through \ref{assump:equal-treatment} be satisfied. Let $\calG(\bB, \bOmega)$ be the graph induced by \eqref{eq:general-model-xbx}. Denote with $\calP_{u,v}$ the set of all paths connecting variable or shock $u$ to variable $v$ in $\calG(\bB, \bOmega)$ and by $P_{u,v}(b)\subseteq \calP_{u,v}$ the subset of paths satisfying a Boolean formula $b$. Further, denote $\calQ_{\xi}[P_{u,v}(b)]$ the total path-specific effect of the paths in $P_{u,v}(b)$. Let $b'$ be another Boolean formula. Then, the following properties hold:

\begin{enumerate}[label=(\roman*), ref=\ref{lem:properties}(\roman*)]
    \item\label{enum:prop-1} Let $b=x_k$. Then $\calQ_{\xi}[P_{\varepsilon_i, x_j}(b)]=\calQ_{\xi}[\calP_{\varepsilon_i, x_k}]\calQ_{1}[\calP_{x_k, x_j}]$ with $k\neq j$. 
    
    \item\label{enum:prop-2} Let $i_1 < i_2 < ... < i_k$ and $i_k \neq  j$. Let $b = \wedge_{m=1}^k x_{i_m}$. Then $$\calQ_{\xi}[P_{\varepsilon_i,x_j}(b)] = \calQ_{\xi}[\calP_{\varepsilon_i, x_{i_1}}]\calQ_{1}[\calP_{x_{i_1}, x_{i_2}}]\dots\calQ_{1}[\calP_{x_{i_k}, x_j}].$$ 
    
    \item\label{enum:prop-3} Let $P_{\varepsilon_i,x_j}\subset \mathcal{P}_{\varepsilon_i,x_j}$ and $P_{\varepsilon_i,x_j}' \subset \mathcal{P}_{\varepsilon_i,x_j}$ be two disjoint sets of paths. Then $\calQ_{\xi}[P_{\varepsilon_i,x_j} \cup P_{\varepsilon_i,x_j}'] = \calQ_{\xi}[P_{\varepsilon_i,x_j}] + \calQ_{\xi}[P_{\varepsilon_i,x_j}']$. 
    
    \item\label{enum:prop-4} Let $P_{\varepsilon_i,x_j}\subset \mathcal{P}_{\varepsilon_i,x_j}$ and $P_{\varepsilon_i,x_j}' \subset \mathcal{P}_{\varepsilon_i,x_j}$. Then $\calQ_{\xi}[P_{\varepsilon_i,x_j}\backslash P_{\varepsilon_i,x_j}'] = \calQ_{\xi}[P_{\varepsilon_i,x_j}] - \calQ_{\xi}[P_{\varepsilon_i,x_j} \cap P_{\varepsilon_i,x_j}']$. 
    
    \item\label{enum:prop-5} $P_{\varepsilon_i,x_j}(b) \cap P_{\varepsilon_i,x_j}(b') = P_{\varepsilon_i, x_j}(b\wedge b')$. 
    
    \item\label{enum:prop-6} 
    $P_{\varepsilon_i,x_j}(b \vee b') = P_{\varepsilon_i,x_j}(b)\cup (P_{\varepsilon_i,x_j}(b')\backslash P_{\varepsilon_i,x_j}(b \wedge b'))$. 
    
    \item\label{enum:prop-7} 
    $P_{\varepsilon_i,x_j}(b \wedge \neg b') = P_{\varepsilon_i,x_j}(b)\backslash P_{\varepsilon_i,x_j}(b \wedge b')$. 
    
    \item\label{enum:prop-8} Let $b=\neg x_k$. 
    Then $\calQ_{\xi}[P_{\varepsilon_i,x_j}(b)] = \calQ_{\xi}[\mathcal{P}_{\varepsilon_i,x_j}] - \calQ_{\xi}[P_{\varepsilon_i,x_j}(x_k)]$ with $i\neq j$. 
\end{enumerate}
\end{lemma}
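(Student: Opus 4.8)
The plan is to split the eight claims into two groups. Parts (iii)--(vii) are essentially set-theoretic and Boolean identities that follow from one structural feature of $\calQ_\xi[\cdot]$ — its additivity over disjoint collections of paths — combined with elementary manipulations of the satisfying sets $P_{u,v}(b)$. Parts (i), (ii), and (viii) are the genuinely graph-theoretic factorisation statements, and these are where the DAG structure does the real work. The single fact that drives everything is that, because $\bB$ is lower-triangular with zero diagonal, an edge $x_a \to x_b$ exists only when $a<b$; hence node indices strictly increase along every directed path. Consequently every path is simple, any node $x_k$ appears at most once on a given path, and concatenating any path from $\varepsilon_i$ to $x_k$ with any path from $x_k$ to $x_j$ yields a valid simple path. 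I would record this observation first, since it immediately gives a bijection between $P_{\varepsilon_i,x_j}(x_k)$ and the product set $\calP_{\varepsilon_i,x_k}\times\calP_{x_k,x_j}$.

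With the bijection in hand, part (i) is the core computation. Under concatenation the edge set of a combined path is the disjoint union of the edge sets of its two pieces, so the weight product $\prod_{(a\to b)\in p}\omega_{ab}$ factorises as the product over the first piece times the product over the second. Summing over all pairs and keeping the single leading $\xi$ on the first factor gives $\calQ_\xi[P_{\varepsilon_i,x_j}(x_k)] = \bigl(\xi\sum_{p_1}\prod\omega\bigr)\bigl(\sum_{p_2}\prod\omega\bigr) = \calQ_\xi[\calP_{\varepsilon_i,x_k}]\,\calQ_1[\calP_{x_k,x_j}]$. Part (ii) then follows by induction on $k$: since indices increase along every path, a path meeting all of $x_{i_1},\dots,x_{i_k}$ must visit them in the stated order $i_1<\cdots<i_k$, so I peel off the initial segment ending at $x_{i_1}$ via part (i) and recurse on the remaining segment from $x_{i_1}$, producing the telescoping product.

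For the set-theoretic block the arguments are short. Additivity over disjoint unions (part (iii)) is immediate from the definition of $\calQ_\xi$ as a sum over paths. Part (iv) follows by writing $P_{\varepsilon_i,x_j} = (P_{\varepsilon_i,x_j}\setminus P'_{\varepsilon_i,x_j})\sqcup(P_{\varepsilon_i,x_j}\cap P'_{\varepsilon_i,x_j})$ and applying (iii). Part (v) is just the reading of logical conjunction as intersection of satisfying sets. Parts (vi) and (vii) are De Morgan-type rewritings combined with (v); for (vii), $P(b\wedge\neg b') = P(b)\setminus P(b') = P(b)\setminus P(b\wedge b')$, and (vi) uses the disjoint decomposition $P(b)\cup P(b') = P(b)\sqcup\bigl(P(b')\setminus P(b\wedge b')\bigr)$. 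Finally part (viii) specialises part (iv) to $A=\calP_{\varepsilon_i,x_j}$ and $B=P_{\varepsilon_i,x_j}(x_k)\subseteq A$, where $A\cap B = B$, giving $\calQ_\xi[P_{\varepsilon_i,x_j}(\neg x_k)] = \calQ_\xi[\calP_{\varepsilon_i,x_j}] - \calQ_\xi[P_{\varepsilon_i,x_j}(x_k)]$.

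The main obstacle is part (i): the entire lemma rests on the multiplicative factorisation of the path-weight sum through an intermediate node, and the delicate point is verifying that the concatenation map is a genuine bijection — that no path through $x_k$ is double-counted and that every concatenated pair is itself a legal simple path rather than a walk revisiting nodes. This is precisely where acyclicity, i.e.\ the topological ordering forced by the lower-triangular $\bB$, is indispensable; without it the product set could contain repeated-node walks and the factorisation would fail. Everything else reduces to bookkeeping once this factorisation and the additivity of $\calQ_\xi$ are in place.
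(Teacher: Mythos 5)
Your proposal is correct and follows essentially the same route as the paper: part (i) via the concatenation bijection through $x_k$ (the paper writes this as the cross product $\calP_{\varepsilon_i,x_k}\otimes\calP_{x_k,x_j}$), part (ii) by induction on the number of conjuncts, and parts (iii)--(viii) by exactly the disjoint-union, set-difference, and satisfying-set manipulations the paper uses. The only difference is one of emphasis: you make explicit the acyclicity argument (node indices strictly increase along every path because $\bB$ is lower triangular with zero diagonal) that guarantees the concatenation map is a genuine bijection, a point the paper's proof leaves implicit.
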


\subsection{Proofs of Main Results} \label{appendix:proofs-main-theorems}
%-------------------------------------------------------------------------------
% Proof of framework equivalence (Theorem 1)
%-------------------------------------------------------------------------------
\begin{proof}[\textbf{Proof of Theorem \ref{thm:framework-equivalence}}]
Under the conditions of Theorem \ref{thm:framework-equivalence},
$\calC_{\varepsilon_i, x_j}(\bepsilon^{(j)}) = \E[x_j^*(\bepsilon^{(j)}) - x_j^*(\bm 0)]$
is the causal effect of $\varepsilon_i$ on $x_j$
which can be re-written as 
\begin{equation*}
\begin{split}
\calC_{\varepsilon_i, x_j}(\bepsilon^{(j)}) &= \sum_{k_1=1}^{j-1}\left\{\bB_{j,k_1}\E[x^*_{k_1}(\bepsilon^{(j\cdot k_1)}) - x^*_{k_1}(\bm 0_{\eta(k_1)})]\right\} + \bOmega_{j,i} \bepsilon^{(j)}_{\eta(j)} \\
&= \sum_{k_1=1}^{j-1}\left\{\bB_{j,k_1}\calC_{\varepsilon_i, x_{k_1}}(\bepsilon^{(j\cdot k_1)})\right\} + \bOmega_{j,i} \bepsilon^{(j)}_{\eta(j)} \\
&= \bOmega_{j,i} \bepsilon^{(j)}_{\eta(j)} + \sum_{k_1=1}^{j-1}\bB_{j, k_1}\bOmega_{{k_1},i}\bepsilon^{(j\cdot k_1)}_{\eta(k_1)} + \sum_{k_1=1}^{j-1}\sum_{k_2=1}^{k_1-1}\left\{\bB_{j,k_1}\bB_{k_1,k_2}\calC_{\varepsilon_i, x_{k_2}}(\bepsilon^{(j\cdot k_1\cdot k_2)})\right\}.
\end{split}
\end{equation*}

Denote for $l=0, ..., j-1$,
\begin{equation*}
S(l) = \begin{cases}
\bOmega_{j,i} \bepsilon^{(j)}_{\eta(j)} & l=0 \\
\sum_{k_1=1}^{j-1}\sum_{k_2=1}^{k_1-1}\ldots\sum_{k_l=1}^{k_{l-1}-1}\bB_{j,k_1}\bB_{k_1,k_2}\ldots \bB_{k_{l-1}, k_l}\bOmega_{{k_l},i}\bepsilon^{(j\cdot k_1\cdot\ldots \cdot k_{l-1}\cdot k_l)}_{\eta(k_l)} & l \geq 1.
\end{cases}
\end{equation*}
It then follows that
\begin{equation}
    \calC_{\varepsilon_i, x_j}(\bepsilon^{(j)}) = \sum_{l=1}^{j-1}S(l).
    \label{eq:causal-simplify-3}
\end{equation}

The following observations now lead to the equivalence proof:
\begin{enumerate}
    \item $S(l)$ enumerates all paths that include $l$ intermediate variables. 
    \item Due to the structure of $\bB$, there never exists an edge from node $x_r \to x_s$ if $r > s$. Thus, there exists no path from $\epsilon_{i}$ to $x_j$ that has more than $j-1$ intermediate variables. 
    \item This last observation implies that the right-hand-side of equation \eqref{eq:causal-simplify-3} enumerates all paths that exist in $\calG(\bB, \bOmega)$ from $\epsilon_{i}$ to $x_j$. 
    \item The product $\bB_{j k_1}\bB_{k_1 k_2}...\bB_{k_{l-1}k_l}\bOmega_{{k_l},i}$ in $S(l)$ is the product of path coefficients along the path $\epsilon_{i} \to x_{k_l} \to \ldots \to x_{k_1} \to x_j$ and thus equals the path-specific effect of this path. 
    \item $\epsilon^{(j \cdot k_1 \cdot \ldots \cdot k_{l-1}\cdot k_l)}_{\eta(k_l)}$ acts as a path selector. If $\epsilon^{(j \cdot k_1 \cdot \ldots \cdot k_{l-1}\cdot k_l)}_{\eta(k_l)}=0$, such that the path is not selected, then the term $\bB_{j k_1}\bB_{k_1 k_2}...\bB_{k_{l-1}k_l}\bOmega_{{k_l},i}$ drops out of the right-hand-side of equation \eqref{eq:causal-simplify-3}. 
\end{enumerate}

According to observations 1 to 3, the potential outcomes assignment vector has as many entries as there exist paths in the graph $\calG(\bB, \bOmega)$. Next, take a set of paths $P_{\varepsilon_i, x_j}$ as given. By observations 4 and 5, we can set the elements in the potential outcomes assignment vector $\bepsilon^{(j)}$ selecting the paths in $P_{\varepsilon_i, x_j}$ equal to $\xi$ and all other elements to zero. This then implies $\calQ_\xi(P_{\varepsilon_i, x_j})=\calC_{\varepsilon_i, x_j}(\bepsilon^{(j)})$. Finally, take a potential outcomes assignment vector $\bepsilon^{(j)}$ with $\epsilon^{(j)}_k\in \{\xi, 0\}$ as given. Then, by observations 3 and 5, this potential outcomes assignment vector selects a set of paths $P_{\varepsilon_i, x_j}$ such that $\calC_{\varepsilon_i, x_j}(\bepsilon^{(j)})=\calQ_\xi(P_{\varepsilon_i, x_j})$.
\end{proof}

%-------------------------------------------------------------------------------
% Proof total effect and decomposition (Theorem 2)
%-------------------------------------------------------------------------------
\begin{proof}[\textbf{Proof of Theorem \ref{thm2-part1}}]
Recall from Theorem \ref{thm:framework-equivalence} that $\calC_{\varepsilon_i, x_j}(\bepsilon^{(j)})=\calQ_\xi[P_{\varepsilon_i, x_j}]$. Taking $\bepsilon^{(j)} = \xi \bm 1_{\eta(j)}$ immediately results in $P_{\varepsilon_i, x_j} =\calP_{\varepsilon_i, x_j}$. This, in turn, implies by Theorem \ref{thm:framework-equivalence} that $\calC_{\varepsilon_i, x_j}(\xi\bm 1_{\eta(j)}) = \calQ_\xi(\calP_{\varepsilon_i, x_j})$. Next, note that $\calC_{\varepsilon_i, x_j}(\xi\bm 1_{\eta(j)}) = \E[x_j | \varepsilon_i=\xi] - \E[x_j | \varepsilon_i=0] = \xi \bPhi_{j,i}$, which completes the proof. 
\end{proof}
\begin{proof}[\textbf{Proof of Theorem \ref{thm2-part2}}]
As $P_{\varepsilon_i, x_j}^{(1)}, \dots, P_{\varepsilon_i, x_j}^{(k)}$ is a partition, $P_{\varepsilon_i, x_j}^{(1)}, \cup_{r=2}^k P_{\varepsilon_i, x_j}^{(r)}$ is a partition as well. By the first part we therefore have 
\begin{equation*}
    \xi\bPhi_{j,i} = \calQ_\xi(\calP_{\varepsilon_i, x_j}) = \calQ_\xi[P_{\varepsilon_i, x_j}^{(1)} \cup (\cup_{r=2}^k P_{\varepsilon_i, x_j}^{(r)})].
\end{equation*}
Since $P_{\varepsilon_i, x_j}^{(1)}$ and $\cup_{r=2}^k P_{\varepsilon_i, x_j}^{(r)}$ are disjoint, Lemma \ref{enum:prop-3} implies
\begin{equation*}
    \calQ_\xi[P_{\varepsilon_i, x_j}^{(1)} \cup (\cup_{r=2}^k P_{\varepsilon_i, x_j}^{(r)})] = \calQ_\xi[P_{\varepsilon_i, x_j}^{(1)}] + \calQ_\xi[\cup_{r=2}^k P_{\varepsilon_i, x_j}^{(r)}].
\end{equation*}
Applying the same logic recursively to $\calQ_\xi[\cup_{r=2}^k P_{\varepsilon_i, x_j}^{(r)}]$ results in 
\begin{equation*}
    \xi\bPhi_{j,i} = \sum_{r=1}^k \calQ_\xi(P_{\varepsilon_i, x_j}^{(r)}),
\end{equation*}
completing the proof of the second part. 
\end{proof}
\begin{proof}[\textbf{Proof of Theorem \ref{thm2-part3}}]
Each element in the potential outcomes assignment vector $\bepsilon^{(j)}$ selects whether a specific path is active or not. Thus, for each collection $P_{\varepsilon_i, x_j}^{(r)}$ there exists a potential outcomes assignment vector $\be_r$ that has a $\xi$ at those elements selecting the paths present in $P_{\varepsilon_i, x_j}^{(r)}$ and $0$ at all other elements. Since no path can be present in two collections of the partition, $\sum_{r=1}^k \be_r = \xi \bm 1_{\eta(j)}$. Applying Theorems \ref{thm:framework-equivalence} and \ref{thm2-part2} yields $\sum_{r=1}^k \calC_{\varepsilon_i, x_j}(\be_m) = \xi\bPhi_{j,i}$.
\end{proof}

%-------------------------------------------------------------------------------
% Proof IRF sufficiency (Theorem 3)
%-------------------------------------------------------------------------------

\begin{proof}[\textbf{Proof of Theorem \ref{thm3-part1}}]
After substituting $\bQ'\bA_i^*=\bar\bA$, $\bQ'\bPsi_i\bQ = \bar\bPsi_i$ and $\bgamma_t = \bQ'\bvarepsilon_t$, equation \eqref{eq:general-model-ql} can be written as 
\begin{equation}
    \bL\by_t^* = \sum_{i=1}^\ell \bar\bA_i\by_{t-i}^* + \bgamma_t + \sum_{j=1}^q \bar\bPsi_j\bgamma_t.
    \label{eq:appendix-proof-lemma-irf-suff-1}
\end{equation}
With $\bL$ lower-triangular, this system is obtained from the reduced-form system using a Cholesky-orthogonalisation scheme with the ordering determined by $\bT$. 

As in Appendix \ref{appendix:rewrite}, the system \eqref{eq:appendix-proof-lemma-irf-suff-1} can be written as
\begin{equation}
\bx = \tilde\bB\bx + \tilde\bOmega\bgamma, 
\label{eq:appendix-proof-lemma-irf-suff-xbx}
\end{equation}
where $\bgamma=(\bgamma_t', \dots, \bgamma_{t+h}')'$, $\bx=(\by_t', \dots, \by_{t+h}')'$ and $\tilde\bB=\bB$, and 
\begin{equation*}
    \tilde\bOmega = 
        \begin{bmatrix}
            \bD & 0 & \dots & 0 \\
            \bD\bar\bPsi_1 & \bD & \dots & 0 \\
            \vdots & \ddots & \ddots & \vdots \\
            \bD\bar\bPsi_h & \dots & \bD\bar\bPsi_1 & \bD
        \end{bmatrix},
\end{equation*}
with $\bar\bA_i=\bO$ for $i>\ell$, and $\bar\bPsi_j=\bO$ for $j>q$, and $\bD=\text{diag}(\bL)^{-1}$, where $\text{diag}(\bX)$ is a diagonal matrix of the diagonal of $\bX$. 

Since $\bB = \tilde\bB$, the system \eqref{eq:appendix-proof-lemma-irf-suff-xbx} induces a graph $\calG(\tilde\bB, \tilde\bOmega)$ that has the same path-coefficients for any edge $x_j \to x_i$ as the graph $\calG(\bB, \bOmega)$.

Because the shocks in $\bgamma$ are uncorrelated, we can use the graph $\calG(\tilde\bB, \tilde\bOmega)$ to investigate the effect of the shock $\gamma_i$ on $x_j$ along all paths in the graph, $\calP_{\gamma_i, x_j}$. This effect is given by the total path-specific effect
\begin{equation}
    \sum_{p \in \calP_{\gamma_i, x_j}}\prod_{u\to v \in p}\tilde\omega_{u,v}
    \label{eq:appendix-proof-lemma-irf-suff-2},
\end{equation}
where $\tilde\omega_{u, v}$ is the path coefficient of the edge connecting $u$ to $v$. 

Since $\tilde\bOmega$ is diagonal, any path in $\calP_{\gamma_i, x_j}$ must always first go into $x_i$ before it can go through any other variables. Thus equation \eqref{eq:appendix-proof-lemma-irf-suff-2} can also be written as 
\begin{equation}
   \tilde\omega_{\gamma_i, x_i}\sum_{p \in \calP_{x_i, x_j}}\prod_{u\to v \in p}\tilde\omega_{u, v}. 
   \label{eq:appendix-proof-lemma-irf-suff-3}
\end{equation}

Because, $\tilde\bB=\bB$, all path coefficients along paths connecting $x_k$ to $x_l$ for all $k, l$, are the same as in the graph $\calG(\bB, \bOmega)$, $\tilde\omega_{u,v}=\omega_{u,v}$ for all $u\neq \gamma_i$. Thus, \eqref{eq:appendix-proof-lemma-irf-suff-3} can also be written as
\begin{equation}
   \tilde\omega_{\gamma_i, x_i}\sum_{p \in \calP_{x_i, x_j}}\prod_{u\to v \in p}\omega_{u, v}.
   \label{eq:appendix-proof-lemma-irf-suff-4}
\end{equation}

Letting $\gamma_i$ take the role of $\varepsilon_i$ in Theorem \ref{thm:total-effect-and-decomposition}, one can show that the total path-specific effect of $\calP_{\gamma_i, x_j}$ in the graph $\calG(\tilde\bB, \tilde\bOmega)$ equals the total effect of $\gamma_i$ on $x_j$, which is given by the impulse response matrix
\begin{equation}
    \tilde\bPhi_{j,i} = [(\bI - \tilde\bB)^{-1}\tilde\bOmega]_{j,i} = [(\bI -\bB)^{-1}\tilde\bOmega]_{j,i}.
   \label{eq:appendix-proof-lemma-irf-suff-5}
\end{equation} 

Similarly, the effect of $\gamma_i$ on $x_i$, which consists of only the direct edge $\gamma_i \to x_i$ equalling $\tilde\omega_{\gamma_i, x_i}$, is a total effect equal to the impulse response of $\gamma_i$ on $x_i$,
\begin{equation}
    \tilde\bPhi_{i, i} = [(\bI - \bB)^{-1}\tilde\Omega]_{i, i}.
   \label{eq:appendix-proof-lemma-irf-suff-6}
\end{equation}
By combining \eqref{eq:appendix-proof-lemma-irf-suff-4}, \eqref{eq:appendix-proof-lemma-irf-suff-5} and \eqref{eq:appendix-proof-lemma-irf-suff-6}, 
\begin{equation}
   \tilde\bPhi_{i, i}\sum_{p \in \calP_{x_i, x_j}}\prod_{u\to v in p}\omega_{u, v} = \tilde\bPhi_{j, i}.
   \label{eq:appendix-proof-lemma-irf-suff-7}
\end{equation}
Rearranging,  \eqref{eq:appendix-proof-lemma-irf-suff-7} gives the final result,
\begin{equation*}
   \calQ_{1}(P_{x_i, x_j}) = \sum_{p \in \calP_{x_i, x_j}}\prod_{u\to v in p}\omega_{u, v} = \tilde\bPhi_{j, i}\tilde\bPhi_{i, i}^{-1}.
\end{equation*}

To complete the proof, observe that since \eqref{eq:appendix-proof-lemma-irf-suff-1} is the system obtained from a Cholesky-orthogonalisation scheme with the ordering determined by $\bT$, the impulse response matrix $\tilde\bPhi=(\bI - \bB)^{-1}\tilde\bOmega$ is also given by a Cholesky-orthogonalisation scheme with the ordering determined by $\bT$.
\end{proof}

\begin{proof}[\textbf{Proof of Theorem \ref{thm3-part2}}]
Let $b$ be a Boolean formula. We say $b$ is in disjunctive normal form (DNF) if $b$ can be written as 
$b = b_1 \vee b_2 \vee ... \vee b_n,$
where $b_i$ includes only conjunctions, $\wedge$, and negations, $\neg$. Further, $b$ can be written in $\text{DNF}_n$ if $b$ can be written in disjunctive normal form involving only $n-1$ disjunctions, $\vee$. 

According to Theorem 3.8.8 in \citet{chiswellMathematicalLogic2007} and Theorem 2.1 in \citet{rautenbergConciseIntroductionMathematical2010}, every Boolean formula $b$ can be written in DNF. Thus, for every Boolean formula there exists a $k\in \mathbb{N}$ such that $b$ can be written in $\text{DNF}_k$. Additionally, for any collection of paths $P_{\varepsilon_i, x_j}\subseteq \calP_{\varepsilon_i, x_j}$ there exists a boolean formula $b$ such that $P_{\varepsilon_i, x_j}(b) = P_{\varepsilon_i, x_j}$. To prove part two of Theorem \ref{thm:irf-sufficiency} it suffices to show that for any $DNF_k$ with $k\in \mathbb{N}$, there exists a function $f: \R^{hK\times hK}\times \R^{hk\times hk} \to \R$ such that $\calQ_\xi(P_{\varepsilon_i, x_j}(b)) = f(\bPhi, \tilde\bPhi)$. 

We prove this by induction. 
Let $k=1$ such that the Boolean formula $b$ consists of only conjunctions and negations; the Boolean formula $b$ is $\text{DNF}_1$. If $b$ has no negations, then Lemma \ref{enum:prop-2} with Theorems \ref{thm:total-effect-and-decomposition} and \ref{thm3-part1} imply that there exists a function $f$ such that $\calQ_\xi(P_{\varepsilon_i, x_j}(b)) = f(\bPhi, \tilde\bPhi)$. If $b$ has a single negation then we may write $b = \tilde b \land \neg x_r$ for some variable $x_r$. Then \begin{equation*}
    \begin{split}
        \calQ_{\xi}[P_{\varepsilon_i, x_j}(b)] &= \calQ_{\xi}[P_{\varepsilon_i, x_j}(\tilde{b} \land \neg x_r)] 
        = \calQ_{\xi}[P_{\varepsilon_i, x_j}(\tilde{b})\setminus P_{\varepsilon_i, x_j}(\tilde{b} \land x_r)] \\
        &= \calQ_{\xi}[P_{\varepsilon_i, x_j}(\tilde{b})] - \calQ_{\xi}[P_{\varepsilon_i, x_j}(\tilde{b} \land x_r)].
    \end{split}
\end{equation*}
The second line above follows by Lemma \ref{enum:prop-7} while the last line follows by Lemma \eqref{enum:prop-4} and \eqref{enum:prop-5}. Since $\tilde b$ and $\tilde b \land x_r$ are Boolean formulas consisting of only conjunctions, there exist functions $f_1$ and $f_2$ such that $\calQ_\xi(P_{\varepsilon_i, x_j}(\tilde b)) = f_1(\bPhi, \tilde\bPhi)$ and $\calQ_\xi(P_{\varepsilon_i, x_j}(\tilde b \land x_r)) = f_2(\bPhi, \tilde\bPhi)$. Thus, let $f(\bPhi, \tilde\bPhi) = f_1(\bPhi, \tilde\bPhi) - f_2(\bPhi, \tilde\bPhi)$. Then $\calQ_\xi(P_{\varepsilon_i, x_j}(b))=f(\bPhi, \tilde\bPhi)$.

Finally, if $b$ consists of $m$ negations, then we may write $b=\bar b \land \neg x_r$ for some variable $x_r$, where $\bar b$ consists of $m-1$ negations. We can thus again write 
\begin{equation*}
    \begin{split}
        \calQ_{\xi}[P_{\varepsilon_i, x_j}(b)] &= \calQ_{\xi}[P_{\varepsilon_i, x_j}(\bar{b} \land \neg x_r)] 
        = \calQ_{\xi}[P_{\varepsilon_i, x_j}(\bar{b})\setminus P_{\varepsilon_i, x_j}(\bar{b} \land x_r)] \\
        &= \calQ_{\xi}[P_{\varepsilon_i, x_j}(\bar{b})] - \calQ_{\xi}[P_{\varepsilon_i, x_j}(\bar{b} \land x_r)].
    \end{split}
\end{equation*}
By the induction hypothesis, there exists a function $f_1$ sucht that $\calQ_\xi(P_{\varepsilon_i, x_j}(\bar b))=f_1(\bPhi, \tilde\bPhi)$. Additionally, since $\bar b \land x_r$ consists of a only $m-1$ negations, the induction hypothesis also implies that there exists a function $f_2$ such that $\calQ_\xi(P_{\varepsilon_i, x_j}(\bar b \land x_r))=f_2(\bPhi, \tilde\bPhi)$. Thus, let $f(\bPhi, \tilde\bPhi) = f_1(\bPhi, \tilde\bPhi) - f_2(\bPhi, \tilde\bPhi)$. Then $\calQ_\xi(P_{\varepsilon_i, x_j}(b))=f(\bPhi, \tilde\bPhi)$. Thus, if $b$ is $\text{DNF}_1$ with $m\in\mathbb{N}$ negations, then there exists a function $f$ sucht that $\calQ_\xi(P_{\varepsilon_i, x_j}(b)) = f(\bPhi, \tilde\bPhi)$. 

Next, suppose there exists a function $f$ such that $\calQ_\xi(P_{\varepsilon_i, x_j}(\bar b)) = f(\bPhi, \tilde\bPhi)$ if $\bar b$ is $\text{DNF}_k$ and let the Boolean formula $b$ be $\text{DNF}_{k+1}$. We can then write $b = b_1 \lor \bar b$ where $b_1$ is $\text{DNF}_1$. By Lemma \ref{enum:prop-6}, it follows that
\begin{equation*}
    P_{\varepsilon_i, x_j}(b) = P_{\varepsilon_i, x_j}(b_1) \cup (P_{\varepsilon_i, x_j}(\bar b)\backslash P_{\varepsilon_i, x_j}(b_1 \land \bar b)).
\end{equation*}
Note that $P_{\varepsilon_i, x_j}(b_1)$ and $P_{\varepsilon_i, x_j}(\bar b)\backslash P_{\varepsilon_i, x_j}(b_1 \land \bar b)$ are disjoint. Thus, by Lemma \ref{enum:prop-3} and \ref{enum:prop-4},
\begin{equation*}
    \calQ_\xi(P_{\varepsilon_i, x_j}(b)) = 
    \calQ_\xi(P_{\varepsilon_i, x_j}(\underbrace{b_1}_{\text{DNF}_1})) + 
    \calQ_\xi(P_{\varepsilon_i, x_j}(\underbrace{\bar b}_{\text{DNF}_k})) - 
    \calQ_\xi(P_{\varepsilon_i, x_j}(\underbrace{b_1 \land \bar b}_{\text{DNF}_k})).
\end{equation*}
Since there exists a function $f_1$ such that $\calQ_\xi(P_{\varepsilon_i, x_j}(b_1))=f_1(\bPhi, \tilde\bPhi)$, and the induction hypothesis implies that there exist functions $f_2$ and $f_3$ such that $\calQ_\xi(P_{\varepsilon_i, x_j}(\bar b))=f_2(\bPhi, \tilde\bPhi)$ and $\calQ_\xi(P_{\varepsilon_i, x_j}(b_1 \land \bar b))=f_3(\bPhi, \tilde\bPhi)$, there exists a function $f(\bPhi, \tilde\bPhi)=f_1(\bPhi, \tilde\bPhi) + f_2(\bPhi, \tilde\bPhi) - f_3(\bPhi, \tilde\bPhi)$ such that $\calQ_\xi(P_{\varepsilon_i, x_j}(b))=f(\bPhi, \tilde\bPhi)$.
We can then conclude that, for any $n\in \mathbb{N}$ and any Boolean formula $b$ being $\text{DNF}_n$, there exists a function $f$ such that $\calQ_\xi(P_{\varepsilon_i, x_j}(b))=f(\bPhi, \tilde\bPhi)$.
This concludes the proof.
\end{proof}

%-------------------------------------------------------------------------------
% Proof single shock (Theorem 4)
%-------------------------------------------------------------------------------

\begin{proof}[\textbf{Proof of Theorem \ref{thm:single-shock}}]
Theorem \ref{thm:irf-sufficiency} shows that there always exists a function $f: \R^{hK\times hK}\times \R^{hK\times hK} \to \R$ such that $\calQ_\xi(P_{\varepsilon_i, x_j})=f(\bPhi, \tilde\bPhi)$. This function is found by recursively applying Lemma \ref{enum:prop-1} through \ref{enum:prop-8}. The only time this recursion stops is at Lemma \ref{enum:prop-1}, \ref{enum:prop-2} or \ref{enum:prop-8}. It thus suffices to show that Lemmas \ref{enum:prop-1}, \ref{enum:prop-2} and \ref{enum:prop-8} only require $\bPhi_{\cdot, i}$ and $\tilde\bPhi$ for the calculation of the transmission effect. 

For Lemma \ref{enum:prop-1} note that by Theorem \ref{thm3-part1} $\calQ_\xi(\calP_{x_k, x_j})=\tilde\Phi_{j,k}$ and by Theorem \ref{thm:total-effect-and-decomposition} $\calQ_\xi(\calP_{\varepsilon_i, x_k})=\Phi_{k,i}$. Thus,  Lemma \ref{enum:prop-1} only requires $\bPhi_{\cdot, i}$ and $\tilde\bPhi$ for the calculation of transmission effects. 

For Lemma \ref{enum:prop-2} note that by Theorem \ref{thm3-part1} $\calQ_\xi(\calP_{x_{i_r}, x_{i_s}})=\tilde\Phi_{i_s,i_r}$ for $s>r$ and $0<s,r\leq k$. Additionally, by the same Theorem $\calQ_\xi(\calP_{x_{i_k}, x_j})=\tilde\Phi_{j, i_k}$. Lastly, by Theorem \ref{thm:total-effect-and-decomposition}, $\calQ_\xi(\calP_{\varepsilon_i, x_{i_1}}) = \Phi_{i_1, i}$. Thus,  Lemma \ref{enum:prop-2} only requires $\bPhi_{\cdot, i}$ and $\tilde\bPhi$ for the calculation of transmission effects. 

Finally, for Lemma \ref{enum:prop-8} note that $\calQ_\xi(P_{\varepsilon_i, x_j}(x_k)) = \calQ_\xi(\calP_{\varepsilon_i, x_k})\calQ_1(\calP_{x_k, x_j})$ by Lemma \ref{enum:prop-1}, which only requires $\bPhi_{\cdot, i}$ and $\tilde\bPhi$ for the calculation of transmission effects. Additionally, by Theorem \ref{thm:total-effect-and-decomposition}, $\calQ_\xi(\calP_{\varepsilon_i, x_j})=\Phi_{j,i}$. 

Thus, Lemma \ref{enum:prop-1}, \ref{enum:prop-2} and \ref{enum:prop-8} only require $\bPhi_{\cdot, i}$ and $\tilde\bPhi$ for the calculation of transmission effects. Therefore, the calculation of any transmission effect requires only the $i$th column of $\bPhi$, $\bPhi_{\cdot, i}$, to be structurally identified. 
\end{proof}

%-------------------------------------------------------------------------------
% Proofs for transmission matrix invariance
%-------------------------------------------------------------------------------

\begin{proof}[\textbf{Proof of Lemma \ref{lemma:invariance-Omega} and \ref{lemma:invariance-B}}]
    Let $\bS$ be the permutation matrix that inverses the order and $\bQ^*\bR=\bA_0\bS=\bar\bA_0$ be the QR-decomposition of $\bar\bA_0$. Since $\bQ^*_{\cdot, j}$ is obtained as the residual of the projection of the $j$th column of $\bar\bA_0$ on all previous columns,  
    it is invariant to re-ordering of columns $1:(j-1)$ and $(j+1):K$ of $\bar\bA_0$. 

    Let $\bQ\bL=\bA_0$ be the QL-decomposition of $\bA_0$. It holds that $\bQ=\bQ^*\bS$. Thus, $\bQ_{\cdot, j}$ is invariant to re-ordering of columns $1:(j-1)$ and $(j+1):K$ of $\bA_0$; equivalently, it is invariant to re-ordering of these rows in the matrix $\bT$. 

    For the invariance of $\bB_{i,j}$, note that $\bB_{i,j}$ is either of the form $[\bI - \bD\bL]_{r,c} = [\bI - \bD\bQ'\bA_0^*]_{r,c}$, $[\bD\bQ'\bA_l^*]_{r,c}$ ($0<l\leq h)$, or $0$ ($i<j)$, where $r=mod(i+K-1,K) + 1$, $c=mod(j+K-1, K) + 1$, and $mod(a,b)$ is the remainder of the division of $a$ by $b$. In the last case, $\bB_{i,j}$ is naturally invariant to any re-orderings. For the first two, we must show that $[\bD\bQ'\bA_l^*]_{r,c}$ for $0\leq l \leq h$ is invariant to re-ordering of the rows $1:min(r,c)$, $(min(r,c)+1):(max(r,c)-1)$, and $(max(r,c)+1):K$ of the transmission matrix $\bT$, where $a:b$ is the unit range from $a$ to $b$.

    Note that $\bD=\text{diag}(\bL)^{-1}$ such that $[\bD\bQ'\bA_l^*]_{r,c}=\bL_{r,r}^{-1}\bQ'_{r,\cdot}[\bA_l^*]_{\cdot, c}$. Since $\bL_{r,r}=\bQ'_{r, \cdot}[\bA_0^*]_{\cdot, r}$, the invariance of $\bQ$ implies that $\bL_{r,r}$ is invariant to re-ordering of the rows $1:(r-1)$ and $(r+1):K$ of the transmission matrix $\bT$. Similarly, the invariance of $\bQ$ implies that $\bQ'_{r,\cdot}[\bA_l^*]_{\cdot, c}$ is invariant to re-ordering of the rows $1:min(r,c)$, $(min(r,c)+1):(max(r,c)-1)$, and $(max(r,c)+1):K$ of the matrix $\bT$. Taking both observations together proves the invariance of $\bB_{i,j}$.

    For the invariance of $\bOmega_{\cdot, j}$, note that $\bOmega_{i,j}$ is either $0$ ($j>i$), $[\bD\bQ']_{r,c}$, or $[\bD\bQ'\bPsi_l]_{r,c}$ ($0<l\leq h)$. Since $\bPsi_l$ is not affected by re-ordering of rows in the transmission matrix $\bT$, it suffices to show that $[\bD\bQ']_{r, \cdot}$ is invariant to re-ordering of the rows $1:(r-1)$ and $(r+1):K$ of the transmission matrix $\bT$. Since $[\bD\bQ']_{r, \cdot}=\bL_{r,r}^{-1}\bQ'_{r, \cdot}$, the arguments from above directly prove this invariance. 
\end{proof}

\begin{proof}[\textbf{Proof of Lemma \ref{lemma:invariance-cholesky}}]
    The result follows  from Proposition 4.1 in \citet{christianoMonetaryPolicyShocks1999}.
\end{proof}

% <<<<< SUPPLEMENT >>>>> 
\newpage
% Proofs for the (boolean) algebra lemma
\section{Additional Proofs}
\label{appendix:proof-algebra}

\begin{proof}[\textbf{Proof of Lemma \ref{enum:prop-1}}]
Each path $p\in P_{\varepsilon_{i},x_j}(b)$ can be split into two parts: $p_1$ which goes from $\varepsilon_{i}$ to $x_k$, and $p_2$ which goes from $x_k$ to $x_j$. So we have $P_{\varepsilon_{i},x_j}(b) = \mathcal{P}_{\varepsilon_{i},x_k} \otimes \mathcal{P}_{x_k,x_j}$ where $\otimes$ denotes the cross product of paths: For $p_1 \in \mathcal{P}_{\varepsilon_{i}, x_k}$ and $p_2 \in \mathcal{P}_{x_k,x_j}$, $p_1 \to p_2 \in \mathcal{P}_{\varepsilon_{i},x_k} \otimes \mathcal{P}_{x_k, x_j}$. The total path specific effect can then be written as 
\begin{equation*}
\begin{split}
\calQ_{\xi}[P_{\varepsilon_i,x_j}(b)] &= \xi\sum_{p \in P_{\varepsilon_i,x_j}(b)}\prod_{(u \to v) \in p} \omega_{uv} \\
&= \xi\sum_{p_1 \in \mathcal{P}_{\varepsilon_i,x_k}}\sum_{p_2 \in \mathcal{P}_{x_k,x_j}}\left(\prod_{(u \to v) \in p_1}\omega_{uv}\right)\left(\prod_{(u \to v) \in p_2}\omega_{uv}\right) \\
&= \xi\sum_{p_1 \in \mathcal{P}_{\varepsilon_i,x_k}}\left\{\prod_{(u \to v) \in p_1}\omega_{uv}\right\}\sum_{p_2 \in \mathcal{P}_{x_k,x_j}}\left\{\prod_{(u \to v) \in p_2}\omega_{uv}\right\} \\
&= \calQ_{\xi}[\mathcal{P}_{\varepsilon_{i},x_k}]\calQ_{1}[\mathcal{P}_{x_k,x_j}]. \qedhere
\end{split}
\label{eq:proof-prop-1-1}
\end{equation*}    
\end{proof}

\begin{proof}[\textbf{Proof of Lemma \ref{enum:prop-2}}]
We prove this by induction. We know from Lemma \ref{enum:prop-1} that it holds for $k=1$. Now suppose it holds for some $k$. We then need to show that it holds for $k+1$. To do so, first define $b_k = \wedge_{m=1}^k x_{i_m}$ and $b_{k+1} = \wedge_{m=1}^{k+1} x_{i_m}$. Any path $p \in P_{\varepsilon_i, x_j}(b_{k+1})$ can then be split into three parts: $p_1 \in P_{\varepsilon_i,x_{i_k}}(b_k)$, $p_2 \in \mathcal{P}_{x_{i_k}, x_{i_{k+1}}}$, and $p_3 \in \mathcal{P}_{x_{i_{k+1}}, x_j}$. The following is then an extension of the proof to \ref{enum:prop-1}: 
\begin{equation*}
\begin{split}
&\calQ_{\xi}[P_{\varepsilon_i,x_j}(b_{k+1})] = \xi\sum_{p \in P_{\varepsilon_i,x_j}(b_{k+1})}\prod_{(u\to v) \in p}\omega_{u,v} \\
&= \xi\sum_{p_1 \in P_{\varepsilon_i,x_{i_k}}(b_k)}\sum_{p_2 \in \mathcal{P}_{x_{i_k}, x_{i_{k+1}}}}\sum_{p_3 \in \mathcal{P}_{x_{i_{k+1}}, x_j}}\left(\prod_{(u\to v) \in p_1}\omega_{u,v}\right)\left(\prod_{(u\to v) \in p_2}\omega_{u,v}\right)\left(\prod_{(u\to v) \in p_3}\omega_{u,v}\right) \\
&= \xi\sum_{p_1 \in P_{\varepsilon_i,x_{i_k}}(b_k)}\left\{\prod_{(u\to v) \in p_1}\omega_{u,v}\right\}\sum_{p_2 \in \mathcal{P}_{x_{i_k}, x_{i_{k+1}}}}\left\{\prod_{(u\to v) \in p_2}\omega_{u,v}\right\}\sum_{p_3 \in \mathcal{P}_{x_{i_{k+1}}, x_j}}\left\{\prod_{(u\to v) \in p_3}\omega_{u,v}\right\} \\
&=\calQ_{\xi}[P_{\varepsilon_i,x_{i_k}}(b_k)]\calQ_{1}[\mathcal{P}_{x_{i_k}, x_{i_{k+1}}}]\calQ_{1}[\mathcal{P}_{x_{i_{k+1}}, x_j}] .
\end{split}
\end{equation*}

By the induction hypothesis, 
\begin{equation*}
    \calQ_{\xi}[P_{\varepsilon_i, x_{i_k}}(b_k)] = \calQ_{\xi}[\calP_{\varepsilon_i, x_{i_1}}]\calQ_1[\calP_{x_{i_1}, x_{i_2}}]\dots \calQ_1[\calP_{x_{i_{k-1}}, x_{i_k}}].
\end{equation*}
Thus, the statement holds for $k+1$. This implies that it holds for all $k\in \mathbb{N}$, concluding the proof. 
\end{proof}
\begin{proof}[\textbf{Proof of Lemma \ref{enum:prop-3}}]
Since for any $p \in P_{\varepsilon_i,x_j}\cup P_{\varepsilon_i,x_j}'$, we have $p\in P_{\varepsilon_i,x_j}$ or $p\in P_{\varepsilon_i,x_j}'$ but never in both, we can write 
\begin{align*}
    \calQ_{\xi}[P_{\varepsilon_i,x_j}\cup P_{\varepsilon_i,x_j}'] &= \xi\sum_{p \in P_{\varepsilon_i,x_j}\cup P_{\varepsilon_i,x_j}'}\prod_{(u\to v) \in p}\omega_{u,v} \\
    &= \xi\sum_{p \in P_{\varepsilon_i, x_j}}\prod_{(u\to v) \in p}\omega_{u,v} + \xi\sum_{p \in P_{\varepsilon_i,x_j}'}\prod_{(u\to v) \in p}\omega_{u,v} \\
    &= \calQ_{\xi}[P_{\varepsilon_i,x_j}] + \calQ_{\xi}[P_{\varepsilon_i,x_j}']. \qedhere
\end{align*}
\end{proof}

\begin{proof}[\textbf{Proof of Lemma \ref{enum:prop-4}}]
Make the following two observations: 
\begin{enumerate}
    \item $P_{\varepsilon_i, x_j}\backslash P_{\varepsilon_i,x_j}'$ and $P_{\varepsilon_i,x_j} \cap P_{\varepsilon_i,x_j}'$ are disjoint. 
    \item $P_{\varepsilon_i,x_j}\backslash P_{\varepsilon_i,x_j}' \cup (P_{\varepsilon_i,x_j} \cap P_{\varepsilon_i,x_j}')=P_{\varepsilon_i,x_j}$.
\end{enumerate}
Using these observations and \ref{enum:prop-3}, we have 
\begin{equation*}
\calQ_{\xi}[P_{\varepsilon_i,x_j}] = \calQ_{\xi}[P_{\varepsilon_i,x_j}\backslash P_{\varepsilon_i,x_j}' \cup (P_{\varepsilon_i,x_j} \cap P_{\varepsilon_i,x_j}')] = \calQ_{\xi}[P_{\varepsilon_i,x_j}\backslash P_{\varepsilon_i,x_j}'] + \calQ_{\xi}[P_{\varepsilon_i,x_j}\cap P_{\varepsilon_i,x_j}'].
\end{equation*}
The result follows immediately.  
\end{proof}
\begin{proof}[\textbf{Proof of Lemma \ref{enum:prop-5}}]
First, let $p \in P_{\varepsilon_i, x_j}(b) \cap P_{\varepsilon_i,x_j}(b')$. Then $p$ is in both $P_{\varepsilon_i,x_j}(b)$ and $P_{\varepsilon_i,x_j}(b')$. This can only be if $p$ satisfies both $b$ and $b'$. But then $p$ is also in $P_{\varepsilon_i,x_j}(b \wedge b')$. Second, let $p \in P_{\varepsilon_i, x_j}(b \wedge b')$. Then $p$ satisfies both $b$ and $b'$. Thus $p\in P_{\varepsilon_i,x_j}(b)$ and $p\in P_{\varepsilon_i,x_j}(b')$. But then $p\in P_{\varepsilon_i,x_j}(b)\cap P_{\varepsilon_i,x_j}(b')$. The result now follows from the above two observations. 
\end{proof}

\begin{proof}[\textbf{Proof of Lemma \ref{enum:prop-6}}]
First, let $p\in P_{\varepsilon_i,x_j}(b \vee b')$. Then $p$ satisfies either $b$, or $b'$ or both. 
\begin{itemize}
        \item If $p$ satisfies $b$, then $p\in P_{\varepsilon_i,x_j}(b)$ and thus $p \in P_{\varepsilon_i,x_j}(b)\cup (P_{\varepsilon_i,x_j}(b')\backslash P_{\varepsilon_i,x_j}(b \wedge b'))$.
        \item If $p$ satisfies $b$ and $b'$, then $p$ satisfies $b$ and thus $p\in P_{\varepsilon_i,x_j}(b)$ and therefore $p\in P_{\varepsilon_i,x_j}(b)\cup (P_{\varepsilon_i,x_j}(b')\backslash P_{\varepsilon_i,x_j}(b \wedge b'))$.
        \item If $p$ satisfies $b'$ but not $b$, then $p\in P_{\varepsilon_i,x_j}(b')\backslash P_{\varepsilon_i,x_j}(b \wedge b')$ and thus $p \in P_{\varepsilon_i,x_j}(b)\cup (P_{\varepsilon_i,x_j}(b')\backslash P_{\varepsilon_i,x_j}(b \wedge b'))$. 
    \end{itemize}
Therefore, $p\in P_{\varepsilon_i, x_j}(b\vee b') \Rightarrow p\in P_{\varepsilon_i,x_j}(b)\cup (P_{\varepsilon_i,x_j}(b')\backslash P_{\varepsilon_i,x_j}(b \wedge b'))$.

Second, let $p\in P_{\varepsilon_i,x_j}(b)\cup (P_{\varepsilon_i,x_j}(b')\backslash P_{\varepsilon_i,x_j}(b \wedge b'))$.
    \begin{itemize}
        \item If $p\in P_{\varepsilon_i,x_j}(b)$, then $p$ satisfies $b$ or $p$ satisfies $b$ and $b'$.
        \item If $p \in P_{\varepsilon_i,x_j}(b')\backslash P_{\varepsilon_i,x_j}(b \wedge b')$, then $p$ satisfies $b'$ but not $b$.
        \item $P_{\varepsilon_i,x_j}(b)$ and $P_{\varepsilon_i,x_j}(b')\backslash P_{\varepsilon_i,x_j}(b \wedge b')$ are disjoint and thus $p$ never falls into both. 
    \end{itemize}
Thus, $p$ always satisfies $b$ or $b'$ or both, and thus $p\in P_{\varepsilon_i,x_j}(b)\cup (P_{\varepsilon_i,x_j}(b')\backslash P_{\varepsilon_i,x_j}(b \wedge b')) \Rightarrow p\in P_{\varepsilon_i,x_j}(b\vee b')$. Both points together prove the statement.
\end{proof}
\begin{proof}[\textbf{Proof of Lemma \ref{enum:prop-7}}]
First, let $p \in P_{\varepsilon_i,x_j}(b \wedge \neg b')$. Then $p$ satisfies $b$ but not $b'$. Thus $p \in P_{\varepsilon_i,x_j}(b)$ but $p \notin P_{\varepsilon_i,x_j}(b \wedge b')$. Therefore, $p\in P_{\varepsilon_i,x_j}(b)\backslash P_{\varepsilon_i,x_j}(b\wedge b')$. Second, let $p\in P_{\varepsilon_i,x_j}(b)\backslash P_{\varepsilon_i,x_j}(b\wedge b')$. Then $p$ satisfies $b \wedge \neg (b\wedge b')$. A truth table  then shows that $p$  satisfies $b\wedge \neg b'$ and therefore $p\in P_{\varepsilon_i,x_j}(b\wedge \neg b')$. Both points together prove the statement.
\end{proof}
\begin{proof}[\textbf{Proof of Lemma \ref{enum:prop-8}}]
Note that 
$P_{\varepsilon_i,x_j}(\neg x_k) = \mathcal{P}_{\varepsilon_i,x_j}\backslash P_{\varepsilon_i,x_j}(x_k)$. Then by Property \ref{enum:prop-4} we have 
\begin{equation*}
\begin{split}
\calQ_{\xi}[P_{\varepsilon_i,x_j}(\neg x_k)] &=  \calQ_{\xi}[\mathcal{P}_{\varepsilon_i,x_j}\backslash P_{\varepsilon_i,x_j}(x_k)] = \calQ_{\xi}[\mathcal{P}_{\varepsilon_i,x_j}] - \calQ_{\xi}[\mathcal{P}_{\varepsilon_i,x_j}\cap P_{\varepsilon_i,x_j}(x_k)] \\&= \calQ_{\xi}[\mathcal{P}_{\varepsilon_i,x_j}] - \calQ_{\xi}[P_{\varepsilon_i,x_j}(x_k)].\qedhere
\end{split}
\end{equation*}
\end{proof}

% Shows the details of the rewrite to x=Bx+Omega
\section{From VARMA to Systems Form}
\label{appendix:rewrite}
In this appendix, we provide details how the lower-triangular from of the structural VARMA given by
\begin{equation}
    \bL\by^*_t = \sum_{i=1}^\ell \bQ'\bA_i^*\by^*_{t-i}  + \sum_{j=1}^q \bQ'\bPsi_j\bvarepsilon_{t-j} + \bQ'\bvarepsilon_t,
    \label{app:eq:general-model-ql}
\end{equation}
as presented in equation \eqref{eq:general-model-ql} can be re-written into the general systems form 
\begin{equation}
    \bx = \bB\bx + \bOmega\bvarepsilon,
    \label{app:eq:general-model-xbx}
\end{equation}
as presented in equation \eqref{eq:general-model-xbx}.

Let $h$ be the fixed finite horizon of the transmission channels under consideration. Equation \eqref{app:eq:general-model-ql} for time points $t, \dots, t+h$ can then be represented as the  system of equations
\begin{equation*}
\begin{cases}
    \bL\by^*_t = \sum_{i=1}^\ell \bQ'\bA^*_i\by^*_{t-i} + \sum_{j=1}^q \bQ'\bPsi_j\bvarepsilon_{t-j}   + \bQ'\bvarepsilon_t \\ 
    \vdots  \\
    \bL\by^*_{t+h} = \sum_{i=1}^\ell \bQ'\bA^*_i\by^*_{t+h-i} + \sum_{j=1}^q \bQ'\bPsi_j\bvarepsilon_{t+h-j} + \bQ'\bvarepsilon_{t+h} ,  
\end{cases}  
\end{equation*}
which contains all information of the dynamic system up to horizon $h$. Multiplying all equations on the left-hand-side and right-hand-side by $\bD=\text{diag}(\bL)^{-1}$, where $\text{diag}(\bX)$ is a diagonal matrix containing the diagonal of $\bX$, and moving $(\bI - \bD\bL)\by^*_{t+i}$ to the right-hand-side in equations $i=0, \dots, h$, we obtain
\begin{equation}
\resizebox{0.9\textwidth}{!}{$
\begin{cases}
    \by^*_t = (\bI - \bD\bL)\by^*_t +  \sum_{i=1}^\ell \bD\bQ'\bA^*_i\by^*_{t-i} + \sum_{j=1}^q \bD\bQ'\bPsi_j\bvarepsilon_{t-j} + \bD\bQ'\bvarepsilon_t  \\ 
    \vdots \\
    \by^*_{t+h} = (\bI - \bD\bL)\by^*_{t+h} + \sum_{i=1}^\ell \bD\bQ'\bA^*_i\by^*_{t+h-i}  + \sum_{j=1}^q \bD\bQ'\bPsi_j\bvarepsilon_{t+h-j} + \bD\bQ'\bvarepsilon_{t+h}.
\end{cases}
$}
\label{eq:appendix-rewrite-2}
\end{equation}

Given a shock in period $t$, then variables $\by^*_{t-i}$ for $i>0$ are unaffected by the shock, and thus do not play a role in the propagation of the shock. We may therefore ignore them by setting their corresponding coefficients equal to zero. System \eqref{eq:appendix-rewrite-2} then becomes 
\begin{equation}
\resizebox{0.9\textwidth}{!}{
$
\begin{cases}
    \by^*_t = (\bI - \bD\bL)\by^*_t + \bD\bQ'\bvarepsilon_t \\ 
    \vdots \\
    \by^*_{t+h} = (\bI - \bD\bL)\by^*_{t+h} + \sum_{i=1}^{\min(h, p)} \bD\bQ'\bA^*_i\by^*_{t+h-i}  + \sum_{j=1}^{\min(h, q)}\bD\bQ'\bPsi_j\bvarepsilon_{t+h-j} + \bD\bQ'\bvarepsilon_{t+h}.
\end{cases}
$
}
\label{eq:appendix-rewrite-3}
\end{equation}
System  \eqref{eq:appendix-rewrite-3} is then equivalent to the general form \eqref{app:eq:general-model-xbx}, with $\bx = (\by_t^{*'}, \dots, \by_{t+h}^{*'})'$, $\bvarepsilon = (\bvarepsilon_t', \dots, \bvarepsilon_{t+h}')'$, and 
\begin{equation*}
\resizebox{0.9\textwidth}{!}{$
    \begin{array}{ccc}
         \bB = \begin{bmatrix}
        \bI - \bD\bL & \bO & \dots & \bO \\
        \bD\bQ'\bA^*_1 & \bI-\bD\bL & \dots & \bO \\
        \vdots & \ddots & \ddots & \vdots \\
        \bD\bQ'\bA^*_h & \dots & \bD\bQ'\bA^*_1 & \bI - \bD\bL
    \end{bmatrix}, & \quad & 
    \bOmega = \begin{bmatrix}
        \bD\bQ' & \bO & \dots & \bO \\
        \bD\bQ'\bPsi_1 & \bD\bQ' & \dots & \bO \\
        \vdots & \ddots & \ddots & \vdots \\
        \bD\bQ'\bPsi_h & \dots & \bD\bQ'\bPsi_1 & \bD\bQ'
   \end{bmatrix},
    \end{array}$}
\end{equation*}
as given in equation \eqref{eq:xbx-B-and-Omega} of Section \ref{sec:sec3-general-framework}.
% Computing transmission effects in practice
\section{Computing Transmission Effects}
\label{appendix:computation}

In this section we discuss how transmission effects can be computed in practice. Section \ref{appendix:computation-method} discusses computational issues and proposes an efficient algorithm to compute transmission effects. Section \ref{appendix:computation-correctness} establishes the correctness of the proposed approach. Section \ref{appendix:computation-uncertainty} discusses how uncertainty around point estimates can be quantified using a frequentist or Bayesian approach. 

\subsection{Efficient Computational Approaches}
\label{appendix:computation-method}

Section \ref{sec:sec3-general-framework} shows that given a collection of paths $P_{\varepsilon_i, x_j}$ constituting a transmission channel, transmission effects can be calculated as the total path-specific effect of $P_{\varepsilon_i, x_j}$, $\calQ_\xi(P_{\varepsilon_i, x_j})$. This is an efficient method if the collection of paths is known. However, often times it is more intuitive to define transmission channels in terms of which variable may or may not lie on a path, i.e.\ it is more natural to define transmission channels in terms of Boolean statements (see Appendix \ref{appendix:Boolean}). Given such a definition, the collection of paths must first be found before the total path-specific effect can be calculated. Finding all paths corresponding to a Boolean statement is computationally costly, especially for models with many variables or for effects over many horizons - in either case the number of paths grows exponentially. Thus, first finding all paths and then calculating the total path-specific effect is inefficient in those cases. 

An alternative approach is to directly use impulse response functions; Theorem \ref{thm:irf-sufficiency} states that all transmission effects can be computed using a combination of IRFs. Given a transmission channel $P_{\varepsilon_i, x_j}$ implicitly defined using a Boolean formula $b$, this can be made operational, by first using Lemma \ref{lem:properties} to split the transmission effect of $P_{\varepsilon_i, x_j}$ into terms involving the total path-specific effects of simpler collections of paths - those that use all paths connecting two variables or a shock and a variable. Each term can then be calculated using Theorem \ref{thm:total-effect-and-decomposition} and \ref{thm:irf-sufficiency} as an impulse response. 

The above approach has the advantage that it allows for alternative methods to compute IRFs such as local projections \citep{jordaEstimationAndInference2005}, but it requires a full simplification of the Boolean statement, i.e.\ the transmission effect must be simplified to terms involving collections of all paths connecting two variables or a variable and a shock. This is computationally expensive and a more efficient algorithm could be obtained if full simplification was not required - if NOTs ($\neg$) would not have to be simplified. This can be achieved by combining Lemma \ref{lem:properties} with insights obtained from the graphical representation $\calG(\bB, \bOmega)$. 

Suppose the Boolean formula is $b=x_k$ ($k<j$). Then all paths of the transmission channel $P_{\varepsilon_i, x_j}$ implied by the Boolean statement $b$ must go through $x_k$.  Thus, if we were to remove all paths $p$ in the graph $\calG(\bB, \bOmega)$ that did not go through $x_k$, then the resulting graph $\calG(\bar\bB, \bar\bOmega)$ would only contain paths going through $x_k$. The total effect of the structural shock $\varepsilon_i$ on $x_j$ in the graph $\calG(\bar\bB, \bar\bOmega)$ would then use all paths connecting $\varepsilon_i$ and $x_j$ and going through $x_k$ but no paths that do not go through $x_k$. By Theorem \ref{thm:total-effect-and-decomposition} this total effect is given by $\bar\Phi_{j,i}=(\bI - \bar\bB)^{-1}_{j, \cdot}\bar\bOmega_{\cdot,i}$. Because moving from $\calG(\bB, \bOmega)$ to $\calG(\bar\bB, \bar\bOmega)$ only removes edges and does not change path coefficients, the total effect $\bar\Phi_{j,i}$ equals the transmission effect of the transmission channel $P_{\varepsilon_i, x_j}$. A similar logic can be applied to Boolean formulas $b=\neg x_k$ ($k<j$) implying that paths cannot go through $x_k$; this time, all paths that go through $x_k$ are removed. We can generalise this logic to any Boolean formula of the form $b = \land_{k\in N_1}x_k \land_{k\in N_2}\neg x_k$ with $k<j\forall k\in N_1\cup N_2$, resulting in Algorithm \ref{alg:edge-deletion}. 

\RestyleAlgo{ruled}
\begin{algorithm}
    \caption{Calculating the transmission effect of $b = \land_{k\in N_1}x_k \land_{k\in N_2}\neg x_k$ from shock $\varepsilon_i$ to $x_j$}
    \label{alg:edge-deletion}
    
    \KwIn{Boolean statement $b = \land_{k\in N_1}x_k \land_{k\in N_2}\neg x_k$ with $k<j\forall k\in N_1\cup N_2$, matrices $\bB$ and $\bOmega_{\cdot, i}$, shock size $\xi$}
    \KwOut{Transmission effect $\calQ_{\xi}(P_{\varepsilon_i, x_j})$ where $P_{\varepsilon_i, x_j}$ is implicitly defined by $b$}

    $\bar\bB \gets \bB$ \\
    $\bar\bOmega_{\cdot, i} \gets \bOmega_{\cdot, i}$ \\
    \For{$k\in N_1$}{
        $\bar\bB_{r,s} \gets 0$ if $r>k$ and $s<k$ \\
        $\bar\bOmega_{r,i} \gets 0$ if $r>k$ \\
    }
    \For{$k\in N_2$}{
        $\bar\bB_{k,s} \gets 0$ if $s < k$ \\
        $\bar\bOmega_{k,i} \gets 0$\\
    }
    \Return{$\xi(\bI - \bar\bB)^{-1}_{j, \cdot}\bar\bOmega_{\cdot,i}$}
\end{algorithm}

Algorithm \ref{alg:edge-deletion} requires Boolean formulas of a specific form; however, not all transmission channels can be defined by such a Boolean formula. Thus, Algorithm \ref{alg:edge-deletion} must be combined with Lemma \ref{lem:properties}. Lemma \ref{lem:properties} is first applied to a Boolean formula $b$ until each term involves Boolean formulas amendable to Algorithm \ref{alg:edge-deletion}. Algorithm \ref{alg:edge-deletion} can then be used to compute each term. Since Boolean formulas need not be fully simplified - terms can consist of collections of paths that do not contain all paths connecting two variables or a variable and a shock - the combination of Lemma \ref{lem:properties} and Algorithm \ref{alg:edge-deletion} is often more efficient than using IRFs. Additionally, Lemma \ref{lem:alg-single-shock} shows that this method can be applied even if the full graph is not known - only the structural shock whose effects the researcher wants to decompose is known. Thus, unless local projections are being used, the combination of Lemma \ref{lem:properties} and Algorithm \ref{alg:edge-deletion} has no clear shortcomings compared to using Lemma \ref{lem:properties} with IRFs; it is, therefore, oftentimes the preferred approach. 

\begin{lemma}
    Let $\bL$ be the contemporaneous matrix obtained when using a Cholesky-orthogonalisation scheme with the ordering determined by $\bT$, $\bar\bA_i$ be the AR coefficient matrices and $\bar\bPsi_j$ be the MA coefficient matrices obtained using the same scheme, and suppose $(\bA_0^*)^{-1}_{\cdot, i}=\bPhi_{1:K, i}$ is identified. Then $\bQ'_{\cdot, i} = \bL(\bA_0^*)^{-1}_{\cdot, i}$, $(\bQ'\bPsi_j)_{\cdot, i}=\bar\bPsi_j\bQ'_{\cdot, i}$ for all $j=1,\dots,q$, and 
    \begin{equation*}
        \begin{array}{ccc}
             \bB = \begin{bmatrix}
           \bI - \bD\bL & \bO & \dots & \bO \\
           \bD\bar\bA_1 & \bI - \bD\bL & \dots & \bO \\ 
           \vdots & \ddots & \ddots & \vdots \\
           \bD\bar\bA_h & \dots & \bD\bar\bA_1 & \bI - \bD\bL
        \end{bmatrix} & \text{ and } &
             \bPhi_{\cdot, i} = \begin{bmatrix}
        \bD\bQ'_{\cdot, i}\\
        \bD(\bQ'\bPsi_1)_{\cdot, i}\\
        \vdots\\
        \bD(\bQ'\bPsi_h)_{\cdot, h} 
   \end{bmatrix},\\
        \end{array}
    \end{equation*}
    with $\bD = \text{diag}(\bL)^{-1}$, $\bar\bA_i=\bO$ for $i>\ell$, and $\bar\bPsi_j=\bO$ for all $j>q$.
    \label{lem:alg-single-shock}
\end{lemma}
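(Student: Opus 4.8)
The plan is to reduce the lemma to the correspondence between the QL-based structural form \eqref{eq:general-model-ql} and the Cholesky-orthogonalised reduced form that is already established in the proof of Theorem \ref{thm3-part1}, and then to extract the relevant columns. The single algebraic fact driving the argument is that, since $\bA_0^*=\bQ\bL$ with $\bQ$ orthogonal, $(\bA_0^*)^{-1}=\bL^{-1}\bQ'$ and hence $\bL(\bA_0^*)^{-1}=\bQ'$. I would also recall from that proof that the lower-triangular system is precisely the one produced by Cholesky-orthogonalising the reduced form under the ordering encoded in $\bT$, so that the matrices named in the statement satisfy $\bar\bA_i=\bQ'\bA_i^*$ and $\bar\bPsi_j=\bQ'\bPsi_j\bQ$, with $\bD=\text{diag}(\bL)^{-1}$ and orthogonalised shocks $\bgamma_t=\bQ'\bvarepsilon_t$; I would invoke this rather than re-derive it.

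With this identity the two column statements are immediate. For the first, I would observe that the impact block of the structural impulse responses is $(\bD\bL)^{-1}\bD\bQ'=\bL^{-1}\bQ'=(\bA_0^*)^{-1}$, so the hypothesis $\bPhi_{1:K,i}=(\bA_0^*)^{-1}_{\cdot,i}$ simply identifies the $i$-th column of $(\bA_0^*)^{-1}$; post-multiplying $\bL(\bA_0^*)^{-1}=\bQ'$ by the $i$-th unit vector then yields $\bQ'_{\cdot,i}=\bL(\bA_0^*)^{-1}_{\cdot,i}$. For the second, orthogonality gives $\bar\bPsi_j\bQ'=\bQ'\bPsi_j\bQ\bQ'=\bQ'\bPsi_j$, and evaluating both sides at the $i$-th unit vector produces $(\bQ'\bPsi_j)_{\cdot,i}=\bar\bPsi_j\bQ'_{\cdot,i}$ for every $j=1,\dots,q$. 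Combined with the first identity, this exhibits each $(\bQ'\bPsi_j)_{\cdot,i}$ as a function of $\bL$, the $\bar\bPsi_j$, and the identified column alone.

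The remaining two claims are then obtained by matching blocks against \eqref{eq:xbx-B-and-Omega}. For $\bB$ I would substitute $\bD\bQ'\bA_i^*=\bD\bar\bA_i$ into its sub-diagonal blocks while keeping $\bI-\bD\bL$ on the diagonal, reproducing the stated block-Toeplitz form, which depends only on $\bL$ and the $\bar\bA_i$ (with $\bar\bA_i=\bO$ for $i>\ell$). For the displayed column I would read off the $i$-th column of $\bOmega$ from \eqref{eq:xbx-B-and-Omega}: its blocks are $\bD\bQ'_{\cdot,i}$ and $\bD(\bQ'\bPsi_j)_{\cdot,i}$ for $j=1,\dots,h$ (with $\bar\bPsi_j=\bO$ for $j>q$), each of which, by the two column identities above, is computable from $\bL$, the $\bar\bPsi_j$, and $(\bA_0^*)^{-1}_{\cdot,i}=\bPhi_{1:K,i}$. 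This is exactly the column supplied as input to Algorithm \ref{alg:edge-deletion}, so together with the reconstruction of $\bB$ it shows that the algorithm's inputs are available once a single structural column is identified.

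The only step that is not pure bookkeeping is the correspondence invoked in the first paragraph: that Cholesky-orthogonalising the reduced form under the ordering $\bT$ returns exactly the $\bL$ and $\bQ$ of the QL decomposition of $\bA_0^*$. This rests on uniqueness of both factorisations together with the positive-diagonal normalisation of $\bL$: the innovation covariance of the permuted reduced form equals $[(\bA_0^*)^{-1}][(\bA_0^*)^{-1}]'=\bL^{-1}(\bL^{-1})'$, whose unique lower-triangular Cholesky factor with positive diagonal is $\bL^{-1}$, which forces the orthogonalised shocks to be $\bL(\bA_0^*)^{-1}\bvarepsilon_t=\bQ'\bvarepsilon_t$. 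I expect this to be the main conceptual obstacle; since it is already settled within the proof of Theorem \ref{thm3-part1}, the residual work is the column-wise algebra above, and I anticipate no further difficulty.
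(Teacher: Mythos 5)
Your proposal is correct and follows essentially the same route as the paper's proof: substitute $\bar\bA_i=\bQ'\bA_i^*$, $\bar\bPsi_j=\bQ'\bPsi_j\bQ$ and $\bgamma_t=\bQ'\bvarepsilon_t$ into \eqref{eq:general-model-ql}, obtain $\bQ'_{\cdot,i}=\bL(\bA_0^*)^{-1}_{\cdot,i}$ and $(\bQ'\bPsi_j)_{\cdot,i}=\bar\bPsi_j\bQ'_{\cdot,i}$ from $\bQ\bQ'=\bI$, and read the block forms off \eqref{eq:xbx-B-and-Omega}, correctly treating the displayed column as the $i$-th column of $\bOmega$ (which is what the lemma's display denotes, despite its $\bPhi_{\cdot,i}$ label). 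Your explicit uniqueness argument -- that the permuted reduced-form innovation covariance $\bL^{-1}(\bL^{-1})'$ has $\bL^{-1}$ as its unique positive-diagonal Cholesky factor, so the orthogonalised shocks are exactly $\bQ'\bvarepsilon_t$ -- fills in a correspondence the paper merely asserts, and is a sound refinement rather than a different approach.
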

\begin{proof}[\textbf{Proof of Lemma \ref{lem:alg-single-shock}}]
    After substituting $\bar\bA_i=\bQ'\bA_i^*$ ($i\geq 1)$, $\bar\bPsi_j=\bQ'\bPsi_j\bQ$ ($j\geq 1$) and $\bgamma_t=\bQ'\bvarepsilon_t$, \ref{eq:general-model-ql} can be written as 
    \begin{equation*}
        \bL\by_{t}^* = \sum_{i=1}^{\ell}\bar\bA_i\by_{t-i}^* + \sum_{j=1}^q\bar\bPsi_j\bgamma_{t-j} + \bgamma_t.
    \end{equation*}
    This is the system obtained using a Cholesky-orthogonalisation scheme with the ordering determined by $\bT$. We then have $(\bA_0^*)^{-1}_{\cdot, i} = (\bQ\bL)^{-1}_{\cdot, i}=\bL^{-1}\bQ'_{\cdot, i}$ implying $\bQ'_{\cdot, i}=\bL(\bA_0^*)^{-1}_{\cdot, i}$. Similarly, we have $(\bQ'\bPsi_j)_{\cdot, i}=\bQ'\bPsi_j\bQ\bQ'_{\cdot, j}=\bar\bPsi_j\bQ'_{\cdot, i}$. The form of $\bB$ and $\bOmega_{\cdot, i}$ can then be obtained by substituting the above into equation \eqref{eq:xbx-B-and-Omega}. 
\end{proof}

\subsection{Properties of the Computational Algorithm}
\label{appendix:computation-correctness}

In this section we prove the correctness of the computational approach suggested in the previous section. More precisely, we will show that combining Lemma \ref{lem:properties} with Algorithm \ref{alg:edge-deletion} correctly computes the transmission effect of any transmission channel. Given that Lemma \ref{lem:properties} can be used to split the computation of the transmission effect of a transmission channel $P_{\varepsilon_i, x_j}$ implicitly defined by a Boolean statement $b$ into terms involving Boolean formulas (transmission channels) that can be computed using Algorithm \ref{alg:edge-deletion}, it suffices to show that Algorithm \ref{alg:edge-deletion} correctly computes the transmission effect of a transmission channel indirectly defined by a Boolean formula of the form $b = \land_{k\in N_1}x_k \land_{k\in N_2}\neg x_k$ with $k<j\forall k\in N_1\cup N_2$. 

We first prove that the first loop of Algorithm \ref{alg:edge-deletion} removes all and only paths $p$ from $\calG(\bB, \bOmega)$ that do not satisfy $\land_{k\in N_1} x_k$ with $k<j$ for all $k\in N_1$. This is implied by Lemma \ref{lem:comptuations-1} and \ref{lem:comptuations-2} below. 

\begin{lemma}
    Let $b=x_k$ and $\bar\bB=\bB$, $\bar\bOmega=\bOmega$ except that $\bar\bB_{r,s}=0$ and $\bar\bOmega_{r,i}=0$ whenever $s<k$ and $r>k$. Then $\calG(\bar\bB, \bar\bOmega)$ contains all and only paths $p$ in $\calG(\bB, \bOmega)$ that satisfy $b$. 
    \label{lem:comptuations-1}
\end{lemma}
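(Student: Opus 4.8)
The plan is to exploit the strictly lower-triangular structure of $\bB$, which forces every directed path to have monotone node indices, and then to show that the prescribed deletions remove exactly the edges that let a path ``skip over'' $x_k$. First I would record the structural observation: since $\bB$ is lower-triangular with zero diagonal and $\bOmega$ is the shock-loading matrix, any path in $\calG(\bB,\bOmega)$ from $\varepsilon_i$ to $x_j$ has the form $\varepsilon_i \to x_{r_1} \to x_{r_2} \to \cdots \to x_{r_m}=x_j$ with strictly increasing indices $r_1 < r_2 < \cdots < r_m = j$. Consequently, the shock $\varepsilon_i$ behaves as a unique source sitting ``below'' every variable index, and the Boolean condition $b=x_k$ (the path passes through $x_k$) is equivalent to requiring $k \in \{r_1,\dots,r_m\}$.

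Next I would translate the two deletions of the lemma into this language. Because indices along a path are monotone and $j > k$, a path fails to pass through $x_k$ if and only if it ``crosses the threshold $k$'' without landing on it: either its initial edge $\varepsilon_i \to x_{r_1}$ already satisfies $r_1 > k$, or it contains a consecutive internal edge $x_{r_t}\to x_{r_{t+1}}$ with $r_t < k < r_{t+1}$. The deletion $\bar\bOmega_{r,i}=0$ for $r>k$ (the clause ``$s<k$'' being vacuous, since the shock is the source) removes exactly the first type, while the deletion $\bar\bB_{r,s}=0$ for $s<k<r$ removes exactly the second type. Since we only zero out entries, no new edges or paths are created and the path coefficients of the surviving paths are left unchanged.

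Then I would prove the two set inclusions. For soundness (only paths through $x_k$ survive), I would take any path in $\calG(\bar\bB,\bar\bOmega)$, note its start index is $\leq k$ because the edges landing strictly above $k$ have been deleted, and its end index is $j>k$; letting $t^\ast$ be the last position with $r_{t^\ast} < k$, monotonicity gives $r_{t^\ast+1} \geq k$, and $r_{t^\ast+1}>k$ would be a deleted crossing edge, so $r_{t^\ast+1}=k$, i.e.\ the path meets $x_k$. For completeness, I would take an original path through $x_k$, say $r_a=k$; every edge either lies weakly below $k$ (target $\leq k$) or weakly above $k$ (source $\geq k$), so none matches a deletion rule, and the path survives intact.

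The main obstacle is not conceptual but one of careful bookkeeping: formalising the ``threshold-crossing'' dichotomy (that a monotone index sequence running from at-or-below $k$ up to $j>k$ either starts above $k$ or contains a unique straddling edge) and correctly handling the shock node as a source for which the ``$s<k$'' clause in the $\bar\bOmega$ deletion is automatically satisfied. Once these edge cases are pinned down, both inclusions follow directly from the monotonicity of the path indices.
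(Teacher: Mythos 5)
Your proof is correct and takes essentially the same route as the paper's: both directions reduce to the observation that, by the strictly lower-triangular structure of $\bB$, the deleted entries are exactly the edges that let a path cross from an index below $k$ to one above $k$ without landing on $x_k$. The paper phrases the two inclusions as short contradiction arguments and leaves the monotone-index bookkeeping implicit, whereas you spell it out explicitly, but the underlying argument is identical.
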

\begin{proof}[\textbf{Proof of Lemma \ref{lem:comptuations-1}}]
    Suppose there was a path $p$ in $\calG(\bar\bB, \bar\bOmega)$ that did not satisfy $b$. Then $p$ must have an edge of the kind $x_s\to x_r$ or $\varepsilon_i\to x_r$ with $s<k$ and $r>k$. This is a contradiction, because all such edges have been removed. Next, suppose there was a path $p$ in $\calG(\bB, \bOmega)$ not in $\calG(\bar\bB, \bar\bOmega)$ that does satisfy $b$. Since moving from $\calG(\bB, \bOmega)$ to $\calG(\bar\bB, \bar\bOmega)$ only removes edges of the form $x_s\to x_r$ and $\varepsilon_i\to x_r$ with $s<k$ and $r>k$, $p$ must have such an edge. This is a contradiction, because it implies that $p$ cannot satisfy $b$ in the first place.
\end{proof}

\begin{lemma}
    Suppose all paths in $\calG(\bB, \bOmega)$ satisfy the Boolean formula $b'$ and let $b=b' \land x_k$. Let $\bar\bB=\bB$ and $\bar\bOmega=\bOmega$ except that $\bar\bB_{r,s}=0$ and $\bar\bOmega_{r,i}=0$ whenever $s<k$ and $r>k$. Then $\calG(\bar\bB, \bar\bOmega)$ contains all and only paths in $\calG(\bB, \bOmega)$ that satisfy $b$. 
    \label{lem:comptuations-2}
\end{lemma}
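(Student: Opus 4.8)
The plan is to reduce the statement directly to Lemma \ref{lem:comptuations-1}, exploiting the hypothesis that every path in $\calG(\bB, \bOmega)$ already satisfies $b'$. The crucial observation is that, on the path set of $\calG(\bB, \bOmega)$, the Boolean formulas $x_k$ and $b = b' \land x_k$ select exactly the same paths: since $b'$ is satisfied by every path by assumption, a path satisfies $b' \land x_k$ if and only if it satisfies $x_k$. This is the one logical equivalence that does all the work.

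First I would observe that the edge-deletion operation specified here --- setting $\bar\bB_{r,s}=0$ and $\bar\bOmega_{r,i}=0$ whenever $s<k$ and $r>k$ --- is literally the operation of Lemma \ref{lem:comptuations-1} applied to the formula $x_k$. Applying that lemma directly to the graph $\calG(\bB, \bOmega)$ then yields that $\calG(\bar\bB, \bar\bOmega)$ contains all and only the paths of $\calG(\bB, \bOmega)$ that satisfy $x_k$.

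Next I would combine this with the equivalence noted above. Because the paths of $\calG(\bB, \bOmega)$ satisfying $x_k$ coincide with those satisfying $b = b' \land x_k$, the graph $\calG(\bar\bB, \bar\bOmega)$ contains all and only the paths of $\calG(\bB, \bOmega)$ satisfying $b$, which is the claim. A small point worth verifying explicitly is that passing from $\calG(\bB, \bOmega)$ to $\calG(\bar\bB, \bar\bOmega)$ only deletes edges and never creates new ones; hence every path surviving in $\calG(\bar\bB, \bar\bOmega)$ is already a path of $\calG(\bB, \bOmega)$ and therefore still satisfies $b'$. This guarantees that the hypothesis that all paths satisfy $b'$ is not silently broken by the deletion, so the reduction to Lemma \ref{lem:comptuations-1} is legitimate.

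Since the argument reduces cleanly to Lemma \ref{lem:comptuations-1}, I do not anticipate a substantial technical obstacle; the content is essentially a logical reformulation rather than a new combinatorial argument. The only care required is in the bookkeeping of the two observations --- the equivalence of $x_k$ and $b'\land x_k$ on this particular path set, and the monotone (edge-only) nature of the deletion --- both of which are routine once stated. If anything, the mild subtlety is purely expository: making sure the reader sees that the conjunction with $b'$ adds no constraint precisely because the incoming graph is already restricted to paths satisfying $b'$, which is exactly the inductive situation in which Lemma \ref{lem:comptuations-2} will be invoked when building up a general conjunction $\land_{k\in N_1} x_k$ one variable at a time.
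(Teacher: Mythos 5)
Your proof is correct and is in substance the same as the paper's: both reduce the claim to Lemma \ref{lem:comptuations-1} applied to the formula $x_k$, combined with the hypothesis that every path of $\calG(\bB, \bOmega)$ already satisfies $b'$. The paper packages this as a two-sided contradiction argument, while you state the underlying logical equivalence (on this path set, $x_k$ and $b' \land x_k$ select the same paths) directly --- a cleaner write-up of the identical idea.
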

\begin{proof}[\textbf{Proof of Lemma \ref{lem:comptuations-2}}]
    Suppose there was a path $p$ in $\calG(\bar\bB, \bar\bOmega)$ that did not satisfy $b$. Then $p$ violates $b'$ or does not go through $x_k$. By Lemma \ref{lem:comptuations-1}, the latter cannot be the case. Thus, $p$ must violate $b'$. This is a contradiction, because $\calG(\bar\bB, \bar\bOmega)$ is obtained from $\calG(\bB, \bOmega)$ by removing paths, and all paths in $\calG(\bB, \bOmega)$ satisfy $b'$. Next, suppose there was a path $p$ in $\calG(\bB, \bOmega)$ not in $\calG(\bar\bB, \bar\bOmega)$ that satisfies $b$. Since all paths in $\calG(\bB, \bOmega)$ already satisfy $b'$, Lemma \ref{lem:comptuations-1} implies that this is not possible. 
\end{proof}

Lemma \ref{lem:comptuations-1} shows that the first iteration of the first loop in Algorithm \ref{alg:edge-deletion} ends with a graph that contains all and only paths in $\calG(\bB, \bOmega)$ that go through $x_{k_1}$ where we write $N_1=\{k_1, \dots, k_{N_1}\}$. Lemma \ref{lem:comptuations-2} then shows that all remaining iterations $m$ end with a graph that contains all and only edges in $\calG(\bB, \bOmega)$ that satisfy $\land_{i=1}^{m}x_{k_m}$. Thus, at the end of the first loop, we are left with a graph that contains all and only paths of $\calG(\bB, \bOmega)$ that satisfy $\land_{k\in N_1}x_k$.

We will next show in Lemmas \ref{lem:comptuations-3} and \ref{lem:comptuations-4} that the second loop removes all and only paths from the graph obtained at the end of the first loop that do not satisfy $\land_{k\in N_2}\neg x_k$. 

\begin{lemma}
    Let $b=\neg x_k$ and $\bar\bB=\bB$, $\bar\bOmega=\bOmega$ except that $\bar\bB_{k,s}=0$ and $\bar\bOmega_{k,i}=0$ whenever $s<k$. Then $\calG(\bar\bB, \bar\bOmega)$ contains all and only paths $p$ in $\calG(\bB, \bOmega)$ that satisfy $b$. 
    \label{lem:comptuations-3}
\end{lemma}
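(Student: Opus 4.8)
The plan is to mirror the two-part set-containment argument used in the proof of Lemma \ref{lem:comptuations-1}, exploiting the fact that $\calG(\bB, \bOmega)$ is a lower-triangular DAG in which every edge runs from a lower-indexed node to a higher-indexed one. The crucial observation is that the only edges entering the node $x_k$ originate either from a variable $x_s$ with $s<k$ (carried by $\bB_{k,s}$) or from the shock $\varepsilon_i$ (carried by $\bOmega_{k,i}$); since $\bB$ is lower-triangular with a zero diagonal, no edge enters $x_k$ from any $x_s$ with $s \geq k$. Consequently, setting $\bar\bB_{k,s}=0$ for all $s<k$ together with $\bar\bOmega_{k,i}=0$ deletes precisely those edges by which a path emanating from $\varepsilon_i$ could ever reach $x_k$. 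As in Lemma \ref{lem:comptuations-1}, paths are understood to start at $\varepsilon_i$, so only the $i$th column of $\bOmega$ is relevant.

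First I would show that every path $p$ surviving in $\calG(\bar\bB, \bar\bOmega)$ satisfies $b=\neg x_k$. Suppose not: then $p$ passes through $x_k$, so it must use at least one incoming edge to $x_k$, which by the observation above is one of the deleted edges $x_s \to x_k$ (with $s<k$) or $\varepsilon_i \to x_k$. This contradicts the construction of $\bar\bB$ and $\bar\bOmega$, so no surviving path goes through $x_k$.

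Next I would establish the converse, that every path $p$ in $\calG(\bB, \bOmega)$ satisfying $b=\neg x_k$ survives the deletion. The passage from $\calG(\bB, \bOmega)$ to $\calG(\bar\bB, \bar\bOmega)$ removes only edges terminating in $x_k$. A path that does not go through $x_k$ uses no edge terminating in $x_k$, hence none of its edges are removed and $p$ remains intact in $\calG(\bar\bB, \bar\bOmega)$. Combining the two directions yields the claimed equality of path collections.

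I do not anticipate a genuine obstacle here, since the argument is structurally identical to that of Lemma \ref{lem:comptuations-1}; the only point requiring care is verifying that the deleted entries exhaust the incoming edges to $x_k$, which follows immediately from the lower-triangular, zero-diagonal structure of $\bB$ and from the fact that every path in $\calP_{\varepsilon_i, x_j}$ issues from $\varepsilon_i$.
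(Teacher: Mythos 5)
Your proposal is correct and follows essentially the same two-part containment argument by contradiction as the paper's own proof, with the second direction stated directly rather than contrapositively. Your explicit verification that the lower-triangular, zero-diagonal structure of $\bB$ makes the deleted entries exhaust all incoming edges to $x_k$ is a point the paper leaves implicit, but it does not change the substance of the argument.
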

\begin{proof}[\textbf{Proof of Lemma \ref{lem:comptuations-3}}]
    Suppose there was a path $p$ in $\calG(\bar\bB, \bar\bOmega)$ that did not satisfy $b$. Then $p$ must have an edge of the kind $x_s\to x_k$ or $\varepsilon_i\to x_k$ with $s<k$. This is a contradiction, because all such edges have been removed. Next, suppose that there was a path $p$ in $\calG(\bB, \bOmega)$ not in $\calG(\bar\bB, \bar\bOmega)$ that satisfies $b$. Since moving from $\calG(\bB, \bOmega)$ only removes edges of the kind $x_s\to x_k$ or $\varepsilon_i\to x_k$ with $s<k$, $p$ must have such an edge. This is a contradiction, because it would imply that $p$ goes through $x_k$. 
\end{proof}
\begin{lemma}
    Suppose all paths in $\calG(\bB, \bOmega)$ satisfy the Boolean formula $b'$ and let $b=b' \land \neg x_k$. Let $\bar\bB=\bB$ and $\bar\bOmega=\bOmega$ except that $\bar\bB_{k,s}=0$ and $\bar\bOmega_{k,i}=0$ whenever $s<k$. Then $\calG(\bar\bB, \bar\bOmega)$ contains all and only paths in $\calG(\bB, \bOmega)$ that satisfy $b$.
    \label{lem:comptuations-4}
\end{lemma}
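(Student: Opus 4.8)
The plan is to prove Lemma~\ref{lem:comptuations-4} by transcribing the argument of Lemma~\ref{lem:comptuations-2} almost verbatim, with the single substitution of Lemma~\ref{lem:comptuations-3} in place of Lemma~\ref{lem:comptuations-1}. The statement is precisely the negative-literal analogue of Lemma~\ref{lem:comptuations-2}: where that lemma handled $b = b' \land x_k$ via the deletion that forces every surviving path \emph{through} $x_k$, here $b = b' \land \neg x_k$ is handled by the deletion $\bar\bB_{k,s}=0$, $\bar\bOmega_{k,i}=0$ for $s<k$, which is exactly the operation of Lemma~\ref{lem:comptuations-3} and strips all incoming edges of $x_k$ so that every surviving path \emph{avoids} $x_k$. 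As in Lemma~\ref{lem:comptuations-2}, I would prove the two set inclusions separately, each by contradiction, using the standing hypothesis that every path of $\calG(\bB, \bOmega)$ already satisfies $b'$.

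First I would argue soundness, that every path of $\calG(\bar\bB, \bar\bOmega)$ satisfies $b$. Suppose some path $p$ in the reduced graph failed $b = b' \land \neg x_k$. Then $p$ either violates $b'$ or passes through $x_k$. By Lemma~\ref{lem:comptuations-3} the latter is impossible, since the reduction contains only paths that do not go through $x_k$; hence $p$ must violate $b'$. But the reduction only deletes edges and introduces no new paths, so $p$ is also a path of $\calG(\bB, \bOmega)$ and must satisfy $b'$ by hypothesis, a contradiction. Second I would argue completeness, that every path of $\calG(\bB, \bOmega)$ satisfying $b$ is retained. Suppose $p$ satisfies $b$ but $p \notin \calG(\bar\bB, \bar\bOmega)$. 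Since all paths of $\calG(\bB, \bOmega)$ already satisfy $b'$, the binding content of $b$ on $p$ is just $\neg x_k$, i.e.\ $p$ avoids $x_k$; Lemma~\ref{lem:comptuations-3} then guarantees $p$ is retained in the reduced graph, contradicting $p \notin \calG(\bar\bB, \bar\bOmega)$. The two inclusions together give the ``all and only'' conclusion.

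The only point requiring genuine care is the implicit reliance on the triangular structure of the system when invoking Lemma~\ref{lem:comptuations-3}: I must be certain that setting $\bar\bB_{k,s}=0$ for $s<k$ and $\bar\bOmega_{k,i}=0$ deletes \emph{all} edges entering $x_k$, not merely some of them. This holds because $\bB$ is strictly lower-triangular and $\bOmega$ is lower-(block-)triangular, so every edge into $x_k$ originates at a node of strictly smaller index and no ``backward'' incoming edges exist. I expect this to be the main obstacle only in the sense that it is where all the real structural work lives; since that fact is already absorbed into the statement of Lemma~\ref{lem:comptuations-3}, the remaining argument is a routine recapitulation of the Lemma~\ref{lem:comptuations-2} proof and no new difficulty arises.
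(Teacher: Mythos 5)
Your proposal is correct and takes essentially the same approach as the paper's proof: both set inclusions are argued by contradiction, with Lemma \ref{lem:comptuations-3} ruling out any surviving path through $x_k$ and the standing hypothesis that every path of $\calG(\bB, \bOmega)$ satisfies $b'$ supplying the remaining contradiction. Your added remark on the strictly lower-triangular structure of $\bB$ (ensuring the deletions remove \emph{all} relevant incoming edges of $x_k$) is a sound clarification, but as you note it is already absorbed into Lemma \ref{lem:comptuations-3}, so no new work is needed.
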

\begin{proof}[\textbf{Proof of Lemma \ref{lem:comptuations-4}}]
    Suppose there was a path $p$ in $\calG(\bar\bB, \bar\bOmega)$ that did not satisfy $b$. Then $p$ violates $b'$ or does go through $x_k$. By Lemma \ref{lem:comptuations-3}, the latter cannot be the case. Thus, $p$ must violate $b'$. This is a contradiction, because $\calG(\bar\bB, \bar\bOmega)$ is obtained from $\calG(\bB, \bOmega)$ by removing paths, and all paths in $\calG(\bB, \bOmega)$ satisfy $b'$. Next, suppose there was a path $p$ in $\calG(\bB, \bOmega)$ not in $\calG(\bar\bB, \bar\bOmega)$ that satisfies $b$. Since all paths in $\calG(\bB, \bOmega)$ already satisfy $b'$, Lemma \ref{lem:comptuations-4} implies that this is not possible. 
\end{proof}

Since the first loop ends with a graph that contains all and only paths in $\calG(\bB, \bOmega)$ that satisfy $\land_{k\in N_1}x_k$, Lemma \ref{lem:comptuations-4} shows that the first iteration of the second loop ends with a graph that contains all and only paths in $\calG(\bB, \bOmega)$ that satisfy $\land_{k\in N_1}x_k \land \neg x_{l_1}$ where we write $N_2=\{l_1, \dots, l_{N_2}\}$. Repeated application of Lemma \ref{lem:comptuations-4} then shows that at the end of the second loop we are left with a graph that contains all and only paths in $\calG(\bB, \bOmega)$ that satisfy $\land_{k\in N_1}x_k \land_{k\in N_2}\neg x_k$. We are now ready to state the correctness of Algorithm \ref{alg:edge-deletion}. 

\begin{lemma}
    Algorithm \ref{alg:edge-deletion} correctly computes $\calQ_\xi(P_{\varepsilon_i, x_j})$ where $P_{\varepsilon_i, x_j}$ is implicitly defined by the Boolean formula $b = \land_{k\in N_1}x_k \land_{k\in N_2}\neg x_k$ with $k<j\forall k\in N_1\cup N_2$.
    \label{lem:comptuations-5}
\end{lemma}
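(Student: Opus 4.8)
The plan is to reduce the claim to two facts, both of which are essentially already prepared by the preceding material: first, that the graph $\calG(\bar\bB, \bar\bOmega)$ produced at the termination of Algorithm \ref{alg:edge-deletion} retains exactly the paths from $\varepsilon_i$ to $x_j$ that belong to the transmission channel $P_{\varepsilon_i, x_j}$ implicitly defined by $b$; and second, that the quantity $\xi(\bI - \bar\bB)^{-1}_{j,\cdot}\bar\bOmega_{\cdot,i}$ returned by the algorithm equals the total path-specific effect of all $\varepsilon_i$-to-$x_j$ paths surviving in $\calG(\bar\bB, \bar\bOmega)$. Chaining these two facts delivers $\calQ_\xi(P_{\varepsilon_i, x_j}) = \xi(\bI-\bar\bB)^{-1}_{j,\cdot}\bar\bOmega_{\cdot,i}$, which is the assertion.

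For the first fact I would invoke the chain of results immediately preceding this lemma. Lemmas \ref{lem:comptuations-1} and \ref{lem:comptuations-2} show that the first loop terminates with a graph whose paths are precisely those of $\calG(\bB, \bOmega)$ satisfying $\land_{k\in N_1}x_k$, and Lemmas \ref{lem:comptuations-3} and \ref{lem:comptuations-4} show that the second loop restricts further to the paths satisfying $\land_{k\in N_2}\neg x_k$, so that the surviving paths are exactly those satisfying $b$. Restricting attention to paths from $\varepsilon_i$ to $x_j$, these are by definition the paths in $P_{\varepsilon_i, x_j}$. Crucially, the edge deletions only set entries of $\bB$ and $\bOmega_{\cdot,i}$ to zero and never alter a surviving path coefficient, so the total path-specific effect of $P_{\varepsilon_i, x_j}$ computed in $\calG(\bB,\bOmega)$ coincides with the total path-specific effect of all $\varepsilon_i$-to-$x_j$ paths computed in $\calG(\bar\bB, \bar\bOmega)$.

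For the second fact I would appeal to Theorem \ref{thm2-part1}, which states that the total path-specific effect of the collection of all paths from a shock to a variable equals $\xi$ times the corresponding entry of the impulse response matrix $(\bI - \bB)^{-1}\bOmega$. The key observation making this applicable is that $\bar\bB$ inherits the strictly lower-triangular structure of $\bB$: the algorithm only zeros entries and never touches the diagonal, so $\bar\bB$ remains nilpotent, $(\bI - \bar\bB)^{-1} = \sum_{n\ge 0}\bar\bB^n$ is a finite sum, and $\calG(\bar\bB,\bar\bOmega)$ is a DAG with well-defined impulse response matrix $\bar\bPhi = (\bI-\bar\bB)^{-1}\bar\bOmega$. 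Applying Theorem \ref{thm2-part1} within $\calG(\bar\bB,\bar\bOmega)$ yields that the total path-specific effect of all $\varepsilon_i$-to-$x_j$ paths equals $\xi\bar\bPhi_{j,i} = \xi(\bI-\bar\bB)^{-1}_{j,\cdot}\bar\bOmega_{\cdot,i}$, precisely the algorithm's output.

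I expect the main obstacle to be the justification that Theorem \ref{thm2-part1} may be invoked for the pair $(\bar\bB,\bar\bOmega)$, since this pair need not arise as the systems form \eqref{eq:general-model-xbx} of any underlying VARMA \eqref{eq:general-model}. I would resolve this by observing that the proof of Theorem \ref{thm2-part1} (through Theorem \ref{thm:framework-equivalence}) uses only the strictly lower-triangular structure of the relevant matrix and the resulting termination of its Neumann series, both of which $\bar\bB$ retains; the identity that the sum of path-specific effects over all $\varepsilon_i$-to-$x_j$ paths equals $\xi[(\bI-\bar\bB)^{-1}\bar\bOmega]_{j,i}$ is therefore a purely graph-theoretic consequence of nilpotency and applies verbatim. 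If a more self-contained argument is preferred, I would instead expand $(\bI-\bar\bB)^{-1}\bar\bOmega = \sum_{n\ge0}\bar\bB^n\bar\bOmega$ and note that the $(j,i)$ entry of $\bar\bB^n\bar\bOmega$ sums the products of path coefficients over all $\varepsilon_i$-to-$x_j$ walks with exactly $n$ carrying edges, which, because $\bar\bB$ is acyclic, are exactly the paths of that length, recovering the path sum directly.
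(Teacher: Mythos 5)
Your proposal is correct and takes essentially the same route as the paper: it chains Lemmas \ref{lem:comptuations-1}--\ref{lem:comptuations-4} to show that the algorithm terminates with a graph $\calG(\bar\bB,\bar\bOmega)$ containing exactly the paths satisfying $b$ with unchanged path coefficients, and then applies Theorem \ref{thm2-part1} to that graph to identify the returned quantity $\xi(\bI-\bar\bB)^{-1}_{j,\cdot}\bar\bOmega_{\cdot,i}$ with the transmission effect. Your extra step of justifying why Theorem \ref{thm2-part1} applies to the pair $(\bar\bB,\bar\bOmega)$ even though it need not arise as the systems form of any VARMA --- namely that the identity is a purely graph-theoretic consequence of the nilpotency of $\bar\bB$ and its finite Neumann series --- is left implicit in the paper and is a worthwhile tightening rather than a different argument.
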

\begin{proof}[\textbf{Proof of Lemma \ref{lem:comptuations-5}}]
    The result follows directly from Lemma \ref{lem:comptuations-1}, \ref{lem:comptuations-2}, \ref{lem:comptuations-3}, \ref{lem:comptuations-4} and Theorem \ref{thm:total-effect-and-decomposition}.
\end{proof}

\subsection{Uncertainty Quantification}
\label{appendix:computation-uncertainty}

Algorithm \ref{alg:edge-deletion} together with Lemma \ref{lem:properties} provides an efficient algorithm to compute the transmission effect of any transmission channel, given the impulse responses of the policy shock and a reduced from model. In practice, the required quantities can only be estimated. This section therefore suggests two strategies to obtain point estimates and confidence (credible) intervals for transmission effects. 

Transmission effects can be estimated in a frequentist way by simply plugging in the estimates of the relevant impulse responses and (VAR) path effects into the computational procedures described above. Because transmission effects are smooth functions of impulse responses (Theorem \ref{thm:irf-sufficiency}), consistency and asymptotic normality of the estimated transmission effects follow directly. Thus, point estimates of transmission effects can be obtained by using Algorithm \ref{alg:edge-deletion} and Lemma \ref{lem:properties} together with point estimates of the required quantities. Uncertainty can then either be quantified using the Delta method or a bootstrap approach. Whereas the former requires new cumbersome calculations for the asymptotic variance for every specific transmission effect, the latter can straightforwardly be adapted from existing bootstrap schemes for impulse responses by simply performing the point estimation of the transmission effect in every bootstrap iteration. Validity of the bootstrap also carries over from its impulse response implementation.

Alternatively to the frequentist approach, a Bayesian approach can be adopted. For each posterior draw of the parameters in equation \eqref{eq:general-model}, the transmission effect of a transmission channel can be computed using Algorithm \ref{alg:edge-deletion} and Lemma \ref{lem:properties}. This provides posterior draws of the transmission effect. A point estimate can then be obtained as the posterior mean or median, and credible intervals can be formed in the usual way. 

Both approaches can be used to quantify the uncertainty around point estimates of transmission effects for any number of separate transmission channels. However, when considering a full decomposition of total effects, appropriate measures of uncertainty are more complicated to construct. For example, two channels that perfectly decompose the total effect, are restricted by the fact that they have to add up to the total effect. This induces dependence between the estimates that should be accounted for when constructing uncertainty measures. By treating the two channels jointly one could construct elliptical confidence (credible) sets. How to optimally construct, or indeed communicate, them is an important direction for further research.
% Dtails regading the empirical applications
\section{Details for Empirical Applications}
\label{appendix:empirical-details}

In this section we present details regarding the empirical applications of Section \ref{sec:sec5-applications} in the main paper. Section \ref{appendix:empirical-details-mckay-wolf} presents the data sources used in Section \ref{sec:sec5-mckay-wolf} and visualises the investigated transmission channels. Section \ref{appendix:empirical-details-ramey} presents the data sources used in Section \ref{sec:sec5-ramey}, the construction method for quarterly government defense spending, and visualises the investigated transmission channels. Finally, section \ref{appendix:empirical-details-dsge} visualises the channels investigated in Section \ref{sec:sec5-dsge} and discusses their differences and similarities. 

%-------------------------------------------------------------------------------
% FORWARD GUIANCE OF MONETARY POLICY
%-------------------------------------------------------------------------------

\subsection{Forward Guidance of Monetary Policy
}
\label{appendix:empirical-details-mckay-wolf}

In Section \ref{sec:sec5-mckay-wolf} we apply TCA to a SVAR including the federal funds rate, the output gap, inflation, and commodity prices, and identify a monetary policy shock using either the \citet{romerNewMeasureMonetary2004} or \citet{gertlerMonetaryPolicySurprises2015} series as an internal instrument \citep{plagborg-mollerLocalProjectionsVARs2021}. Table \ref{tab:data-mckay-wolf} presents all variables used in the estimation, their description, and the source. To stay as close as possible to \citet{mckayWhatCanTime2023}, all data was taken from their replication material. Table \ref{tab:data-mckay-wolf} therefore is mostly a description of their replication material. We focus on the period 1969Q1 to 2007Q4.

\begin{table}
\centering
\caption{Data description for all variables used in Section \ref{sec:sec5-mckay-wolf}}
\begin{tabular}{L{0.2\textwidth} L{0.4\textwidth} L{0.4\textwidth}}
\toprule
\textbf{Variable} & \textbf{Description} & \textbf{Details/Source} \\
\midrule
Federal Funds Rate & Federal Funds Effective Rate & FRED series $\text{FEDFUNDS}$; taken from the replication material of \citet{mckayWhatCanTime2023}\\
Output Gap & Measure of the output gap & $\text{ygap\_hp}$ series of \citet{barnichonIdentifyingModernMacro2020}; taken from the replication material of \citet{mckayWhatCanTime2023}\\
Inflation & Annualised log-difference of the GDP deflator & $\text{pgdp}$ series of \citet{RameyMacroeconomicShocks2016}; taken from the replication material of \citet{mckayWhatCanTime2023}\\
Commodity Prices & Measure of overall commodity prices & $\text{lpcom}$ series of \citet{RameyMacroeconomicShocks2016}; taken from the replication material of \citet{mckayWhatCanTime2023}\\
RR Instrument & \citet{romerNewMeasureMonetary2004} shock series extended by \citet{wielandFinancialDampening2020} & $\text{rr\_3}$ series of \citet{wielandFinancialDampening2020}; taken from the replication material of \citet{mckayWhatCanTime2023}\\
GK Instrument & High-frequency monetary policy instrument of \citet{gertlerMonetaryPolicySurprises2015} & $\text{mp1\_tc}$ series of \citet{RameyMacroeconomicShocks2016}; taken from the replication material of \citet{mckayWhatCanTime2023} and corresponding to the current futures series of \citet{gertlerMonetaryPolicySurprises2015}\\
\bottomrule
\end{tabular}
\label{tab:data-mckay-wolf}
\end{table}

To estimate the SVAR and identify a monetary policy shock, we use standard OLS methods and include either the \citet{romerNewMeasureMonetary2004} or the \citet{gertlerMonetaryPolicySurprises2015} series as the first variable in the SVAR. \citet{plagborg-mollerLocalProjectionsVARs2021} then show that a monetary policy shock can be identified by taking the impulse response to the first Cholesky-orthogonalised shock and normalising it by the contemporaneous response of the federal funds rate to the the first Cholesky-orthogonalised shock. We follow this procedure but further normalise the contemporaneous reaction of the federal funds rate to a 25 basis point increase. Due to the normalisation procedure, direct comparison of the absolute effect sizes of the \citet{romerNewMeasureMonetary2004} and \citet{gertlerMonetaryPolicySurprises2015} shocks is difficult. However, the relative importance of each channel compared to the total effect can be compared. 

Figure \ref{fig:mckay-wolf-illustration} illustrates the two investigated transmission channels for horizon zero. The greyed-out edges and nodes are not involved in the transmission channels. Contemporaneous effects are defined as the effect of the monetary policy shock going through a contemporaneous adjustment of the federal funds rate ($\text{ffr})$. This corresponds to all paths going through $\text{ffr}$ contemporaneously. These edges are visualised in blue and the effect along all the paths made up of the blue edges corresponds to the blue bar in Figure \ref{fig:instrument-comparison}. The non-contemporaneous effect, on the other hand, corresponds to all paths not going through $\text{ffr}$ contemporaneously. These edges are visualised in yellow and the effect through all paths using those edges corresponds to the yellow bar in Figure \ref{fig:instrument-comparison}. 

\begin{figure}
    \centering
    \begin{subfigure}{0.3\textwidth}
    \centering
    \begin{tikzpicture}[scale=0.6, transform shape]

        \node[circle, draw, line width=1pt, minimum size=1.22cm] (ffr) at (0, 6) {ffr};
        \node[circle, draw, line width=1pt, minimum size=1.22cm] (ygap) at (0, 4) {ygap};
        \node[circle, draw, line width=1pt, minimum size=1.22cm] (infl) at (0, 2) {infl};
        \node[circle, draw, line width=1pt, minimum size=1.22cm] (com) at (0, 0) {com};

        \draw[-latex, line width=1pt, mycolor1] (ffr) -- (ygap);
        \draw[-latex, line width=1pt, mycolor1] (ygap) -- (infl);
        \draw[-latex, line width=1pt, mycolor1] (infl) -- (com);
        \draw[-latex, line width=1pt, mycolor1] (ffr) edge [in=45, out=-45] (infl);
        \draw[-latex, line width=1pt, mycolor1] (ffr) edge [in=45, out=-40] (com);
        \draw[-latex, line width=1pt, mycolor1] (ygap) edge [in=60, out=-45] (com);

        \node[rectangle, draw=blue!50, fill=blue!20, line width=0.5pt, minimum size=0.3cm] (e) at (-2.0, 6) {$\varepsilon^i$};
        
        \draw[-latex, line width=1pt, mycolor1] (e) -- (ffr); 
        \draw[-latex, line width=1pt, black!10] (e) -- (ygap); 
        \draw[-latex, line width=1pt, black!10] (e) -- (infl); 
        \draw[-latex, line width=1pt, black!10] (e) -- (com); 
        
    \end{tikzpicture}
    \caption*{Contemporaneous}
    \end{subfigure}%
    \begin{subfigure}{0.3\textwidth}
    \centering
    \begin{tikzpicture}[scale=0.6, transform shape]

        \node[circle, draw, line width=1pt, minimum size=1.22cm, black!20] (ffr) at (0, 6) {\textcolor{black!20}{ffr}};
        \node[circle, draw, line width=1pt, minimum size=1.22cm] (ygap) at (0, 4) {ygap};
        \node[circle, draw, line width=1pt, minimum size=1.22cm] (infl) at (0, 2) {infl};
        \node[circle, draw, line width=1pt, minimum size=1.22cm] (com) at (0, 0) {com};

        \draw[-latex, line width=1pt, black!10] (ffr) -- (ygap);
        \draw[-latex, line width=1pt, mycolor2] (ygap) -- (infl);
        \draw[-latex, line width=1pt, mycolor2] (infl) -- (com);
        \draw[-latex, line width=1pt, black!10] (ffr) edge [in=45, out=-45] (infl);
        \draw[-latex, line width=1pt, black!10] (ffr) edge [in=45, out=-40] (com);
        \draw[-latex, line width=1pt, mycolor2] (ygap) edge [in=60, out=-45] (com);

        \node[rectangle, draw=blue!50, fill=blue!20, line width=0.5pt, minimum size=0.3cm] (e) at (-2.0, 6) {$\varepsilon^i$};
        
        \draw[-latex, line width=1pt, black!10] (e) -- (ffr); 
        \draw[-latex, line width=1pt, mycolor2] (e) -- (ygap); 
        \draw[-latex, line width=1pt, mycolor2] (e) -- (infl); 
        \draw[-latex, line width=1pt, mycolor2] (e) -- (com); 
        
    \end{tikzpicture}
    \caption*{Non-Contemporaneous}
    \end{subfigure}
    \caption{Illustration of the contemporaneous and non-contemporaneous transmission channels at horizon zero for the application in Section \ref{sec:sec5-mckay-wolf}. $\text{ffr}$ is the federal funds rate, $\text{ygap}$ is the output gap, $\text{infl}$ is price inflation, and $\text{com}$ is a commodity price index.}
    \label{fig:mckay-wolf-illustration}
\end{figure}

%-------------------------------------------------------------------------------
% ANTICIPATION EFFECTS OF GOVERNMENT SPENDING
%-------------------------------------------------------------------------------

\subsection{Anticipation Effects of Government Spending
}
\label{appendix:empirical-details-ramey}

Section \ref{sec:sec5-ramey}
investigates the role of anticipation in the transmission of government defense spending news shocks. Table \ref{tab:data-ramey} presents all variables used, their description, and their sources. To stay as close as possible to the original analysis of \citet{rameyGovernmentSpendingMultipliers2018}, all except government defense spending was directly taken from their replication material. A discussion about the construction of quarterly government defense spending is deferred to the next section. We focus throughout on the period 1890Q1 to 2015Q1.

\begin{table}
\centering
\caption{Data description for all variables used in Section \ref{sec:sec5-ramey}}
\begin{tabular}{L{0.2\textwidth} L{0.4\textwidth} L{0.4\textwidth}}
\toprule
\textbf{Variable} & \textbf{Description} & \textbf{Details/Source} \\
\midrule
Government Defense News Shocks & Government defense spending news shocks as a percent of real potential GDP & $\text{news}$ series divided by $\text{pgdp}$ and $\text{rgdp\_pott6}$ series of \citet{rameyGovernmentSpendingMultipliers2018} replication material, where $\text{pdgp}$ is the GDP deflator and $\text{rgdp\_pott6}$ is a measure of real potential GDP based on a sixth-order polynomial\\
Government Defense Spending & Real government defense spending as percent of real potential GDP & See Appendix \ref{appendix:construction-historical-defense-spending} \\
Government Spending & Real government spending as a percent of real potential GDP & $\text{ngov}$ series divided by $\text{pgdp}$ and $\text{rgdp\_pott6}$ series of \citet{rameyGovernmentSpendingMultipliers2018} replication material, where $\text{pdgp}$ is the GDP deflator and $\text{rgdp\_pott6}$ is a measure of real potential GDP based on a sixth-order polynomial \\
GDP & Real GDP as a percent of real potential GDP & $\text{rgdp}$ series divided by $\text{rgdp\_pott6}$ series of \citet{rameyGovernmentSpendingMultipliers2018} replication material, where $\text{rgdp\_pott6}$ is a measure of real potential GDP based on a sixth-order polynomial\\
\bottomrule
\end{tabular}
\label{tab:data-ramey}
\end{table}

We define an anticipation channel as the effect of a news shock not going through government defense spending ($\text{gdef}$). Contemporaneously, this is visualised in Figure \ref{fig:ramey-illustration} by the blue edges. All greyed-out edges and nodes are not involved in the transmission channels. The effect along all paths involving only the blue edges -- the anticipation channel -- corresponds to the blue bar in Figure \ref{fig:sec5-ramey}. Similarly, we define the implementation channel as the effect going through a change in government defense spending, visualised by the yellow edges. The effect going through all paths using only the yellow edges -- the implementation channel -- corresponds to the yellow bar in Figure \ref{fig:sec5-ramey}. As can be seen by superimposing one graph over the other, the sets of edges are non-overlapping and make up all edges in the graph. The two transmission channels therefore perfectly decompose the total effect. 

\begin{figure}[H]
    \centering
    \begin{subfigure}{0.3\textwidth}
    \centering
    \begin{tikzpicture}[scale=0.6, transform shape]

        \node[circle, draw, line width=1pt, minimum size=1.22cm] (news) at (0, 6) {news};
        \node[circle, draw, line width=1pt, minimum size=1.22cm, black!20] (gdef) at (0, 4) {\textcolor{black!20}{gdef}};
        \node[circle, draw, line width=1pt, minimum size=1.22cm] (gov) at (0, 2) {gov};
        \node[circle, draw, line width=1pt, minimum size=1.22cm] (gdp) at (0, 0) {gdp};

        \draw[-latex, line width=1pt, black!10] (news) -- (gdef);
        \draw[-latex, line width=1pt, black!10] (gdef) -- (gov);
        \draw[-latex, line width=1pt, mycolor1] (gov) -- (gdp);
        \draw[-latex, line width=1pt, mycolor1] (news) edge [in=45, out=-45] (gov);
        \draw[-latex, line width=1pt, mycolor1] (news) edge [in=45, out=-40] (gdp);
        \draw[-latex, line width=1pt, black!10] (gdef) edge [in=60, out=-45] (gdp);

        \node[rectangle, draw=blue!50, fill=blue!20, line width=0.5pt, minimum size=0.3cm] (e) at (-2.0, 6) {$\varepsilon^g$};
        
        \draw[-latex, line width=1pt, mycolor1] (e) -- (news); 
        \draw[-latex, line width=1pt, black!10] (e) -- (gdef); 
        \draw[-latex, line width=1pt, mycolor1] (e) -- (gov); 
        \draw[-latex, line width=1pt, mycolor1] (e) -- (gdp); 
        
    \end{tikzpicture}
    \caption*{Anticipation}
    \end{subfigure}%
    \begin{subfigure}{0.3\textwidth}
    \centering
    \begin{tikzpicture}[scale=0.6, transform shape]

        \node[circle, draw, line width=1pt, minimum size=1.22cm] (news) at (0, 6) {news};
        \node[circle, draw, line width=1pt, minimum size=1.22cm] (gdef) at (0, 4) {gdef};
        \node[circle, draw, line width=1pt, minimum size=1.22cm] (gov) at (0, 2) {gov};
        \node[circle, draw, line width=1pt, minimum size=1.22cm] (gdp) at (0, 0) {gdp};

        \draw[-latex, line width=1pt, mycolor2] (news) -- (gdef);
        \draw[-latex, line width=1pt, mycolor2] (gdef) -- (gov);
        \draw[-latex, line width=1pt, mycolor2] (gov) -- (gdp);
        \draw[-latex, line width=1pt, black!10] (news) edge [in=45, out=-45] (gov);
        \draw[-latex, line width=1pt, black!10] (news) edge [in=45, out=-40] (gdp);
        \draw[-latex, line width=1pt, mycolor2] (gdef) edge [in=60, out=-45] (gdp);

        \node[rectangle, draw=blue!50, fill=blue!20, line width=0.5pt, minimum size=0.3cm] (e) at (-2.0, 6) {$\varepsilon^g$};
        
        \draw[-latex, line width=1pt, mycolor2] (e) -- (news); 
        \draw[-latex, line width=1pt, mycolor2] (e) -- (gdef); 
        \draw[-latex, line width=1pt, black!10] (e) -- (gov); 
        \draw[-latex, line width=1pt, black!10] (e) -- (gdp); 
        
    \end{tikzpicture}
    \caption*{Implementation}
    \end{subfigure}%
    \caption{Illustration of the anticipation and implementation transmission channels at horizon zero for the application in Section \ref{sec:sec5-ramey}. $\text{news}$ represents the \citet{rameyDefenseNewsShocks2016} defense news series, $\text{gdef}$ is government defense spending, $\text{gov}$ is total government spending, and $\text{gdp}$ is real GDP.}
    \label{fig:ramey-illustration}
\end{figure}

\subsubsection{Construction of Historical Government Defense Spending}
\label{appendix:construction-historical-defense-spending}

In Section \ref{sec:sec5-ramey} we define the anticipation channel as the effect of a news shock not going through government defense spending up to horizon $H$. Thus, to estimate such a channel, observed government defense spending is required. Official FRED data (series A997RC1A027NBEA) is only available from 1947Q1 onwards; however all remaining data is available from 1890Q1 onwards. Thus, to fully exploit the long sample, the FRED series needs to be extended back in time. In this section we explain our methodology. 

To extend the official FRED series back in time, we rely on OWID\footnote{Our World in Data (based on COW \& SIPRI 2018) – processed by Our World in Data, \url{https://ourworldindata.org/grapher/military-expenditure-as-a-share-of-gdp-long?time=earliest..2016} accessed on February 9, 2024 at 14:30}
data on ``Military expenditure as a share of GDP" going back to 1827. However, OWID data is at annual frequency, implying two problems. First, the series is as a share of GDP and thus must first be converted into levels. Second, all remaining data is at quarterly frequency, implying that the military spending series needs to be interpolated to quarterly frequency. 

To transform the OWID series into levels, we aggregate the $\text{ngdp}$ series of \citet{rameyGovernmentSpendingMultipliers2018} to yearly frequency and multiply this yearly nominal GDP measure with the OWID series, resulting in an approximate series of military spending in levels; however, still at annual frequency. The interpolation of this annual series to a quarterly series is then split into two parts. First, we calculate the share of annual total government spending falling into each quarter of a year (we rely on the series $\text{ngov}$ of \cite{rameyGovernmentSpendingMultipliers2018}). Second, we multiply the quarterly shares by the yearly military spending series to obtain an interpolated quarterly military spending series. Under the assumption that the total government spending shares are approximately equal to the true shares of military spending, this interpolation method should approximate quarterly military spending well. 

We check the quality of the interpolated series by comparing it to the official FRED data from 1947Q1 onwards. In levels, the two series have a correlation coefficient of 0.99 and the quarterly shares calculated using FRED data have a 0.82 correlation coefficient with the shares used to interpolate the military spending series. However, in differences, the FRED series and the interpolated series have a correlation coefficient of 0.4. Thus, some mismatch might exist. Since we only aim to demonstrate the proposed methodology, we leave robustness checks with respect to changes in the construction of the quarterly series for future research. 

%-------------------------------------------------------------------------------
% THE ROLE OF WAGES IN MONETARY DSGE
%-------------------------------------------------------------------------------

\subsection{The Role of Wages in \mbox{DSGEs}}
\label{appendix:empirical-details-dsge}

Section \ref{sec:sec5-dsge} applies TCA to the \citet{smetsShocksFrictionsUS2007} DSGE model, including interest rates ($r$), inflation ($\pi$), wages ($w$), and output ($y$), consumption ($c$), investments ($i$) and labour hours ($l$) all summarised by $x=(y,c,i,l)$. We investigate the role that wages play in the transmission of a contractionary monetary policy shock ($\varepsilon^i$) to inflation. We split the total effect into a wage channel and a demand channel. However, as discussed in Section \ref{sec:sec5-dsge} there exist at least two plausible ways to define the wage channel. 

A first definition is visualised in Figure \ref{fig:smets-wouters-illustratiokn}. Greyed-out edges and nodes are not involved in a transmission channel. In Figure \ref{fig:smets-wouters-illustratiokn}, the wage channel is defined as the effect going through wages in at least one period, visualised as red paths. The demand channel is defined as the complement and thus as the effect not going through wages in any period, visualised as blue paths. Thus, the effects through the wage and demand channel perfectly decompose the total effect. The colours of the edges correspond to the same colours as in Figure \ref{fig:sec5-dsge-decomposition}. Since wages are ordered second, wages can feed into demand contemporaneously; we refer to this as first-round effects.

\begin{figure}
    \centering
    \begin{subfigure}{0.3\textwidth}
    \centering
    \begin{tikzpicture}[scale=0.6, transform shape]

        \node[circle, draw, line width=1pt, minimum size=1.1cm] (r) at (0, 6) {$r$};
        \node[circle, draw, line width=1pt, minimum size=1.1cm] (w) at (0, 4) {$w$};
        \node[circle, draw, line width=1pt, minimum size=1.1cm] (y) at (0, 2) {$x$};
        \node[circle, draw, line width=1pt, minimum size=1.1cm] (p) at (0, 0) {$\pi$};

        \draw[-latex, line width=1pt, mycolor6] (r) -- (w);
        \draw[-latex, line width=1pt, mycolor6] (w) -- (y);
        \draw[-latex, line width=1pt, mycolor6] (y) -- (p);
        \draw[-latex, line width=1pt, black!10] (r) edge [in=45, out=-45] (y);
        \draw[-latex, line width=1pt, black!10] (r) edge [in=45, out=-40] (p);
        \draw[-latex, line width=1pt, mycolor6] (w) edge [in=60, out=-45] (p);

        \node[rectangle, draw=blue!50, fill=blue!20, line width=0.5pt, minimum size=0.3cm] (e) at (-2.0, 6) {$\varepsilon^i$};
        
        \draw[-latex, line width=1pt, mycolor6] (e) -- (r); 
        \draw[-latex, line width=1pt, mycolor6] (e) -- (w); 
        \draw[-latex, line width=1pt, black!10] (e) -- (y); 
        \draw[-latex, line width=1pt, black!10] (e) -- (p); 
        
    \end{tikzpicture}
    \caption*{Wage Channel}
    \end{subfigure}%
    \begin{subfigure}{0.3\textwidth}
    \centering
    \begin{tikzpicture}[scale=0.6, transform shape]

        \node[circle, draw, line width=1pt, minimum size=1.1cm] (r) at (0, 6) {$r$};
        \node[circle, draw, line width=1pt, minimum size=1.1cm, black!20] (w) at (0, 4) {\textcolor{black!20}{$w$}};
        \node[circle, draw, line width=1pt, minimum size=1.1cm] (y) at (0, 2) {$x$};
        \node[circle, draw, line width=1pt, minimum size=1.1cm] (p) at (0, 0) {$\pi$};

        \draw[-latex, line width=1pt, black!10] (r) -- (w);
        \draw[-latex, line width=1pt, black!10] (w) -- (y);
        \draw[-latex, line width=1pt, mycolor1] (y) -- (p);
        \draw[-latex, line width=1pt, mycolor1] (r) edge [in=45, out=-40] (p);
        \draw[-latex, line width=1pt, black!10] (w) edge [in=60, out=-45] (p);
        \draw[-latex, line width=1pt, mycolor1] (r) edge [in=45, out=-45] (y);

        \node[rectangle, draw=blue!50, fill=blue!20, line width=0.5pt, minimum size=0.3cm] (e) at (-2.0, 6) {$\varepsilon^i$};
        
        \draw[-latex, line width=1pt, mycolor1] (e) -- (r); 
        \draw[-latex, line width=1pt, black!10] (e) -- (w); 
        \draw[-latex, line width=1pt, mycolor1] (e) -- (y); 
        \draw[-latex, line width=1pt, mycolor1] (e) -- (p); 
        
    \end{tikzpicture}
    \caption*{Demand Channel}
    \end{subfigure}%
    \hfill
    \caption{Illustration of the wage and demand channels at horizon zero for the application in Section \ref{sec:sec5-dsge}. $r$ represents nominal interest rates, $w$ wages, $x=(y, c, i, l)$ summarises, for brevity, the vector of GDP ($y$), consumption ($c$), investment ($i$) and labour hours ($l$), and $\pi$ represents inflation.}
    \label{fig:smets-wouters-illustratiokn}
\end{figure}

While the previous ordering allows for wages to contemporaneously feed into demand, the model suggests that wages only play a second-round role, with demand feeding into wage setting and wages then into prices. To capture these second-round effects, we also analyse an alternative ordering of the variables -- an alternative transmission matrix -- displayed in Figure \ref{fig:smets-wouters-illustration-alternative}. The wage and the demand channel are still defined in the same way and represented using red and blue paths respectively. However, since wages are ordered second to last, the paths corresponding to the channels are different from the paths in Figure \ref{fig:smets-wouters-illustratiokn}. 

\begin{figure}
    \centering
    \begin{subfigure}{0.3\textwidth}
    \centering
    \begin{tikzpicture}[scale=0.6, transform shape]

        \node[circle, draw, line width=1pt, minimum size=1.1cm] (r) at (0, 6) {$r$};
        \node[circle, draw, line width=1pt, minimum size=1.1cm] (w) at (0, 4) {$x$};
        \node[circle, draw, line width=1pt, minimum size=1.1cm] (y) at (0, 2) {$w$};
        \node[circle, draw, line width=1pt, minimum size=1.1cm] (p) at (0, 0) {$\pi$};

        \draw[-latex, line width=1pt, black!10] (w) edge [in=60, out=-45] (p);
        \draw[-latex, line width=1pt, mycolor6] (r) -- (w);
        \draw[-latex, line width=1pt, mycolor6] (w) -- (y);
        \draw[-latex, line width=1pt, mycolor6] (y) -- (p);
        \draw[-latex, line width=1pt, mycolor6] (r) edge [in=45, out=-45] (y);
        \draw[-latex, line width=1pt, black!10] (r) edge [in=45, out=-40] (p);

        \node[rectangle, draw=blue!50, fill=blue!20, line width=0.5pt, minimum size=0.3cm] (e) at (-2.0, 6) {$\varepsilon^i$};
        
        \draw[-latex, line width=1pt, mycolor6] (e) -- (r); 
        \draw[-latex, line width=1pt, mycolor6] (e) -- (w); 
        \draw[-latex, line width=1pt, mycolor6] (e) -- (y); 
        \draw[-latex, line width=1pt, black!10] (e) -- (p); 
        
    \end{tikzpicture}
    \caption*{Wage Channel}
    \end{subfigure}%
    \begin{subfigure}{0.3\textwidth}
    \centering
    \begin{tikzpicture}[scale=0.6, transform shape]

        \node[circle, draw, line width=1pt, minimum size=1.1cm] (r) at (0, 6) {$r$};
        \node[circle, draw, line width=1pt, minimum size=1.1cm] (w) at (0, 4) {$x$};
        \node[circle, draw, line width=1pt, minimum size=1.1cm, black!20] (y) at (0, 2) {\textcolor{black!20}{$w$}};
        \node[circle, draw, line width=1pt, minimum size=1.1cm] (p) at (0, 0) {$\pi$};

        \draw[-latex, line width=1pt, mycolor1] (r) -- (w);
        \draw[-latex, line width=1pt, black!10] (w) -- (y);
        \draw[-latex, line width=1pt, black!10] (y) -- (p);
        \draw[-latex, line width=1pt, black!10] (r) edge [in=45, out=-45] (y);
        \draw[-latex, line width=1pt, mycolor1] (r) edge [in=45, out=-40] (p);
        \draw[-latex, line width=1pt, mycolor1] (w) edge [in=60, out=-45] (p);

        \node[rectangle, draw=blue!50, fill=blue!20, line width=0.5pt, minimum size=0.3cm] (e) at (-2.0, 6) {$\varepsilon^i$};
        
        \draw[-latex, line width=1pt, mycolor1] (e) -- (r); 
        \draw[-latex, line width=1pt, mycolor1] (e) -- (w); 
        \draw[-latex, line width=1pt, black!10] (e) -- (y); 
        \draw[-latex, line width=1pt, mycolor1] (e) -- (p); 
        
    \end{tikzpicture}
    \caption*{Demand Channel}
    \end{subfigure}%
    \hfill
    \caption{Illustration of the wage and demand channels at horizon zero for the application in Section \ref{sec:sec5-dsge}. $r$ represents nominal interest rates, $w$ wages, $x=(y, c, i, l)$ summarises, for brevity, the vector of GDP ($y$), consumption ($c$), investment ($i$) and labour hours ($l$), and $\pi$ represents inflation.}
    \label{fig:smets-wouters-illustration-alternative}
\end{figure}

\end{appendices}
\end{document}